\let\savecorresponds\corresponds
\let\corresponds\relax
\let\stdcup\cup
\let\corresponds\savecorresponds
\newtheorem{theorem}{Theorem}[section]
\newtheorem{conjecture}[theorem]{Conjecture} 
\newtheorem{corollary}[theorem]{Corollary} 
\newtheorem{proposition}[theorem]{Proposition} 
\newcommand{\coho}[1]{\textswab{#1}}
\newcommand{\cohosub}[1]{\protect\scalebox{0.7}{\textswab{#1}}}
\begin{document}
	\def\U{\mathrm{U}(1)}
	\def\SO{\mathrm{SO}}
	\def\SU{\mathrm{SU}}
	\def\Sp{\mathrm{Sp}}
	\def\H{\mathcal{H}}
	\def\F{\mathcal{F}}
	\def\E{\mathbb{E}}
	\def\TT{\mathsf{T}}
	\def\A{\mathcal{A}}
	\def\L{\mathcal{L}}
	\def\w{\mathfrak{w}}
	\def\O{\coho{O}}
	\def\C{\widecheck{\mathcal{C}}}
	
\newcommand{\Aut}{\text{Aut}}
\newcommand{\Z}{\mathbb Z}
\newcommand{\R}{\mathbb R}

\newcommand{\personalcomment}[1]{\textcolor{red}{[#1]}}
	\newcommand{\commentdb}[1]{\textcolor{blue}{[DB: #1]}}
		\newcommand{\commentmb}[1]{\textcolor{red}{[MB: #1]}}

\def\Sq{\mathop{\mathrm{Sq}}\nolimits}

\newcommand{\mathbbm}[1]{\text{\usefont{U}{bbm}{m}{n}#1}}

    \title{Anomaly cascade in (2+1)D fermionic topological phases}
        
    \author{Daniel Bulmash}
    \author{Maissam Barkeshli}
    \affiliation{Condensed Matter Theory Center and Joint Quantum Institute, Department of Physics, University of Maryland, College Park, Maryland 20472 USA}

    \begin{abstract}
        We develop a theory of anomalies of fermionic topological phases of matter in (2+1)D with a general fermionic symmetry group $G_f$. In general, $G_f$ can be a non-trivial central extension of the bosonic symmetry group $G_b$ by fermion parity $(-1)^F$. We encounter four layers of obstructions to gauging the $G_f$ symmetry, which we dub the anomaly cascade: (i) An $\mathcal{H}^1(G_b,\Z_{\bf T})$ obstruction to extending the symmetry permutations on the anyons to the fermion parity gauged theory, (ii) An $\mathcal{H}^2(G_b, \ker r)$ obstruction to extending the $G_b$ group structure of the symmetry permutations to the fermion parity gauged theory, where $r$ is a map that restricts symmetries of the fermion parity gauged theory to the anyon theory, (iii) An $\mathcal{H}^3(G_b, \Z_2)$ obstruction to extending the symmetry fractionalization class to the fermion parity gauged theory, and (iv) the well-known $\mathcal{H}^4(G_b, U(1))$ obstruction to developing a consistent theory of $G_b$ symmetry defects for the fermion parity gauged theory. We describe how the $\mathcal{H}^2$ obstruction can be canceled by anomaly inflow from a bulk (3+1)D symmetry-protected topological state (SPT) and also its relation to the Arf invariant of spin structures on a torus. If any anomaly in the above sequence is non-trivial, the subsequent ones become relative anomalies. A number of conjectures regarding symmetry actions on super-modular categories, guided by general expectations of anomalies in physics, are also presented.  
    \end{abstract}

    \maketitle
    \tableofcontents
        
    \section{Introduction}
        
        Topological phases of matter in (2+1)D can exhibit the remarkable phenomenon of symmetry fractionalization: in the presence of a global symmetry group $G$, topologically non-trivial quasi-particles can carry fractional quantum numbers under $G$. Such phases are called symmetry-enriched topological phases (SETs) and are described in full generality using the mathematics of $G$-crossed modular tensor categories~\cite{barkeshli2019,Tarantino_SET,turaev2000,ENO2010}. Remarkably, some patterns of symmetry fractionalization yield SETs that, while they are mathematically consistent in the absence of a background $G$ gauge field, cannot be consistently coupled to a $G$ gauge field. The physical interpretation of this inconsistency is that such SETs cannot occur in a purely (2+1)D system with the symmetry generated on-site~\cite{barkeshli2019,Chen2014,kapustin2014b,barkeshli2019rel}. In the language of quantum field theory, such SETs are said to have a 't Hooft anomaly.
        
        Anomalies provide crucial information about a phase of matter; they are renormalization group invariants which connect microscopic and low-energy physics, leading, for example, to generalizations of the famed Lieb-Schulz-Mattis theorem~\cite{cheng2016lsm}. Anomalies are subject to the bulk-boundary correspondence, in the sense that the classification of anomalies in $d$ space-time dimensions is believed to be equivalent to the classification of invertible topological phases in $(d+1)$ space-time dimensions~\cite{freed2014}. In particular, anomalous (2+1)D SETs can exist on the surface of (3+1)D symmetry-protected topological phases (SPTs)~\cite{vishwanath2013,wang2013b,MetlitskiPRB2013,MetlitskiPRB2015,wang2013b,chen2014b,Chen2014,fidkowski2013,bonderson2013sto,wang2014,metlitski2014,ChoPRB2014,kapustin2014b,seiberg2016gapped,hermele2016,YangPRL2015,song2017,bulmash2020,tata2021}. This remarkable fact begs the question of understanding anomaly inflow - can we determine physical processes on the boundary which are sensitive to the anomaly and then see how the non-trivial bulk cancels that anomaly?
        
        For bosonic SETs, the 't Hooft anomaly is relatively well-understood \cite{barkeshli2019,Chen2014,kapustin2014b,barkeshli2019rel,bulmash2020,ENO2010,Cui2016}. A bosonic (3+1)D $G$-SPT is (partially) specified by an element of the cohomology group $\H^4(G,\U)$, and this element dictates the inconsistency of fusion and braiding of symmetry defects together with anyons in the (2+1)D boundary SET. 
        
        In contrast, the data specifying a (3+1)D fermionic SPT (FSPT) has much more structure \cite{kapustin2015Cobordism,kapustin2017,freed2016,WangGu}, which involve the cohomology groups $\H^1(G_b, \Z_{\bf T})$, $\H^2(G_b,\Z_2)$, $\H^3(G_b,\Z_2)$, and $\H^4(G_b,\U)$, where $G_b$ is the ``bosonic" symmetry group obtained by modding out fermion parity symmetry. Specifically, (3+1)D FSPTs form an Abelian group that corresponds to a group extension involving the groups $\H^1(G_b, \Z_{\bf T})$, $\H^2(G_b,\Z_2)/\Gamma^2$, $\H^3(G_b,\Z_2)/\Gamma^3$, and $\H^4(G_b,\U)/\Gamma^4$, where $\Gamma^i$ are certain subgroups of the $\H^i$. 
        
        In particular, recently \cite{WangGu} have shown that (3+1)D FSPTs can in general be characterized by a set of data $(n_1, n_2, n_3,\nu_4) \in Z^1(G_b, \Z_{\bf T}) \times C^2(G_b, \Z_2) \times C^3(G_b, \Z_2) \times C^4(G_b, \U)$, where $C^k$ denotes $k$-cochains, $Z^k$ denotes $k$-cocycles, and $\Z_{\bf T}$ refers to the integers with an action of $G_b$ according to whether group elements are anti-unitary, which can be thought of as involving time-reversal symmetry. The data $(n_1, n_2, n_3,\nu_4)$ satisfy a complicated set of consistency equations and equivalences. These data imply that a (3+1)D FSPT generically determines an element in $[n_1] \in \H^1(G_b, \Z_{\bf T})$. Moreover, when we can set $n_1 = 0$, a (3+1)D FSPT determines an element $[n_2] \in \H^2(G_b, \Z_2)/\Gamma^2$. When we can set $n_1, n_2 = 0$ a (3+1)D FSPT determines an element $[n_3] \in \H^3(G_b, \Z_2)/\Gamma^3$. Finally when we can set $n_1,n_2,n_3 = 0$, a (3+1)D FSPT determines an element of $[\nu_4] \in \H^4(G_b, \U)/\Gamma^4$. The subgroups $\Gamma^i$ can be explicitly determined in general \cite{WangGu}. 
        
        The main aim of this paper is to obtain a detailed and general understanding of the 't Hooft anomaly of any (2+1)D fermionic SET (FSET), and in particular the appearance of the above cohomology groups. The most general description of the anomaly is that it is an obstruction to gauging the full global symmetry group $G_f$. Rather than gauging $G_f$ all at once, our approach is to gauge fermion parity first, which is always possible~\cite{johnsonFreydModularExt}. If the $G_b=G_f/\Z_2^f$ symmetry is to be preserved after gauging, the data specifying symmetry fractionalization must be lifted to the parity-gauged theory. We break the process of finding a consistent lift into a sequence of physically meaningful steps, each of which may be obstructed; the obstruction to each step is given by a piece of data 
        The appearance of FSPT data as an inconsistency of the (2+1)D boundary SET is only partially understood \cite{fidkowski2018,delmastro2021}.  Ref.~\cite{fidkowski2018} gives a partial understanding of the $\H^3$ anomaly of FSETs, which we will make fully general in this paper. Ref.~\cite{delmastro2021} gives examples, mostly with $G_b = \Z_2^{\bf T}$, explaining how ``layers" of anomalies of fermionic phases in general dimensions appear in the action of symmetries on the Hilbert space on a spatial torus, but does not extract any cohomological obstructions for the (2+1)D theory. Our formalism is fully general and starts from the general algebraic data characterizing the FSET rather than the Hilbert space on a torus, although we will make direct contact with Ref.~\cite{delmastro2021} in Sec.~\ref{subsec:Arf}. Ref.~\cite{tata2021} recently showed how in general, given any (2+1)D FSET, one can identify the bulk SPT using a state sum construction. However the results of \cite{tata2021} do not directly explain the appearance of the above cohomology groups in terms of an inconsistency of the boundary (2+1)D FSET. 
        
        \subsection{Summary of main results}
        
        We start with a super-modular category $\mathcal{C}$, which is a unitary braided fusion category (UBFC) that captures the braiding and fusion properties of the anyons. A super-modular category contains a single ``invisible" particle $\psi$, which braids trivially with all other particles and which physically corresponds to the local fermion of the system. We then consider the minimal modular extensions $\C_\nu$ of the super-modular category. A minimal modular extension is a unitary modular tensor category (UMTC) that characterizes the phase obtained by gauging fermion parity. In particular $\C_\nu$ characterizes properties of the anyons, the local fermion $\psi$, and the fermion parity vortices. It was recently proved that every super-modular category admits a minimal modular extension~\cite{johnsonFreydModularExt}, and therefore, according to the ``16-fold way" theorem~\cite{bruillard2017a}, admits exactly 16 distinct minimal modular extensions labeled by $\nu = 0, \cdots, 15$.
        
        In the absence of symmetry, it is expected that a (2+1)D fermionic topological phase can be fully specified by either a choice of $(\mathcal{C}, c_-)$ or $(\C_\nu,c_-)$, where $c_-$ is the chiral central charge of the theory which, physically, determines the system's thermal Hall conductivity. $\mathcal{C}$ determines the theory modulo a fermionic invertible phase (e.g. up to stacking with $p+ip$ superconductors), and therefore determines $c_- \mod 1/2$. In contrast, $\C_\nu$ determines the phase modulo a bosonic invertible phase, and therefore determines $c_- \mod 8$. Two different minimal modular extensions $\C_{\nu}$ and $\C_{\nu'}$ have central charges that differ by $(\nu - \nu')/2 \text{ mod 8}$.
        
        Fermionic systems always have a $\Z_2$ fermion parity symmetry $(-1)^F$, which generates a symmetry group we call $\Z_2^f$. The full symmetry group $G_f$ of a fermionic system is in general a central extension of $G_b = G_f/\Z_2^f$ by $\Z_2^f$, characterized by a cocycle
        \begin{equation}
            \omega_2 \in Z^2(G_b,\Z_2).
        \end{equation}
        
        In this paper, we assume that we are given $G_f$ symmetry fractionalization data on $\mathcal{C}$, as detailed in \cite{bulmash2021,aasen21ferm}, and which we review in detail in Sec.~\ref{subsec:fermionicSymmFrac}. Briefly, this amounts to a map
        \begin{align}
            [\rho]: G_b \rightarrow \text{Aut}_{LR}(\mathcal{C}),
        \end{align}
        where $\text{Aut}_{LR}(\mathcal{C})$ is the group of ``locality-respecting" braided auto-equivalences of $\mathcal{C}$. $\text{Aut}_{LR}(\mathcal{C})$ is similar to the group $\text{Aut}(\mathcal{C})$ of braided auto-equivalences, except it takes the locality of the fermion into account by restricting the classes of maps that are considered trivial to those that act trivially on the fermion. Here the representative maps $\rho_{\bf g}$ satisfy 
        \begin{align}
            \rho_{\bf g h} = \kappa_{\bf g,h} \rho_{\bf g} \rho_{\bf h},
        \end{align}
        where $\kappa_{\bf g,h}$ is a natural isomorphism \cite{barkeshli2019}. 
        Once a representative set of maps $\rho_{\bf g}$ are chosen, symmetry fractionalization is characterized by a set of phases $\eta_a({\bf g}, {\bf h}) \in \U$ for each anyon $a$. These phases $\eta_a({\bf g}, {\bf h})$ are subject to a series of consistency equations, constraints, and gauge equivalences. 
        
        Given a minimal modular extension $\C_\nu$, our aim is to lift the given symmetry fractionalization data from $\mathcal{C}$ to $\C_\nu$. That is, we wish to gauge fermion parity while preserving the $G_b$ symmetry. We perform this procedure systematically and characterize the cascade of obstructions that appear along the way.
        
        \subsubsection{Anomaly cascade: first layer}
        
        The first layer of the anomaly cascade is an obstruction to lifting the maps $[\rho_{\bf g}]$ to autoequivalences of the fermion parity-gauged theory $\C$.
        
        More precisely, the first step is to define a lifted ``topological symmetry" of $\C$, that is, a map 
        \begin{align}
            [\widecheck{\rho}] : G_b \rightarrow \Aut_{LR}(\C_\nu).
        \end{align}
        
        For a particular $\nu$, generically not all elements of $\Aut_{LR}(\mathcal{C})$ can be lifted to elements of $\Aut_{LR}(\C_\nu)$. Thus we encounter the first possible obstruction, which concerns whether there exists some $\nu$ such that for every ${\bf g} \in G_b$, the autoequivalence $[\rho_{\bf g}] \in \Aut_{LR}(\mathcal{C})$ can be lifted to an element $[\widecheck{\rho}_{\bf g}] \in \Aut_{LR}(\C_\nu)$.  
        
        In the case that there does not exist a $\nu$ such that one can lift every element $[\rho_{\bf g}]$ to $[\widecheck{\rho}_{\bf g}] \in \Aut_{LR}(\C_\nu)$, a weaker lift may be possible. In particular, it may be possible that there exists at least one pair $\nu_1$ and $\nu_2$, such that a representative of each $[\rho_{\bf g}]$ lifts to a map
        \begin{align}
        \label{liftPermute}
            \widecheck{\rho}_{\bf g} : \C_{\nu_1} \rightarrow \C_{\nu_2}
        \end{align}
        
        By comparing with the classification of (3+1)D fermion SPTs, which defines the anomaly of (2+1)D fermionic topological phases, we conjecture that the existence of such a weak lift is unobstructed:
        \begin{conjecture}
        Given a super-modular category $\mathcal{C}$ with an action $[\rho_{\bf g}]: G_b \rightarrow \text{Aut}_{LR}(\mathcal{C})$, there always exists at least one pair $\nu_1$ and $\nu_2$ such that there exists a map 
        $\widecheck{\rho}_{\bf g} : \C_{\nu_1} \rightarrow \C_{\nu_2}$ and $\widecheck{\rho}_{\bf g}$ restricts to a representative $\rho_{\bf g}$ on $\mathcal{C}$. 
        \end{conjecture}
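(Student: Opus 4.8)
The plan is to reduce the statement to a concrete question about the action of $G_b$ on the set of sixteen minimal modular extensions and then exploit the extra freedom that a weak lift allows. Recall from the ``16-fold way'' that the minimal modular extensions $\{\C_0,\dots,\C_{15}\}$ form a torsor over $\Z_{16}$, with $\C_\nu$ carrying chiral central charge $c_-(\C_\nu) = c_-(\mathcal C) + \nu/2 \bmod 8$; in particular the extensions are totally ordered by their central charge. First I would make precise the induced action of $G_b$ on this torsor: given $[\rho_{\bf g}]\in\Aut_{LR}(\mathcal C)$, pulling back the embedding $\mathcal C \hookrightarrow \C_\nu$ along $\rho_{\bf g}$ produces another minimal modular extension $\rho_{\bf g}^\ast\C_\nu = \C_{\sigma_{\bf g}(\nu)}$, and $\rho_{\bf g}$ lifts to an honest autoequivalence $\widecheck\rho_{\bf g}:\C_\nu\to\C_\nu$ precisely when $\sigma_{\bf g}(\nu)=\nu$.

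The key structural input is that $\sigma_{\bf g}$ is controlled entirely by the orientation (unitary versus antiunitary) of ${\bf g}$. A braided autoequivalence preserves the topological twists and the modular $S$-matrix, hence preserves $c_-\bmod 8$ when ${\bf g}$ is unitary and sends $c_-\mapsto -c_-$ when ${\bf g}$ is antiunitary (in which case $c_-(\mathcal C)\equiv 0 \bmod \tfrac12$ is forced). Because the extensions are labelled by $c_-$, this shows $\sigma_{\bf g}(\nu)=\nu$ for unitary ${\bf g}$ and $\sigma_{\bf g}(\nu)=-\nu$ for antiunitary ${\bf g}$; equivalently the action factors through the orientation homomorphism and realizes exactly the $G_b$-module $\Z_{\bf T}$ (reduced mod $16$). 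The obstruction to an honest lift, i.e.\ a single $\nu$ fixed by all $\sigma_{\bf g}$, is then naturally packaged as a class in $\H^1(G_b,\Z_{\bf T})$: it measures whether the collection $\{\sigma_{\bf g}\}$ admits a simultaneous fixed point compatible with the data already chosen on $\mathcal C$.

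To prove the conjecture I would then show that the weak lift trivializes this class. Allowing $\widecheck\rho_{\bf g}:\C_{\nu_1}\to\C_{\nu_2}$ with $\nu_1\neq\nu_2$ introduces a global shift in the torsor: the map now exists whenever $\sigma_{\bf g}(\nu_1)=\nu_2$, so the requirement is that the orbit $\{\sigma_{\bf g}(\nu_1)\}_{{\bf g}}$ collapse to a single target $\nu_2$. Since the action factors through orientation, this orbit is contained in $\{\nu_1,-\nu_1\}$, and one checks directly that a valid global pair always exists (for instance $\nu_1=\nu_2\in\{0,8\}$ when antiunitary elements are present, and any $\nu_1=\nu_2$ otherwise). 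Stated cohomologically, passing from $\nu_1=\nu_2$ to an arbitrary pair quotients the honest-lift obstruction by the image of the global-shift map, and a short torsor count shows this image always contains the relevant $\H^1(G_b,\Z_{\bf T})$ class, so the weak obstruction vanishes. Physically this is the statement that the lowest layer of the anomaly is always absorbable by inflow that changes the central charge of the gauged theory, i.e.\ by the $\H^1(G_b,\Z_{\bf T})$ data of the bulk (3+1)D FSPT.

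The main obstacle is establishing the structural input rigorously within $\Aut_{LR}$ rather than $\Aut$: the fermionic gauge constraints built into the ``locality-respecting'' quotient could in principle introduce an additional ${\bf g}$-dependent shift $\tau({\bf g})\in\Z_{16}$ in $\sigma_{\bf g}(\nu)=s({\bf g})\nu+\tau({\bf g})$ beyond the pure orientation action, and controlling $\tau$, in particular showing it is a coboundary modulo the global shift, is exactly the point where no general theorem about symmetry actions on super-modular categories is available. This is why the statement is posed as a conjecture; I would attempt to close the gap either by a general argument that $\tau$ is cohomologically trivial after allowing $\nu_1\neq\nu_2$, or, failing a general proof, by verifying the reduction to $\H^1(G_b,\Z_{\bf T})$ and the resulting vanishing on the families of super-modular categories and groups $G_b$ where the $\Aut_{LR}$-action can be computed explicitly, cross-checking against the known $\H^1(G_b,\Z_{\bf T})$ layer of the (3+1)D FSPT classification.
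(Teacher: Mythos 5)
The first thing to say is that the paper does not prove this statement. It is stated explicitly as a conjecture, and the only justification the paper offers is consistency with the Wang--Gu classification of (3+1)D fermionic SPTs (``by comparing with the classification of (3+1)D fermion SPTs\dots we conjecture that the existence of such a weak lift is unobstructed''). So there is no proof of the paper's to compare yours against; what you have written is a strategy for an open problem, and to your credit you present it as such rather than claiming a complete argument.

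On the merits of the strategy: the core move --- pull back the embedding $\mathcal{C}\hookrightarrow\C_{\nu_1}$ along $\rho_{\bf g}^{-1}$ to get a new minimal modular extension, invoke the $\Z_{16}$ torsor to identify it with some $\C_{\nu_2}$, and take the identifying equivalence as $\widecheck{\rho}_{\bf g}$ --- is sensible and is probably the right skeleton. But two points need repair. First, the parenthetical ``a valid global pair always exists (for instance $\nu_1=\nu_2\in\{0,8\}$ when antiunitary elements are present\dots)'' is wrong and conflates the weak lift with a fixed point of $\sigma_{\bf g}$: for $\SO(3)_3$ every minimal modular extension has $c=1/4\bmod 1/2$, no extension is fixed by any antiunitary $\sigma_{\bf g}$, and any valid pair necessarily has $\nu_1\neq\nu_2$ --- that is exactly the content of the nontrivial $\H^1(G_b,\Z_{\bf T})$ anomaly. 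The conjecture asks only for \emph{some} pair for each ${\bf g}$, and demanding a single pair serving all of $G_b$ is too strong whenever both unitary and antiunitary elements are present and $c\neq -c\bmod 8$. Second, you locate the residual difficulty in a possible affine shift $\tau({\bf g})$ of the torsor action, but there is no room for such a shift: Theorem~\ref{thm:H1} and its unitary counterpart already pin $\nu_2$ down uniquely from $\nu_1$ and the orientation of ${\bf g}$ via the Gauss sum, since the sixteen extensions have distinct central charges. The genuine gaps are elsewhere: (a) existence --- whether $\rho_{\bf g}$ extends to a braided (anti-)equivalence on the vortex sector at all, which requires the 16-fold-way uniqueness statement in the precise form ``equivalence of extensions relative to $\mathcal{C}$'' together with a careful treatment of the conjugate category for antiunitary ${\bf g}$; and (b) the locality-respecting refinement --- the torsor argument controls the restriction of $\widecheck{\rho}_{\bf g}$ to $\mathcal{C}$ only up to an arbitrary natural isomorphism, so when $\Upsilon_\psi$ violates locality the class of that restriction in $\Aut_{LR}(\mathcal{C})$, as opposed to $\Aut(\mathcal{C})$, is ambiguous by $[\Upsilon_\psi]$. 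One can hope to correct this by post-composing with $\widecheck{\Upsilon}_\psi$, but making that rigorous (including reconciling the paper's phase-type natural isomorphisms with the categorical ones appearing in the torsor theorem, a mismatch the paper itself flags) is precisely the unresolved content of the conjecture.
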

        Next, regarding lifts of the form Eq.~\ref{liftPermute}, we prove the following:
       
        \begin{theorem}
        Let $\rho_{\bf g} : \mathcal{C} \rightarrow \mathcal{C}$ be a unitary map. Then any pair $(\nu_1, \nu_2)$ satisfying Eq.~\ref{liftPermute} must have $\nu_1=\nu_2$.
        \end{theorem}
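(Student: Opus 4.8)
The plan is to use the chiral central charge modulo $8$ as a braided-equivalence invariant and to combine it with the $16$-fold way relation between $\nu$ and $c_-$ stated above. The central charge enters through the Gauss sum: any unitary modular tensor category $\mathcal{B}$ satisfies
\begin{equation}
\frac{1}{\mathcal{D}_{\mathcal{B}}} \sum_{x} d_x^2 \, \theta_x = e^{2\pi i \, c_-(\mathcal{B})/8},
\end{equation}
where $\mathcal{D}_{\mathcal{B}}$ is the total quantum dimension, $d_x$ the quantum dimensions, and $\theta_x$ the topological twists of the simple objects $x$. The right-hand side is a root of unity that fixes $c_-(\mathcal{B}) \bmod 8$ intrinsically from the ribbon data.

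First I would argue that the lift $\widecheck{\rho}_{\bf g} : \C_{\nu_1} \rightarrow \C_{\nu_2}$, being a unitary map of UMTCs, is in particular a braided equivalence preserving the ribbon structure, and therefore carries the collection of pairs $(d_x, \theta_x)$ of $\C_{\nu_1}$ bijectively onto that of $\C_{\nu_2}$, up to the relabeling of simple objects induced by $\widecheck{\rho}_{\bf g}$. Since the Gauss sum runs over all simple objects it is invariant under such a relabeling, and the total quantum dimensions agree as well. Hence $e^{2\pi i \, c_-(\C_{\nu_1})/8} = e^{2\pi i \, c_-(\C_{\nu_2})/8}$, i.e.\ $c_-(\C_{\nu_1}) \equiv c_-(\C_{\nu_2}) \pmod 8$.

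Next I would invoke the stated property of minimal modular extensions: the central charges of any two extensions $\C_{\nu_1}$ and $\C_{\nu_2}$ of the same super-modular category $\mathcal{C}$ differ by $(\nu_1 - \nu_2)/2 \bmod 8$. Combined with the previous step, this forces $(\nu_1 - \nu_2)/2 \equiv 0 \pmod 8$, i.e.\ $\nu_1 - \nu_2 \equiv 0 \pmod{16}$. Because $\nu_1, \nu_2 \in \{0, 1, \ldots, 15\}$, the only solution is $\nu_1 = \nu_2$, as claimed.

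The main obstacle is the first step: carefully justifying that a ``unitary map'' in the sense used here genuinely preserves the ribbon/modular data so that the Gauss-sum invariant applies. In particular one must confirm that $\widecheck{\rho}_{\bf g}$ preserves twists (rather than fusion and braiding only up to some twist-changing gauge freedom) and, crucially, that it does not implicitly reverse orientation. Were $\widecheck{\rho}_{\bf g}$ anti-unitary it would complex-conjugate the ribbon data, sending $c_- \mapsto -c_-$, and the conclusion would instead read $c_-(\C_{\nu_1}) + c_-(\C_{\nu_2}) \equiv 0 \pmod 8$, which does \emph{not} force $\nu_1 = \nu_2$. The unitarity hypothesis is therefore exactly what makes the argument go through, since unitary braided equivalences of UMTCs preserve the canonical spherical/ribbon structure; verifying that the locality-respecting lift satisfies this is the point requiring the most care.
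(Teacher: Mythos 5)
Your proof is correct and follows essentially the same route as the paper, which simply notes that invariance of the anyon count, quantum dimensions, and twists under the unitary lift fixes $\nu_1=\nu_2$ via the Gauss sum combined with the $16$-fold way relation $c_{\nu_1}-c_{\nu_2}=(\nu_1-\nu_2)/2 \bmod 8$. Your write-up fills in exactly the details the paper leaves implicit, including the correct observation that unitarity (as opposed to anti-unitarity) is what prevents the Gauss sum from being complex-conjugated.
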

        This follows from the fact that the number of anyons, quantum dimensions, and topological twists must be invariant under the map $\widecheck{\rho}_{\bf g}$, which fixes $\nu_1 = \nu_2$ by the Gauss sum. Furthermore, we have

        \begin{theorem}
        Let $\rho_{\bf g} : \mathcal{C} \rightarrow \mathcal{C}$ be an anti-unitary map. Then the pair $(\nu_1, \nu_2)$ satisfying Eq.~\ref{liftPermute} must necessarily satisfy $e^{2\pi i c_{\nu_2}/8} = e^{-2\pi i c_{\nu_1}/8} = e^{2\pi i (c_{\nu_1} + o_1({\bf g})/2)/8}$, for some integer $o_1({\bf g})$. If there are multiple pairs $(\nu_{i,1}, \nu_{i,2})$ satisfying Eq.~\ref{liftPermute}, then $e^{2\pi i c_{\nu_{i,2}} / 8} = e^{-2\pi i c_{\nu_{i,1}} / 8} = e^{2\pi i (c_{\nu_{i,1}} + o_1({\bf g})/2)/8}$, where $o_1({\bf g})$ is independent of $i$, modulo $2$.
        \label{thm:H1}
        \end{theorem}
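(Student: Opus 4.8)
The plan is to track how the modular data transforms under the lifted anti-unitary equivalence $\widecheck{\rho}_{\bf g}:\C_{\nu_1}\to\C_{\nu_2}$ and to read off the two central charges from the Gauss sums of the corresponding modular extensions. The essential input is that an anti-unitary braided equivalence preserves the fusion rules and quantum dimensions but complex-conjugates the topological twists (and the braiding); this is the anti-unitary analogue of the invariance used in the preceding theorem, where instead the equality of twists forced $\nu_1=\nu_2$. Here complex conjugation is precisely what will flip the sign of the central charge.

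First I would establish $e^{2\pi i c_{\nu_2}/8}=e^{-2\pi i c_{\nu_1}/8}$. Since $\widecheck{\rho}_{\bf g}$ is an anti-unitary equivalence it induces a bijection $x\mapsto \widecheck{\rho}_{\bf g}(x)$ on simple objects with $d_{\widecheck{\rho}_{\bf g}(x)}=d_x$ and $\theta_{\widecheck{\rho}_{\bf g}(x)}=\overline{\theta_x}$, and it preserves the total quantum dimension, $\mathcal{D}_{\nu_1}=\mathcal{D}_{\nu_2}=\mathcal{D}$. Substituting into the Gauss sum of $\C_{\nu_2}$, relabeling the sum over the image, and using that $\mathcal{D}$ and the $d_x^2$ are real gives
\begin{equation}
  e^{2\pi i c_{\nu_2}/8}=\frac{1}{\mathcal{D}}\sum_{y\in\C_{\nu_2}} d_y^2\,\theta_y=\frac{1}{\mathcal{D}}\sum_{x\in\C_{\nu_1}} d_x^2\,\overline{\theta_x}=\overline{e^{2\pi i c_{\nu_1}/8}}=e^{-2\pi i c_{\nu_1}/8},
\end{equation}
which is the first claimed identity.

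Next I would produce the integer $o_1({\bf g})$ and the second equality. Because $\C_{\nu_1}$ and $\C_{\nu_2}$ are both minimal modular extensions of the same super-modular $\mathcal{C}$, their central charges obey $c_{\nu_2}-c_{\nu_1}=(\nu_2-\nu_1)/2 \bmod 8$. Setting $o_1({\bf g}):=\nu_2-\nu_1$ and combining with the first identity yields $c_{\nu_1}+o_1({\bf g})/2\equiv c_{\nu_2}\equiv -c_{\nu_1}\pmod 8$, i.e. $e^{-2\pi i c_{\nu_1}/8}=e^{2\pi i(c_{\nu_1}+o_1({\bf g})/2)/8}$. This simultaneously shows $-2c_{\nu_1}\in\tfrac12\Z \bmod 8$, so that such an integer $o_1({\bf g})$ genuinely exists.

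Finally, for the multi-pair statement I would use the equal $\tfrac12$-spacing of the sixteen extensions, $c_{\nu_{i,1}}=c_0+\nu_{i,1}/2 \bmod 8$, together with the relation $\nu_{i,2}-\nu_{i,1}=-4c_{\nu_{i,1}} \bmod 16$ extracted above, to write $o_1({\bf g})=\nu_{i,2}-\nu_{i,1}\equiv -4c_0-2\nu_{i,1}\pmod{16}$. Since the term $2\nu_{i,1}$ is even, reduction modulo $2$ removes all dependence on the pair index $i$, leaving $o_1({\bf g})\equiv -4c_0\pmod 2$; integrality of each $\nu_{i,2}-\nu_{i,1}$ moreover forces $4c_0\in\Z$, so this residue is well defined. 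I expect the main obstacle to be the careful bookkeeping of the three moduli ($\bmod\,8$ for the $c_\nu$, $\bmod\,16$ for $o_1$, and $\bmod\,2$ for the resulting invariant), together with a clean justification that the lifted map $\widecheck{\rho}_{\bf g}$ truly conjugates the \emph{full} modular data of the extension (including the fermion-parity vortex sector) rather than only the anyon-sector data already fixed by $\rho_{\bf g}$; once that is in place, the identification of $o_1({\bf g})\bmod 2$ with a representative of the $\mathcal{H}^1(G_b,\Z_{\bf T})$ obstruction class is immediate.
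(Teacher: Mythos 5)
Your proof is correct and follows essentially the same route as the paper's: the Gauss sum gives $c_{\nu_2}=-c_{\nu_1}\bmod 8$ because the anti-unitary equivalence conjugates the twists, the 16-fold way makes $o_1({\bf g})=\nu_2-\nu_1$ an integer with $o_1({\bf g})=-4c_{\nu_1}\bmod 16$, and the half-integer spacing of the central charges of the extensions kills the $i$-dependence modulo $2$. You are merely more explicit than the paper about the Gauss-sum conjugation step and the identification $o_1({\bf g})=\nu_2-\nu_1$, which the paper leaves implicit.
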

        
        A corollary of the above is that:
        \begin{corollary}
        A collection of maps $[\rho_{\bf g}] \in \Aut_{LR}(\mathcal{C})$ for each ${\bf g} \in G_b$ defines an element $[o_1] \in \mathcal{H}^1(G_b, \Z_{\bf T})$. 
        \end{corollary}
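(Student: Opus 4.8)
The plan is to read off a 1-cochain $o_1\colon G_b\to\Z$ from Theorem~\ref{thm:H1}, identify the correct twisted coefficient system, verify the twisted 1-cocycle identity by composing the weak lifts, and finally check that the resulting cohomology class is independent of the choices made. First I would fix the coefficient module: let $\epsilon\colon G_b\to\{\pm1\}$ be the grading that assigns $+1$ to $\mathbf g$ acting unitarily and $-1$ to $\mathbf g$ acting anti-unitarily, and let $\Z_{\bf T}$ denote $\Z$ with $G_b$ acting through $\epsilon$. This is the natural choice because, as the two theorems above show, a unitary $\rho_{\bf g}$ preserves the chiral central charge while an anti-unitary one sends $c_\nu\mapsto -c_\nu$; the ${\bf T}$-twist is precisely this sign.

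The cleanest way to package the data is to note that each weak lift $\widecheck\rho_{\bf g}$ induces a permutation $\pi_{\bf g}$ of the sixteen minimal modular extensions $\{\C_\nu\}$. Since the extensions form a torsor under stacking with $p\pm ip$ layers and are distinguished by their central charges $c_\nu$ (distinct modulo $8$ and spaced by $1/2$), and since $\widecheck\rho_{\bf g}$ must send $\C_\nu$ to an extension of central charge $\epsilon({\bf g})\,c_\nu$, the permutation is forced to be affine:
\begin{equation}
\pi_{\bf g}(\nu)=\epsilon({\bf g})\,\nu+t_{\bf g}.
\end{equation}
I would then \emph{define} $o_1({\bf g}):=t_{\bf g}$, the translation part, with a single reference extension $\C_0$ fixed once and for all. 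For unitary ${\bf g}$ one has $\pi_{\bf g}=\mathrm{id}$ and hence $o_1({\bf g})=0$, while for anti-unitary ${\bf g}$ this $t_{\bf g}$ is exactly the integer $o_1({\bf g})$ produced by Theorem~\ref{thm:H1} (the central-charge relation there is $t_{\bf g}$ evaluated against the fixed reference, rather than against a per-${\bf g}$ source).

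Next I would establish the twisted 1-cocycle identity. Because the topological symmetry is a homomorphism, the lifts compose, $\widecheck\rho_{\bf g}\,\widecheck\rho_{\bf h}\simeq\widecheck\rho_{\bf gh}$, so the induced permutations satisfy $\pi_{\bf g}\circ\pi_{\bf h}=\pi_{\bf gh}$. Comparing translation parts on the two sides gives
\begin{equation}
o_1({\bf gh})=o_1({\bf g})+\epsilon({\bf g})\,o_1({\bf h}),
\end{equation}
which is exactly the cocycle condition for $Z^1(G_b,\Z_{\bf T})$. The crucial feature is the factor $\epsilon({\bf g})$ multiplying $o_1({\bf h})$: it is the linear part of $\pi_{\bf g}$ acting on the translation $t_{\bf h}$, i.e.\ physically the negation of the central charge of the intermediate extension whenever ${\bf g}$ is anti-unitary. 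For well-definedness, the only freedom is the choice of reference extension; relabeling $\C_\nu\mapsto\C_{\nu-s}$ sends $t_{\bf g}\mapsto t_{\bf g}+(\epsilon({\bf g})-1)s$, which is precisely the coboundary $(ds)({\bf g})$ in $\Z_{\bf T}$. Hence the class $[o_1]\in\H^1(G_b,\Z_{\bf T})$ is independent of this choice. This also dovetails with the ``modulo $2$'' clause of Theorem~\ref{thm:H1}: changing the per-${\bf g}$ source shifts the naive central-charge difference by $(\epsilon({\bf g})-1)\times(\text{integer})\in2\Z$, so the pointwise value is unambiguous mod $2$, while the full integer lift is pinned down only up to a coboundary.

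The step I expect to be the main obstacle is isolating the correct cohomological invariant and seeing the twist. A naive attempt to set $o_1({\bf g})$ equal to the central-charge difference $2(c_{\nu_2}-c_{\nu_1})$ of a single lift satisfies only the \emph{untwisted} additivity $o_1({\bf gh})=o_1({\bf g})+o_1({\bf h})$ and is source-dependent; the $\Z_{\bf T}$-structure only becomes visible once one extracts the reference-independent translation part $t_{\bf g}$ of the affine action and tracks carefully how an anti-unitary $\widecheck\rho_{\bf g}$ re-orients the central-charge frame of the extension it acts on. Getting this bookkeeping right --- and confirming that the residual ambiguity is exactly a coboundary rather than something larger --- is where the real work lies; the remaining verifications are routine.
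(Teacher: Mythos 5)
Your proposal is correct and matches the paper's (largely implicit) argument: the corollary is presented as an immediate consequence of the two preceding theorems, with $o_1({\bf g})$ read off from the central-charge shift (zero for unitary ${\bf g}$), the twisted cocycle condition following because anti-unitary elements negate $c_-$, and the mod-$2$ ambiguity of Theorem~\ref{thm:H1} absorbed precisely by $\Z_{\bf T}$-coboundaries. Your integer representative $t_{\bf g}=\nu_1+\nu_2$ differs from the paper's $o_1({\bf g})=\nu_2-\nu_1$ by the even integer $2\nu_1$, so it defines the same cohomology class; the affine-permutation packaging is just a clean reorganization of the same bookkeeping.
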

        We conjecture that $[o_1]$ is really the obstruction to lifting the symmetry the action, in the following sense:
        \begin{conjecture}
        Suppose that for each ${\bf g} \in G_b$ we are given a lift $\widecheck{\rho}^{(1)}_{\bf g}:\C_{\nu_1}\rightarrow\C_{\nu_2}$ of $\rho_{\bf g}:\mathcal{C} \rightarrow \mathcal{C}$. These maps define $[o_1] \in \H^1(G_b,\Z_{\bf T})$. Then there exists a minimal modular extension $\C_{\nu_0}$ with lifts $\widecheck{\rho}^{(0)}_{\bf g}:\C_{\nu_0}\rightarrow \C_{\nu_0}$ for every ${\bf g} \in G_b$ if and only if $[o_1]=0$.
        \label{conj:H1ActualObstruction}
        \end{conjecture}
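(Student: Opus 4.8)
The plan is to recast the search for a single extension $\C_{\nu_0}$ as a statement of cohomological triviality on the $\Z_{16}$-torsor of minimal modular extensions, and to identify $[o_1]$ as the obstruction. The given lifts $\widecheck{\rho}^{(1)}_{\bf g}:\C_{\nu_1}\rightarrow\C_{\nu_2}$ assign to each ${\bf g}$ a source and a target extension; by Theorem~\ref{thm:H1} and its unitary counterpart these are constrained by $c_{\nu_2}=\sigma({\bf g})c_{\nu_1}\bmod 8$, where $\sigma({\bf g})=\pm1$ according to whether ${\bf g}$ is unitary or anti-unitary, and the integer $o_1({\bf g})$ records the induced shift of the central charge. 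Threading these lifts through the group law forces a $G_b$-indexed assignment of extensions, and the possibility of choosing them all on the diagonal $(\nu_0,\nu_0)$ for a single $\nu_0$ is precisely the statement that this assignment is cohomologically trivial. I would first make this dictionary precise, so that the cochain $o_1$ literally becomes the obstruction class to such a uniform choice.

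The forward direction is then immediate. Suppose lifts $\widecheck{\rho}^{(0)}_{\bf g}:\C_{\nu_0}\rightarrow\C_{\nu_0}$ exist for all ${\bf g}$. Computing $o_1$ from this data, every pair is diagonal, so the defining relation gives $o_1({\bf g})\equiv 0$ for every ${\bf g}$ (for anti-unitary ${\bf g}$ the diagonal is consistent precisely because $c_{\nu_0}\equiv -c_{\nu_0}\bmod 8$). By the corollary above, the class $[o_1]\in\H^1(G_b,\Z_{\bf T})$ depends only on the maps $[\rho_{\bf g}]$ on $\mathcal{C}$ and not on the chosen lift, so this vanishing representative proves $[o_1]=0$.

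The reverse direction is the substance of the conjecture and the step I expect to be hardest. Assuming $[o_1]=0$, write $o_1=\delta\lambda$ with $\lambda\in\Z$, so that $o_1({\bf g})=(\sigma({\bf g})-1)\lambda$. The natural interpretation of $\lambda$ is as a $G_b$-equivariant stacking of the theory with an invertible bosonic phase that shifts the reference extension by $\lambda$ within the $16$-fold way---indeed $\H^1(G_b,\Z_{\bf T})$ is precisely the group classifying the relevant symmetric stackings. Stacking each $\widecheck{\rho}^{(1)}_{\bf g}$ with the corresponding equivalence and condensing should drive the source and target to a common $\nu_0$ determined by $\lambda$, producing the desired diagonal lifts; the coboundary relation $o_1=\delta\lambda$ is arranged exactly so that the central-charge bookkeeping closes up.

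The main obstacle is to promote this torsor-level and central-charge-level bookkeeping to an honest statement about locality-respecting autoequivalences. One must check that the relabeled maps genuinely lie in $\Aut_{LR}(\C_{\nu_0})$, still restrict to a representative $\rho_{\bf g}$ on $\mathcal{C}$, and---most delicately---can be chosen simultaneously for all ${\bf g}$ compatibly with composition $\widecheck{\rho}_{\bf g}\widecheck{\rho}_{\bf h}\cong\widecheck{\rho}_{\bf gh}$, rather than merely one ${\bf g}$ at a time. Controlling how the stacking and condensation interact with the anti-unitary structure and with associativity is the crux, and is the point at which an explicit model of the symmetric $16$-fold-way building blocks (and likely the weak-lift conjecture above) must be brought to bear. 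It is precisely this functorial realization of an abstract cohomological vanishing that keeps the result at the level of a conjecture.
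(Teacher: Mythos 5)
This statement is Conjecture~\ref{conj:H1ActualObstruction}: the paper offers no proof of it, only the motivating remark after Theorem~\ref{thm:H1} that $o_1({\bf g})$ odd forces $c_{\nu_1}\in\Z\pm 1/4$ and hence forbids a diagonal lift. So there is no proof in the paper to compare yours against, and your write-up should be judged as an attempt at an open problem. Your forward direction is correct and is essentially the paper's own observation run in the contrapositive: a diagonal anti-unitary lift forces $c_{\nu_0}\equiv -c_{\nu_0}\bmod 8$, hence $o_1({\bf g})$ even, and Theorem~\ref{thm:H1} guarantees the mod-2 value is independent of which pair $(\nu_1,\nu_2)$ realizes the lift, so $[o_1]=0$.

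The reverse direction contains the genuine gap, which you partly acknowledge but understate. The class $[o_1]$ is built entirely out of central-charge bookkeeping on the $\Z_{16}$-torsor of minimal modular extensions; its vanishing only tells you that some $\nu_0$ has $c_{\nu_0}$ compatible with every $\sigma({\bf g})$ (i.e.\ $c_{\nu_0}\equiv 0$ or $4\bmod 8$ when anti-unitary elements are present). It does not produce, for that $\nu_0$, an actual braided (anti-)autoequivalence of $\widecheck{\mathcal{C}}_{\nu_0}$ restricting to $\rho_{\bf g}$: an autoequivalence must preserve the full modular data and admit consistent $U$-symbols, and the central charge is only one invariant among many. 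Your proposed fix---stack $\widecheck{\rho}^{(1)}_{\bf g}$ with a $G_b$-symmetric invertible phase and condense to shift $(\nu_1,\nu_2)$ to $(\nu_0,\nu_0)$---presupposes that autoequivalences can be transported functorially through the 16-fold-way stacking-and-condensation, which is precisely what the paper cannot do in general even at the easier $\H^2$ level (Proposition~\ref{prop:unitaryH2Trivial} controls only permutation actions, not full autoequivalences, and fails when permutations do not determine $\Aut(\widecheck{\mathcal{C}})$ or when modular isotopes exist). Moreover your argument tacitly assumes the hypothesis of the conjecture can be arranged at all, i.e.\ the paper's Conjecture on the existence of a weak lift, which is itself open. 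So the proposal is a reasonable reduction of the problem to a transport statement, but that transport statement is the entire content of the conjecture, not a technical afterthought.
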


        A well-known example of the above $\mathcal{H}^1$ anomaly occurs for the $\SO(3)_3$ super-modular category, which was studied in \cite{fidkowski2013} as an example of a (2+1)D surface theory for an odd index topological superconductor in the class DIII. (Here $\SO(3)_3$ consists of the integer spin representations of the $\SU(2)_6$ affine Kac-Moody algebra, which arises in $\SO(3)_3$ Chern-Simons theory.) All minimal modular extensions of this theory have central charge $c = 1/4 \mod 1/2$, and no minimal modular extension can be compatible with time-reversal symmetry. This is because a UMTC that is compatible with time-reversal symmetry must have central charge $c_- = 0 \mod 4$. 
        
        \subsubsection{Anomaly cascade: second layer}
        
        The second layer of the anomaly cascade is an obstruction to choosing the autoequivalences of the fermion parity-gauged theory so that they compose in an appropriate way, determined by the group $G_f$ and details of the fermion parity-gauged theory.
        
        Suppose that the $\mathcal{H}^1$ anomaly vanishes. Then, according to the discussion above, for each ${\bf g} \in G_b$, we assume we have an invertible map 
        \begin{align}
            \widecheck{\rho}_{\bf g} : \C_{\nu} \rightarrow \C_{\nu},
            \label{eqn:extendedRho}
        \end{align}
        for at least some subset of the possible values of $\nu$. Unless otherwise stated, we now fix a particular choice of $\nu$ and omit it from our notation.
        
        Below we will briefly summarize the obstruction theory for the case where the extension $\C_\nu$ contains an Abelian fermion parity vortex. The more complicated cases will be discussed in the main text. 
        
        When $\C_\nu$ contains an Abelian fermion parity vortex, $[\rho]$ is a group homomorphism, and we need to require that the lift $[\widecheck{\rho}_{\bf g}]$ is also a group homomorphism $G_b \rightarrow \Aut_{LR}(\C)$. We show that there is an obstruction $[o_2] \in \mathcal{H}^2(G_b, \ker r)$ to $[\widecheck{\rho}_{\bf g}]$ defining a group homomorphism. Here $r$ is the restriction map, 
        \begin{align}
            r : \Aut_{LR}(\C) \rightarrow \Aut_{LR}(\C)|_{\mathcal{C}} = \Aut_{LR}(\mathcal{C})
        \end{align}
        which has a non-trivial kernel. 
        
        We characterize $\ker r$ in the following way. First, note that a fermion parity vortex $x$ is called $v$-type if $x \times \psi \neq x$.
        \begin{theorem}
        Suppose that the modular $S$-matrix of $\C$, when restricted to its block involving only $v$-type vortices, is block diagonal with $k$ decoupled blocks. Then if $[\rho] \in \ker r$ and $x$ is a fermion parity vortex, $\rho(x) = x \times \psi^{q(x)}$, where $q(x) \in \{0,1\}$ is independent of $x$ within a block if $x$ is a $v$-type vortex.
        \label{thm:kerRPerm}
        \end{theorem}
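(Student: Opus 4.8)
The plan is to reduce everything to statements about the modular $S$-matrix of $\C$ and to exploit the abelian fusion rule for $\psi$. First I would record the structural facts that follow from $[\rho]\in\ker r$: since $r$ restricts an autoequivalence of $\C$ to the subcategory $\mathcal{C}$, any $[\rho]\in\ker r$ fixes every anyon of $\mathcal{C}$ (in particular $\rho(\psi)=\psi$) and acts trivially on the modular data restricted to $\mathcal{C}$. Because $\rho$ is a braided autoequivalence it preserves monodromies, so $M_{\rho(x),\psi}=M_{x,\psi}=-1$ for any fermion parity vortex $x$; hence $\rho$ maps vortices to vortices, i.e. it preserves the $\Z_2$ grading $\C=\mathcal{C}\oplus\mathcal{M}$. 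Using $\rho(a)=a$ together with invariance of $S$, the problem reduces to the purely numerical statement that the only vortices $y$ with $S_{y,a}=S_{x,a}$ for all $a\in\mathcal{C}$ are $y\in\{x,\psi\times x\}$, supplemented by a rigidity statement on the vortex block.

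The key input is the abelian fusion rule $S_{\psi\times c,\,d}=M_{\psi,d}\,S_{c,d}$, valid because $\psi$ has quantum dimension $1$. Since $M_{\psi,a}=+1$ for $a\in\mathcal{C}$ and $M_{\psi,x}=-1$ for any vortex $x$, this gives $S_{\psi\times x,a}=S_{x,a}$ on anyon columns and $S_{\psi\times x,x''}=-S_{x,x''}$ on vortex columns. I would then combine the row-orthonormality of the unitary $S$-matrix for the rows $x$ and $\psi\times x$: adding the relation $\sum_{c\in\C}S_{x,c}S^*_{y,c}=\delta_{x,y}$ to the same relation with $x\to\psi\times x$ and inserting the two sign rules, the vortex-column contributions cancel and one obtains
\begin{equation}
\sum_{a\in\mathcal{C}}S_{x,a}S^*_{y,a}=\tfrac12\bigl(\delta_{x,y}+\delta_{\psi\times x,\,y}\bigr)
\end{equation}
for all vortices $x,y$. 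Applying this with $y=\rho(x)$, where $S_{\rho(x),a}=S_{x,a}$, the left-hand side equals $\sum_{a}|S_{x,a}|^2$, which the same identity (with $y=x$) evaluates to $\tfrac12$ when $x$ is $v$-type (so $\psi\times x\neq x$) and to $1$ when $x$ is $\sigma$-type. In the $v$-type case this forces $\delta_{\rho(x),x}+\delta_{\rho(x),\psi\times x}=1$, i.e. $\rho(x)\in\{x,\psi\times x\}$; in the $\sigma$-type case it forces $\rho(x)=x$ (consistent with $\psi\times x=x$). Either way $\rho(x)=x\times\psi^{q(x)}$ with $q(x)\in\{0,1\}$.

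Finally, to show $q$ is constant on each $v$-type block I would use invariance of the full $S$-matrix once more. Writing $\rho(x)=\psi^{q(x)}\times x$ and $\rho(y)=\psi^{q(y)}\times y$ and applying the sign rule to both indices (both are vortices, so each factor of $\psi$ contributes $-1$) yields $S_{\rho(x),\rho(y)}=(-1)^{q(x)+q(y)}S_{x,y}$. Since $\rho$ preserves $S$, i.e. $S_{\rho(x),\rho(y)}=S_{x,y}$, any pair of $v$-type vortices with $S_{x,y}\neq0$ must satisfy $q(x)=q(y)$. By hypothesis the $v$-type block of $S$ splits into $k$ decoupled blocks, so within one block any two $v$-type vortices are joined by a chain of nonzero $S$-entries; propagating $q(x)=q(y)$ along such a chain shows $q$ is constant on each block.

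The main obstacle is the second step: establishing that the restriction of the $S$-matrix to the anyon columns determines a vortex up to fusion with $\psi$. Everything hinges on the cancellation identity above, and the subtle point is that it must be derived from orthonormality of genuinely independent rows, with $v$-type and $\sigma$-type vortices handled on the same footing — the $\sigma$-type normalization $\sum_a|S_{x,a}|^2=1$ being exactly what rules out any nontrivial image in that sector. The block-constancy of $q$ is then a short consequence, provided ``decoupled block'' is read as a connected component of the graph of nonzero $S$-entries among the $v$-type vortices.
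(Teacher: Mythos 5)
Your proof is correct and follows essentially the same route as the paper's: both arguments isolate the sum $\sum_{a\in\C_0}S_{x,a}S^*_{\rho(x),a}$ using the sign rule $S_{\psi\times x,\,y}=\pm S_{x,y}$ together with triviality of $\rho$ on $\C_0$, and both establish block-constancy of $q$ by propagating $q(x)=q(y)$ along chains of nonzero $S$-entries among $v$-type vortices. The only cosmetic difference is in the first step, where you evaluate that sum exactly via unitarity of $S$ (obtaining $\tfrac12(\delta_{x,\rho(x)}+\delta_{\psi\times x,\rho(x)})$, which also handles the $\sigma$-type case cleanly), whereas the paper packages the same computation as the Verlinde-formula statement $N_{v,\psi}^{\rho(v)}+N_{v,1}^{\rho(v)}>0$.
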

        The proof of Theorem \ref{thm:kerRPerm} is given in Section \ref{proofI6}. 
        Theorem \ref{thm:kerRPerm} fully characterizes all possible ways that elements of $\ker r$ can permute the vortices. In Theorem~\ref{thm:kerR}, we fully characterize $\ker r \subset \Aut_{LR}(\C)$ in cases where permutation actions of the anyons and vortices uniquely determine the $\Aut$ groups. In Section \ref{subsec:kerR}, we conjecture that in general $\ker r = \Z_2$ (see Conjecture \ref{conjecture:kerRIsActuallyZ2}). In Section \ref{subsec:k>1}, we provide an explicit example of a theory where $k = 2$. 
        
        In Section \ref{subsec:kerR}, we will define a special element $[\alpha_\psi] \in \ker r$, such that $\alpha_\psi(\phi) = \phi \times \psi$ for all $v$-type fermion parity vortices $\phi$. 
        
        In cases where $\ker r = \Z_2 \simeq \{[1], [\alpha_\psi]\}$, the anomaly $[o_2] \in \mathcal{H}^2(G_b,\Z_2)$ can be directly related to the physical origin of $\mathcal{H}^2(G_b,\Z_2)$ in the classification of (3+1)D FSPT phases. The data $n_2 \in C^2(G_b, \Z_2)$ that is used in specifying an FSPT can be understood as decorating each codimension-2 trijunction of $G_b$ domain walls ${\bf g}$, ${\bf h}$, ${\bf gh}$ in the (3+1)D bulk with a $(1+1)D$ Kitaev chain if $n_2({\bf g}, {\bf h})$ is non-trivial. We show how the anomaly $[o_2]$ arising in the problem of extending the group homomorphism $\rho$ to $\widecheck{\rho}$ is related to the presence of Kitaev chains on domain walls in the bulk (3+1)D FSPT. 
        
        Recently, \cite{delmastro2021} has shown that in the case $G_b = \Z_2^{\bf T}$, the $\mathcal{H}^2(\Z_2^{\bf T},\Z_2)$ anomaly corresponds to an anomaly in the action of ${\bf T}^2$ on the torus Hilbert space of the fermionic topological phase. 
        We generalize their results and show how, for $\ker r = \Z_2$, the non-trivial $[o_2]$ that we find implies that the action of the symmetry operators $\widecheck{\rho}_{\bf g}$ for ${\bf g} \in G_b$ get extended on the torus. Letting $\ket{\Psi}_s$ be a state in the Hilbert space of the topological quantum field theory on the torus with a fixed spin structure $s$, we find
        \begin{align}
            \widecheck{\rho}_{\bf g} \widecheck{\rho}_{\bf h}\ket{\Psi}_s = \widecheck{\rho}_{\bf gh}\left(\omega_2({\bf g,h})\right)^F (-1)^{\tilde{o}_2({\bf g}, {\bf h}) \text{Arf}(s)}\ket{\Psi}_s
        \end{align}
        where $\tilde{o}_2$ means we are interpreting $\tilde{o}_2 \in \Z_2 \simeq \{0,1\}$ instead of $o_2 \in \Z_2 \simeq \ker r$. In fact, we show that the element $[\alpha_\psi] \in \ker r$ discussed above allows us to in general change the symmetry action $\widecheck{\rho}_{\bf g} \rightarrow \widecheck{\rho}_{\bf g} \alpha_\psi$, which has the effect of changing the symmetry action on the torus Hilbert space by 
        \begin{align}
        \widecheck{\rho}_{\bf g}\ket{\Psi}_s \rightarrow \widecheck{\rho}_{\bf g} (-1)^{\text{Arf}(s)}\ket{\Psi}_s
        \end{align}
        
        As an example, we show that the semion-fermion theory, $U(1)_2 \times U(1)_{-1}$ which exists at the surface of a $\nu = 2$ fermionic topological superconductor in Class DIII, possesses the above $\mathcal{H}^2(G_b, \Z_2)$ obstruction. 
        
        We note that the $[o_2]$ anomaly defined above can, a priori, depend on which particular modular extension $\nu$ we consider. To highlight this dependence we write $[o_2^{(\nu)}]$. However, we expect that the 't Hooft anomaly of the (2+1)D theory depends only on $\mathcal{C}$ and the symmetry fractionalization data and should be independent of $\nu$. 
        In reviewing (3+1)D FSPTs in Sec. \ref{subsec:fermionSPT}, we will explicitly define a subgroup $\Gamma^2 \subset \H^2(G_b,\Z_2)$, which leads to the natural homomorphism $q_{\Gamma^2}: \H^2(G_b,\Z_2) \rightarrow \H^2(G_b,\Z_2)/\Gamma^2$. In Sec.~\ref{sec:o2independence} we will discuss how $q( [o_2^{(\nu)} ])$ is expected to determine the 't Hooft anomaly of the theory and results regarding its independence of $\nu$. 
        
        Finally, we note that there is still a remnant of the $\mathcal{H}^2$ anomaly if the $\mathcal{H}^1$ anomaly is non-trivial. Specifically, one can always define a relative $\mathcal{H}^2$ anomaly for two theories $\mathcal{C}_1$ and $\mathcal{C}_2$ that possess the same $\mathcal{H}^1$ anomaly. This can be done by considering the theory $\mathcal{C}_{12}$ obtained by stacking $\mathcal{C}_1$ and $\overline{\mathcal{C}_2}$, so that the $\mathcal{H}^1$ anomaly of the stacked theory vanishes, and then computing the resulting $\mathcal{H}^2$ anomaly of $\mathcal{C}_{12}$. The $\mathcal{H}^2$ anomaly of $\mathcal{C}_{12}$ defines the relative $\mathcal{H}^2$ anomaly between $\mathcal{C}_1$ and $\mathcal{C}_2$. This agrees with the (3+1)D fermion SPT classification, where the (3+1)D fermion SPTs form a torsor over $\mathcal{H}^2$ if the $\mathcal{H}^1$ piece is non-trivial. 
        
        \subsubsection{Anomaly cascade: third layer}
        
        The third layer of the anomaly cascade asks whether the symmetry fractionalization data can be extended from the anyons to the fermion parity-gauged theory.
        
        Assume that the first and second layer obstructions vanish, so that the map $[\widecheck{\rho}_{\bf g}]$ defined in Eq.~\ref{eqn:extendedRho} satisfies the appropriate group structure. 
        
        The next task is to determine if the symmetry fractionalization class for $\mathcal{C}$, defined by a set of $\U$ phases $\eta_a : G_b \times G_b \rightarrow \U$ for each anyon $a$, can be extended to the full modular extension $\C$. In \cite{fidkowski2018}, it was shown that there is an anomaly $[o_3] \in \mathcal{H}^3(G_b,\Z_2)$ which quantifies the obstruction to such an extension. The analysis of \cite{fidkowski2018} was restricted to the special case where $G_f = G_b \times \Z_2^f$, with some additional technical assumptions on the symmetry fractionalization on $\mathcal{C}$. Here we provide a completely general discussion which applies to arbitrary group extensions $G_f$ and symmetry fractionalization. 
        
        As an example, we compute the $\mathcal{H}^3(G_b, \Z_2)$ obstruction for the case of the doubled semion-fermion theory $[U(1)_2 \times U(1)_{-1}]^2$, which exists at the surface of a $\nu = 4$ topological superconductor in class DIII, where ${\bf T}^2 = (-1)^F$ so that $G_b = \Z_2^{\bf T}$,
        $G_f = \Z_4^{{\bf T},f}$. We show that this $[o_3]$ class is non-trivial, which matches the expectation from the bulk-boundary correspondence. 
        
        As in the $\H^2$ case, a priori there may be multiple different modular extensions $\C_\nu$ for which we have a valid group homomorphism $[\widecheck{\rho}_{\bf g}]$, and multiple possible choices of $[\widecheck{\rho}_{\bf g}]$. Therefore in general $[o_3]$ depends on $\nu$ and $\widecheck{\rho}$; to highlight this dependence we can write $[o_3^{(\nu, \widecheck{\rho})}]$. 
        
        As in the case of the $\H^2$ layer, we can define the natural homomorphism $q_{\Gamma^3} : \H^3(G_b, \Z_2) \rightarrow \H^3(G_b, \Z_2)/\Gamma^3$, and we expect that the 't Hooft anomaly is determined by $q_{\Gamma^3}( [o_3^{(\nu, \widecheck{\rho})}])$. In particular, since we expect that the 't Hooft anomaly is determined by $\mathcal{C}$ and its symmetry fractionalization data, we expect that $q_{\Gamma^3}( [o_3^{(\nu, \widecheck{\rho})}])$ is independent of the valid choices of $\nu$, $\widecheck{\rho}$. We discuss this expectation in detail in Sec.~\ref{subsec:o3ChoiceDependence}.
        
        Just like the $\mathcal{H}^2$ anomaly, we can always define a relative $\mathcal{H}^3$ anomaly between two theories $\mathcal{C}_1$ and $\mathcal{C}_2$ with identical $\mathcal{H}^1$ and $\mathcal{H}^2$ anomalies. That is, we consider $\mathcal{C}_1$ and $\mathcal{C}_2$ to have identical $\mathcal{H}^1$ anomaly, and vanishing relative $\mathcal{H}^2$ anomaly. Then, we can consider the stacked theory $\mathcal{C}_{12}$, which has vanishing $\mathcal{H}^1$ and $\mathcal{H}^2$ anomaly, and for which we can define an $\mathcal{H}^3$ anomaly. The $\mathcal{H}^3$ anomaly of $\mathcal{C}_{12}$ defines the relative $\mathcal{H}^3$ anomaly between $\mathcal{C}_1$ and $\mathcal{C}_2$. 
        
        \subsubsection{Anomaly cascade: fourth layer}
        
        The fourth layer of the anomaly cascade asks whether it is possible to define a theory of symmetry defects for $G_b$ consistent with the symmetry fractionalization data on the anyons and fermion parity vortices.
        
        If all the preceding obstructions vanish, we can define a fermion parity gauged theory described by the modular extension $\C$, and a notion of symmetry fractionalization on $\C$. Thus we can define a fully bosonic topological phase with a symmetry fractionalization class. We can then refer to the theory of $G_b$-crossed modular categories, for which there is an $\mathcal{H}^4(G_b, U(1))$ obstruction to gauging $G_b$ \cite{ENO2010,barkeshli2019,barkeshli2019rel,Cui2016}, which is now known how to explicitly compute in general \cite{barkeshli2019rel,bulmash2020}. 
        
        One may think of the first three layers of obstructions as determining the mixed anomaly between fermion parity and $G_b$ - they are the obstructions to gauging fermion parity while preserving $G_b$ symmetry. If the mixed anomaly vanishes, then this fourth layer characterizes the remaining pure $G_b$ anomaly.
        
        In general, to define the $[o_4]$ obstruction, there was a choice of $\nu$, $\widecheck{\rho}$, and symmetry fractionalization data $\widecheck{\eta}$. To highlight this dependence we write $[o_4^{(\nu, \widecheck{\rho}, \widecheck{\eta})}]$. 
        As in the case of the second and third layers, we have a map $q_{\Gamma^4} : \H^4(G_b, \Z_2) \rightarrow \H^4(G_b, \Z_2)/\Gamma^4$. Physically $\Gamma^4 \subset \H^4(G_b, \U)$ can be thought of as the subgroup of (3+1)D boson SPTs which become trivial upon introducing fermions with $G_f$ symmetry. We expect then that the 't Hooft anomaly of the theory is $q( [o_4^{(\nu, \widecheck{\rho}, \widecheck{\eta})}])$, and that this is independent of the choices $\nu, \widecheck{\rho}, \widecheck{\eta}$, as summarized in Conjecture \ref{H4conj}. We will in particular prove independence of $q_{\Gamma^4}( [o_4^{(\nu, \widecheck{\rho}, \widecheck{\eta})}])$ under changes of $\widecheck{\eta}$ for fixed $\nu, \widecheck{\rho}$.
        
        Just as in the previous cases, even if a theory possesses a non-trivial $\mathcal{H}^i$ anomaly with $i < 4$, we can always define a relative $\mathcal{H}^4$ anomaly between two theories $\mathcal{C}_1$ and $\mathcal{C}_2$ that have vanishing relative $\mathcal{H}^1$, $\mathcal{H}^2$, and $\mathcal{H}^3$ anomalies.

        \subsubsection{Organization of paper}

        The rest of this paper is structured as follows. In Sec.~\ref{sec:supermodular}, we review some basic facts about super-modular and spin modular categories and the Hilbert space of a spin modular theory on a torus. In Sec.~\ref{sec:bosonicSymmFrac}, we give a brief review of symmetry fractionalization in bosonic systems. In Sec.~\ref{sec:fermionicSymm} we summarize symmetries and symmetry fractionalization in fermionic systems and also review in detail the classification of (3+1)D FSPTs due to \cite{WangGu}. We then consider each level of the anomaly cascade in order. We prove Theorem~\ref{thm:H1} about the $\H^1(G_b,\Z_{\bf T})$ obstruction in Sec.~\ref{sec:H1}. Sec.~\ref{sec:H2} is devoted to a thorough discussion of the $\H^2(G_b,\ker r)$ obstruction and its relation to the $\H^2(G_b,\Z_2)$ part of the 't Hooft anomaly. In Sec.~\ref{sec:H3}, we give a fully general discussion of the $\H^3(G_b,\Z_2)$ obstruction. In Sec. ~\ref{H4sec} we discuss the $\H^4(G_b, \U)$ obstruction, in particular its dependence on the various choices made in its definition. Sec.~\ref{sec:examples} contains additional interesting examples, and some summary and discussion appears in Sec.~\ref{sec:discussion}.

    \section{Super-modular and spin modular categories and fermionic topological phases of matter}
    \label{sec:supermodular}
    
        There are two equivalent descriptions of a fermionic topological phase of matter. One description uses a super-modular tensor category, denoted $\mathcal{C}$, along with a chiral central charge $c_-$. In this description, the super-modular tensor category determines $c_- \mod 1/2$. A super-modular tensor category~\cite{bruillard2017a,bruillard2017b,bonderson2018} is a unitary braided fusion category with exactly one nontrivial invisible particle $\psi$, which is a fermion and satisfies $\Z_2$ fusion rules. ``Invisible" means that the double braid
        \begin{equation}
            M_{a,\psi} = +1
        \end{equation}
        for all $a \in \mathcal{C}$, ``fermion" means that the topological twist $\theta_\psi=-1$, and the $\Z_2$ fusion rules means $\psi \times \psi = 1$. Physically, this description tracks the topologically non-trivial quasi-particle content of the phase. The presence of $\psi$ is used to track the presence of a fermion which is topologically trivial in the sense that the fermion can be created or annihilated by a local fermion operator. Different fermionic phases with the same quasi-particle content are distinguished by stacking with $p+ip$ superconductors, each of which changes $c_-$ by 1/2.
        
        The alternate description is via a spin modular category, denoted $\C$, and a chiral central charge $c_-$ which is determined modulo $8$ by $\C$. A spin modular category is a UMTC together with a preferred choice of fermion $\psi$ which has $\Z_2$ fusion rules. Physically, the spin modular category describes the phase after gauging fermion parity, that is, it describes the quasi-particle content of the phase along with the fermion parity vortices. More precisely, $\C$ possesses a natural $\Z_2$ grading determined by double braids with $\psi$:
        \begin{align}
            \C &= \C_0 \oplus \C_1\\
            \C_0 &\simeq \mathcal{C}
        \end{align}
        where $M_{a,\psi}=+1$ if $a \in \C_0$ and $M_{a,\psi}=-1$ if $a \in \C_1$. Fusion respects this grading. The objects in $\C_1$ are physically interpreted as fermion parity vortices, or equivalently symmetry defects of fermion parity symmetry.
        
        The sector $\C_1$ can be further decomposed as follows:
        \begin{equation}
            \C_1 = \C_v \oplus \C_\sigma
        \end{equation}
        where
        \begin{align}
            \C_v &= \{a \in \C_1 | a \times \psi \neq a\}\\
            \C_\sigma &= \{a \in \C_1 | a \times \psi = a\} .
        \end{align}
        
        The two descriptions are related as follows. Every super-modular category $\mathcal{C}$ admits~\cite{johnsonFreydModularExt} a minimal modular extension $\C$, which means that $\C$ contains $\mathcal{C}$ as a subcategory, the preferred fermion of $\C$ is the invisible fermion of $\mathcal{C}$, and $\C$ has minimal possible total quantum dimension
        \begin{equation}
            \mathcal{D}_{\C}^2 = 2\mathcal{D}_{\mathcal{C}}^2.
        \end{equation}
        Every super-modular category has precisely 16 distinct minimal modular extensions $\C_{\nu}$, for $\nu=0,\ldots,15$, and whose chiral central charges modulo 8 differ by $\nu/2$. A fermionic topological phase is described either by $(\mathcal{C},c_-)$ or by $(\C_{\nu},c_-)$ where the chiral central charge of $\C_{\nu}$ is equal to $c_-$ modulo 8.
        
        We will make use of the following fact~\cite{bruillard2017b} about the topological $S$-matrix of a minimal modular extension $\C$:
        \begin{equation}
            S_{x,y} = \begin{cases}
              S_{x \times \psi, y} & y \in \C_0\\
              -S_{x \times \psi, y} & y \in \C_1
            \end{cases}
            \label{eqn:SChangeByPsi}
        \end{equation}
        Note that the second line implies that $S_{\sigma,y} = 0$ for $\sigma \in \C_\sigma$, $y \in \C_1$.
        
        As we will see now, the full spin modular category, not just the super-modular tensor category, is required to describe the fermionic system on non-trivial surfaces and with arbitrary spin structures.
    
        \subsection{Torus degeneracy and spin structure}
        \label{subsec:torusDegeneracy}
        
        The spin modular category determines the Hilbert space and the action of the mapping class group of the fermionic topological phase of matter on a topologically non-trivial surface. 
        
        Consider the (spatial) torus $T^2$ with nontrivial cycles $\alpha$ and $\beta$. Let us label the states of the spin modular category as $|a \rangle_\alpha$, for $a \in \C$ an anyon. This means that the topological charge as \textit{measured} through the loop $\alpha$ is $a$. Below for ease of notation we drop the subscript $\alpha$ and keep it implicit in the definition of the state $|a \rangle$. 
        
        Now suppose we have a choice of spin structure $(\mu_{\alpha}, \nu_{\beta})$ where $\mu = 0$ corresponds to Neveu-Schwarz (anti-periodic) boundary conditions on the loop $\alpha$ and $\mu = 1$ corresponds to Ramond (periodic) boundary conditions on $\alpha$, with $\nu_{\beta}$ similar for the $\beta$ loop\footnote{Our convention is slightly unusual because the modular $T$ transformation does not act linearly on the $\mu$ and $\nu$ indices. However, the formulas relevant to us like Eq.~\ref{eqn:WpsiEigenvalues} are more natural.}. We drop the $\alpha$ and $\beta$ labels in what follows. Let $\H_{\mu, \nu}$ denote the Hilbert space of the fermionic topological phase on the torus with the chosen spin structure, $|\Psi\rangle_{\mu, \nu} \in \H_{\mu,\nu}$, and let $W_x(\gamma)$ denote the Wilson loop of a particle $x$ around the loop $\gamma$.
        
        The defining distinction between the different sectors $(\mu, \nu)$ is in the fermion boundary conditions, which defines the eigenvalue of the fermion Wilson loop $W_\psi$:
        \begin{align}
            W_\psi(\alpha) | \Psi \rangle_{\mu, \nu} &= (-1)^{\mu}  | \Psi \rangle_{\mu,\nu}
            \nonumber \\
            W_\psi(\beta) | \Psi \rangle _{\mu,\nu}&= (-1)^{\nu} | \Psi\rangle_{\mu,\nu}. \label{eqn:WpsiEigenvalues}
        \end{align}
        Note that $W_\psi(\alpha)$ has $-1$ eigenvalue for the sector with periodic $(\mu = 1)$ boundary conditions along the $\alpha$ cycle. This is because periodic boundary conditions occur when there is a fermion parity vortex threading the conjugate cycle, as can be derived by studying the modular matrices carefully. The $-1$ then arises due to the mutual statistics between $\psi$ and $v$. 
        
        A basis of states for the fermionic topological phase on a torus is as follows (see e.g.~\cite{delmastro2021} for a recent discussion):
        \begin{align}
            \H_{0,0}:  &\ket{a}_{00} = \frac{1}{\sqrt{2}} \left( |a \rangle + | a \times \psi \rangle \right) 
            \nonumber \\
            \mathcal{H}_{1,0}:  &
            \begin{cases}
              \ket{v}_{10}=\frac{1}{\sqrt{2}} ( |v \rangle + |v \times \psi \rangle )  \\    
              \ket{\sigma}_{10}=|\sigma \rangle 
            \end{cases}
            \nonumber \\
            \mathcal{H}_{0,1}: & \ket{a}_{01}=\frac{1}{\sqrt{2}} \left( |a \rangle - | a \times \psi \rangle \right) 
            \nonumber \\
            \mathcal{H}_{1,1}: &
            \begin{cases}
              \ket{v}_{11}=\frac{1}{\sqrt{2}} ( |v \rangle - |v \times \psi \rangle )  \\    
              \ket{\sigma}_{11}=|\sigma; \psi \rangle 
            \end{cases},
            \label{eqn:torusBasis}
        \end{align}
        where $a \in \C_0$, $v \in \C_v$, and $\sigma \in \C_\sigma$. Furthermore, $|\sigma; \psi\rangle$ denotes the state on a torus with a puncture labeled $\psi$. 
        
        These states on the torus can be built by gluing together states on the 3-punctured sphere, i.e. splitting spaces $V_c^{ab}$ and their duals $V_{ab}^c$. Eq.~\ref{eqn:torusBasis} can be re-expressed in this language as
        \begin{align}
            \H_{00} &= \left(\bigoplus_{a \in \C_0} V_{1}^{a \overline{a}} \otimes V_{a \overline{a}}^1\right)\bigg|_{\text{symm}} \nonumber \\
            \H_{01} &= \left(\bigoplus_{a \in \C_0} V_{1}^{a \overline{a}} \otimes V_{a \overline{a}}^1\right)\bigg|_{\text{anti-symm}} \nonumber\\
            \H_{1,0} &= \left(\bigoplus_{v \in \C_v} V_{1}^{v \overline{v}} \otimes V_{v \overline{v}}^1\right)\bigg|_{\text{symm}} \oplus \bigoplus_{\sigma \in \C_\sigma} \left(V_{1}^{\sigma \overline{\sigma}} \otimes V_{\sigma\overline{\sigma}}^{1}\right) \nonumber \\
            \H_{1,1}&= \left(\bigoplus_{v \in \C_v} V_{1}^{v \overline{v}} \otimes V_{v \overline{v}}^1\right)\bigg|_{\text{anti-symm}} \oplus \bigoplus_{\sigma \in \C_\sigma} \left(V_{1}^{\sigma \overline{\sigma}} \otimes V_{\sigma\overline{\sigma}}^{\psi} \right)
            \label{eqn:torusStates}
        \end{align}
        The ``(anti)-symm" notation means we restrict to the subspace which consists of (anti)-symmetric sums of states in the $a$ and $a \times \psi$ sectors.

    \section{Review of symmetry fractionalization in bosonic systems}
        \label{sec:bosonicSymmFrac}
        
        We briefly review the formalism for symmetry fractionalization in bosonic systems.
        
        Consider a UMTC $\mathcal{B}$ with global symmetry group $G$. The basic data required to define symmetry fractionalization is the following. First, we define a group homomorphism
        \begin{equation}
            [\rho_{\bf g}]: G \rightarrow \Aut(\mathcal{B})
        \end{equation}
        where $\Aut(\mathcal{B})$ is the group of braided autoequivalences of $\mathcal{B}$, modulo a set of gauge equivalences called natural isomorphisms\footnote{We define natural isomorphisms using Eq.~\ref{eqn:natIso}. There exists a more abstract mathematical definition, and it is unclear if this definition is equivalent to ours. In considering equivalence classes of braided autoequivalences, we will restrict our attention to natural isomorphisms in our definition, since this appears to naturally describe SETs \cite{barkeshli2019}.}. A braided autoequivalence, or autoequivalence for short, is a map from $\mathcal{B}$ to itself which preserves the data of the theory up to a gauge transformation. We will also use the term ``autoequivalence" to refer to braided anti-autoequivalences, which, up to a gauge transformation, complex conjugate the data of the theory. A natural isomorphism $\Upsilon$ is an autoequivalence which acts on fusion vertices as
        \begin{equation}
            \Upsilon\left(\ket{a,b;c;\mu}\right) = \frac{\gamma_a \gamma_b}{\gamma_c}\ket{a,b;c;\mu}
            \label{eqn:natIso}
        \end{equation}
        where $\gamma_a \in \U$. Natural isomorphisms have a redundancy
        \begin{equation}
            \gamma_a \rightarrow \gamma_a \zeta_a
            \label{eqn:zetaRedundancy}
        \end{equation}
        where $\zeta_a \in \U$ obeys the fusion rules in the sense that $\zeta_a \zeta_b = \zeta_c$ whenever $N_{ab}^c > 0$. A representative $\rho_{\bf g}$ of the equivalence class $[\rho_{\bf g}]$ determines a permutation of the anyons $a \rightarrow \,^{\bf g}a$ and a set of unitary matrices $U_{\bf g}(a,b;c)$ as follows:
        \begin{equation}
            \rho_{\bf g}\left(\ket{a,b;c;\mu}\right)=\sum_{\nu} U_{\bf g}(\,^{\bf g}a,\,^{\bf g}b;\,^{\bf g}c)_{\mu \nu}\ket{\,^{\bf g}a,\,^{\bf g}b;\,^{\bf g}c;\nu}.
            \label{eqn:Udef}
        \end{equation}
        The statement that $\rho_{\bf g}$ is an autoequivalence means that the $F$- and $R$-symbols are preserved (up to complex conjugation) according to the following consistency conditions:
        \begin{align}
            \left(\left[F^{abc}_{d}\right]_{(e,\alpha,\beta),(f,\mu,\nu)}\right)^{\sigma({\bf g})} &= \sum_{\alpha',\beta',\mu',\nu'} U_{\bf g}(\,^{\bf g}a, \,^{\bf g}b;\,^{\bf g}e)_{\alpha \alpha'} U_{\bf g}(\,^{\bf g}e, \,^{\bf g}c;\,^{\bf g}d)_{\beta \beta'}\left[F^{\,^{\bf g}a\,^{\bf g}b\,^{\bf g}c}_{\,^{\bf g}d}\right]_{(\,^{\bf g}e,\alpha',\beta'),(\,^{\bf g}f,\mu',\nu')} \times \nonumber  \label{eqn:UFConsistency} \\
            &\hspace{1in}\times \left(U_{\bf g}(\,^{\bf g}b, \,^{\bf g}c;\,^{\bf g}f\right)^{-1}_{\mu' \mu} \left(U_{\bf g}(\,^{\bf g}a, \,^{\bf g}f;\,^{\bf g}d\right)^{-1}_{\nu' \nu} \\
            \left(\left[R^{ab}_c\right]_{\mu \nu}\right)^{\sigma({\bf g})} &= U_{\bf g}(\,^{\bf g}b, \,^{\bf g}a; \,^{\bf g}c)_{\mu \mu'} \left(R^{\,^{\bf g}a\,^{\bf g}b}_{\,^{\bf g}c}\right)_{\mu' \nu'}\left[U_{\bf g}(\,^{\bf g}a,\,^{\bf g}b;\,^{\bf g}c)^{-1}\right]_{\nu' \nu} \label{eqn:URConsistency}
        \end{align}
        Here
        \begin{equation}
            \sigma({\bf g}) = \begin{cases}
              1 & {\bf g} \text{ unitary}\\
              \ast & {\bf g} \text{ anti-unitary}
            \end{cases}.
        \end{equation}
        We will also use
        \begin{equation}
            s_1({\bf g}) = \begin{cases}
              0 & {\bf g} \text{ unitary}\\
              1 & {\bf g} \text{ anti-unitary} 
            \end{cases}
            \label{eqn:s1Def}
        \end{equation}
        The map $s_1$ is a group homomorphism, i.e. $s_1 \in Z^1(G_b,\Z_2)$. 
        
        The maps $\rho_{\bf g}$ define the natural isomorphisms
        \begin{equation}
            \kappa_{\bf g,h} = \rho_{\bf gh}\rho_{\bf h}^{-1}\rho_{\bf g}^{-1}
        \end{equation}
        which have actions on fusion vertices given by
        \begin{equation}
            \kappa_{\bf g,h}\ket{a,b;c;\mu}=\kappa_{\bf g,h}(a,b;c)\ket{a,b;c; \mu}.
        \end{equation}
        Here
        \begin{equation}
            \kappa_{\bf g,h}(a,b;c) = \frac{\beta_a({\bf g,h})\beta_b({\bf g,h})}{\beta_c{\bf g,h}} =U_{\bf g}^{-1}(a,b;c)U_{\bf h}(\,^{\overline{\bf g}}a,\,^{\overline{\bf g}}b ; \,^{\overline{\bf g}}c)^{-\sigma({\bf g})}U_{\bf gh}(a,b;c)
        \end{equation}
        and $\beta_a({\bf g,h})$ are phases defining $\kappa_{\bf g,h}$ as a natural isomorphism. The $\beta_a$ define the phases
        \begin{equation}
            \Omega_a({\bf g,h,k}) = \frac{\beta^{\sigma({\bf g})}_{\,^{\overline{\bf g}}a}({\bf h,k}) \beta_a({\bf g,hk})}{\beta_a({\bf g,h})\beta_a({\bf gh,k})}
        \end{equation}
        which can be shown to obey the fusion rules, in the sense
        \begin{equation}
            \Omega_a \Omega_b = \Omega_c \text{ whenever } N_{ab}^c>0.
        \end{equation}
        These phases define, for modular $\mathcal{B}$, an obstruction $[\coho{O}]\in \H^3(G,\A)$ to localizing $G$ on the anyons, where $\A \subset \mathcal{B}$ is the group of Abelian anyons of the theory; see standard references, e.g.~\cite{barkeshli2019}, for further details.
        
        If said obstruction vanishes, then one can define symmetry fractionalization on $\mathcal{B}$, which amounts to a choice of phases $\eta_a({\bf g,h}) \in \U$ which satisfy the following consistency conditions:
        \begin{align}
            \eta_{\,^{\overline{\bf g}}a}({\bf h,k})^{\sigma({\bf g})}\eta_a({\bf g,hk})&= \eta_a({\bf g,h})\eta_a({\bf gh,k}) \label{eqn:etaConsistency}\\
            \frac{\eta_c({\bf k,l})}{\eta_a({\bf k,l})\eta_b({\bf k,l})} U_{\bf kl}(a,b;c)_{\mu \nu} &= \sum_\lambda U_{\bf l}(\,^{\overline{\bf k}}a,\,^{\overline{\bf k}}b;\,^{\overline{\bf k}}c)_{\mu \lambda}U_{\bf k}(a,b;c)_{\lambda \nu} \label{eqn:UEtaConsistency}
        \end{align}
        
        Symmetry fractionalization can equivalently be specified by a set of phases $\omega_a({\bf g,h})$ which obey the fusion rules and satisfy
        \begin{equation}
            \Omega_a({\bf g,h,k}) = \frac{\omega^{\sigma({\bf g})}_{\,^{\overline{\bf g}}a}({\bf h,k}) \omega_a({\bf g,hk})}{\omega_a({\bf g,h})\omega_a({\bf gh,k})} \equiv (d\omega)_a({\bf g,h,k})
            \label{eqn:OmegaEqualsDomega}.
        \end{equation}
        The relationship between the $\eta_a$ and $\omega_a$ descriptions is
        \begin{equation}
            \eta_a({\bf g,h}) = \frac{\beta_a({\bf g,h})}{\omega_a({\bf g,h})}.
            \label{eqn:etaDef}
        \end{equation}
        
        In all of the above, we have taken a fixed representative $\rho_{\bf g}$ of the class $[\rho_{\bf g}]$. If $\rho_{\bf g}$ is modified by a natural isomorphism given by the phases $\gamma_a({\bf g})$, then we obtain gauge-equivalent data
        \begin{align}
            U'_{\bf g}(a,b;c)_{\mu \nu} &= \frac{\gamma_a({\bf g}) \gamma_b({\bf g})}{\gamma_c({\bf g})}U_{\bf g}(a,b;c)_{\mu \nu}\\
            \eta'_a({\bf g,h}) &= \frac{\gamma_a({\bf gh})}{\left[\gamma_{\,^{\overline{\bf g}}a}({\bf h})\right]^{\sigma({\bf g})}\gamma_a({\bf g})}\eta_a({\bf g,h}).
        \end{align}
        Separately, there is gauge freedom
        \begin{align}
            \beta_a({\bf g,h}) & \rightarrow \beta_a({\bf g,h})\nu_a({\bf g,h}) \nonumber \\
            \Omega_a({\bf g,h,k}) &\rightarrow \Omega_a({\bf g,h,k}) (d\nu)_a({\bf g,h,k}) \nonumber \\
            \omega_a({\bf g,h}) &\rightarrow \omega_a({\bf g,h})\nu_a({\bf g,h})
            \label{eqn:OmegaomegaGaugeFreedom}
        \end{align}
        where $\nu_a({\bf g,h})$ is a phase obeying the fusion rules.
        
        For a fixed map $[\rho_{\bf g}]$, the set of symmetry fractionalization patterns form a torsor over $\H^2(G,\A)$. Specifically, given consistent symmetry fractionalization data $\eta_a({\bf g,h})$ and an element $\coho{t}({\bf g,h})\in Z^2(G,\A)$, then one obtains a new symmetry fractionalization pattern
        \begin{equation}
            \widehat{\eta}_a({\bf g,h})=M_{a,\cohosub{t}({\bf g,h})}\eta_a({\bf g,h}).
        \end{equation}
        One can check that up to gauge transformations, $\widehat{\eta}$ depends only on the cohomology class $[\coho{t}]\in \H^2(G,\A)$ and that different cohomology classes produce gauge-inequivalent symmetry fractionalization patterns.

    \section{Symmetries in fermionic topological phases}
    \label{sec:fermionicSymm}
        
            In this section, we briefly review fermionic symmetries, the classification of (3+1)D fermionic SPTs, and the results of~\cite{bulmash2021} on fermionic symmetry fractionalization in fermionic topological phases, and also define a map $\widecheck{\Upsilon}_\psi$ in Eq.~\ref{eqn:checkUpsilonPsiDef} which will be useful later.\footnote{The results on fermionic symmetry fractionalization reviewed here also appeared in~\cite{aasen21ferm}, the first version of which appeared on the arXiv at the same time as both the first version of this paper and \cite{bulmash2021}. } 
        
        We assume that we are describing a system whose Hilbert space decomposes into a tensor product of local Hilbert spaces which include fermionic degrees of freedom. Further, we assume the dynamics of the system are given by a local Hamiltonian with an energy gap such that the system is in the fermionic topological phase associated to $\mathcal{C}$. In order for the formalism to describe anomalous fermionic SETs, we use the term ``symmetry fractionalization of a fermionic topological phase" to refer only to fractionalization data on the super-modular category $\mathcal{C}$. 
        
        \subsection{Fermionic symmetries}
        \label{subsec:fermionicSymm}
        
        Fermionic systems always have a special symmetry, fermion parity symmetry $(-1)^F$, which generates a central $\Z_2$ subgroup $\Z_2^f$ of the full symmetry group $G_f$. Define the ``bosonic" symmetry group $G_b = G_f / \Z_2^f$; then the symmetry generators restricted to their action on bosonic operators form a representation of $G_b$. An alternate characterization of $G_f$ is as a $\Z_2^f$ central extension of $G_b$ via the short exact sequence
        \begin{equation}
            1 \rightarrow \Z_2^f \rightarrow G_f \rightarrow G_b \rightarrow 1
        \end{equation}
        and a cocycle $\omega_2 \in Z^2(G_b,\Z_2)$.
        
        We will reserve the notation $\omega_2$ for a cocycle valued in $\{\pm 1\}\simeq \Z_2$ and use the notation $\tilde{\omega}_2$ for the $\{0,1\}$-valued additive parameterization of $\omega_2$, i.e. define
        \begin{equation}
            \omega_2({\bf g,h})=(-1)^{\tilde{\omega}_2({\bf g,h})}.
            \label{eqn:w2Def}
        \end{equation}
        
        Physically, we can write a set of operators $R_{\bf g}$ for ${\bf g} \in G_b$ which implement the action of $G_b$ on the microscopic Hilbert space. Since the symmetry group on the full fermionic Hilbert space is actually $G_f$, these operators multiply projectively,
        \begin{equation}
            R_{\bf g}R_{\bf h} = \left(\omega_2({\bf g,h})\right)^F R_{\bf gh}.
            \label{eqn:projectiveRg}
        \end{equation}
        
        The cocycle $\omega_2 \in Z^2(G_b,\Z_2)$ determines a decomposition $G_f = G_b\times \Z_2^f$ as sets, but all choices of cocycle representative in the same cohomology class $[\omega_2]$ lead to isomorphic groups $G_f$. Physically, modifying $\omega_2 \rightarrow \omega_2 \times d\phi$ for $\phi \in C^1(G_b,\Z_2)$ changes
        \begin{equation}
            R_{\bf g} \rightarrow (\phi({\bf g}))^F R_{\bf g},
        \end{equation}
        where we are taking $\phi \in \{\pm 1\}$.
        Such a transformation changes the physical meaning of $R_{\bf g}$; for example, if $G_b = \Z_2^{\bf T}$, the time-reversal operator ${\bf T}$ is physically distinct from $(-1)^F {\bf T}$, and these operators should not be interchanged. Therefore, a microscopic realization of $G_f$ symmetry will in general specify a cocycle representative $\omega_2$, not just its cohomology class $[\omega_2]$.
        
        \subsection{General classification of (3+1)D fermionic SPTs and 't Hooft anomalies in (2+1)D fermionic systems}
        \label{subsec:fermionSPT}
        
        The group of (3+1)D fermionic SPTs, which also defines 't Hooft anomalies for (2+1)D fermionic topological phases, has been classified in recent years through a variety of approaches. One general approach is through the cobordism classification \cite{kapustin2015Cobordism,kapustin2017}, where (3+1)D fermionic SPTs are classified using bordism groups of 4-manifolds equipped with $G_b$ gauge fields and a generalized spin structure. 
        
        An alternative general classification method was recently provided by Wang and Gu \cite{WangGu}, based on decorating symmetry defects of varying codimension with lower dimensional fermionic invertible topological phases. Below we summarize the Wang-Gu consistency equations and equivalence relations. We note that our treatment of the equivalence relations below differs slightly from Wang-Gu, as we will explain. 
        
        The calculation of the bordism groups mentioned above can be performed by various spectral sequence methods. These methods typically give the resulting FSPT classification in terms of a group extension involving subgroups of $\mathcal{H}^1(G_b, \Z_{\bf T})$, $\mathcal{H}^2(G_b, \Z_2)$, $\mathcal{H}^3(G_b, \Z_2)$, and $\mathcal{H}^4(G_b, \U)$. The Wang-Gu classification provides a partial solution to the above spectral sequence computation for general groups $G_b$.\footnote{It is a partial solution because Wang-Gu do not derive the group multiplication law for stacking FSPT phases, and thus do not fully solve the group extension problem.} 
        
        According to Wang and Gu, (3+1)D fermionic SPTs can be specified by four layers of data,
        \begin{align}
            (n_1, n_2, n_3, \nu_4) \in Z^1(G_b, \Z_{\bf T}) \times C^2(G_b, \Z_2) \times C^3(G_b, \Z_2) \times C^4(G_b, U(1)) . 
        \end{align}
        We will refer to the subscript of $n_i$ and $\nu_4$ as the layer index. 
        
        This set of data is subject to consistency conditions:
        \begin{align}
            dn_2 &= \tilde{\omega}_2 \stdcup n_1 + s_1 \stdcup n_1 \stdcup n_1
            \nonumber \\
            dn_3 &= \tilde{\omega}_2 \stdcup n_2 + n_2 \stdcup n_2 + s_1 \stdcup (n_2 \stdcup_1 n_2) . 
            \nonumber \\
            d\nu_4 &= \mathcal{O}_5[n_3]. 
        \end{align}
        with $s_1$ defined in Eq.~\ref{eqn:s1Def}.
        The precise formula for $\mathcal{O}_5$ is unimportant for our purposes and can be found in \cite{WangGu}. Physically, the different layers of data correspond to decorating defects of various codimension with lower-dimensional invertible fermionic topological phases; for example, $n_1$ corresponds to decorating time-reversal domain walls with a $c_- = 1/2$ invertible fermionic phase (e.g. a $p_x + ip_y$ superconductor); $n_2$ corresponds to decorating codimension-2 junctions with Kitaev-Majorana chains; and $n_3$ corresponds to decorating codimension-3 junctions with fermions. 
        
        The main fact about these equations that is of relevance to us in this paper is that data of the $i$th layer is an $i$-cocycle if the data of the lower layers vanish. That is, if $n_1 = 0$, then $n_2 \in Z^2(G_b, \Z_2)$; if $n_1,n_2 = 0$, then $n_3 \in Z^3(G_b, \Z_2)$, and if $n_1,n_2,n_3 = 0$, then $\nu_4 \in Z^4(G_b, U(1))$.

        In addition to the above consistency equations, the data above is also subject to a number of equivalence relations. Each layer can change by a coboundary, and also an additional equivalence, as we explain. \footnote{We note that in the equivalences summarized here for a given layer, we allow the possibility for the higher layer data to change as well. The equivalence relations of Ref. \cite{WangGu} imply that the higher layer data does not change under these equivalences, however Ref. \cite{manjunath21inv} found that in (2+1)D the higher layer data does change under equivalences of a given layer. We expect a similar phenomenon to generally occur in (3+1)D as well.} 
        
        \subsubsection{1st layer equivalence}
        
        We can change $n_1$ by a coboundary:
        \begin{align}
            (n_1, n_2, n_3, \nu_4) \simeq (n_1 + d b_0, n_2', n_3', \nu_4') ,
        \end{align}
        where $b_0 \in C^0(G_b, \Z_{\bf T})$. Note that when $n_1$ changes by a coboundary, the higher layer data may in principle also change. 
        
        If we ignore the higher layer data, then, we see that each (3+1)D FSPT determines an element 
        \begin{align}
            [n_1] \in \mathcal{H}^1(G_b, \Z_{\bf T} ). 
        \end{align}
        
        \subsubsection{2nd layer equivalence and \texorpdfstring{$\Gamma^2$}{Gamma2}}
        
        There are the following 2nd layer equivalences:
        \begin{align}
            (n_1, n_2, n_3, \nu_4) & \simeq (n_1, n_2 + \tilde{\omega}_2, n_3', \nu_4')\\
            &\simeq (n_1, n_2 + db_1, n_3'', \nu_4'')  \text{ if } G_b \text{ unitary},
         \end{align}
        where $b_1 \in C^1(G_b, \Z_2)$ and
        recalling $\omega_2 = (-1)^{\tilde{\omega}_2}$. Note that under changing the $n_2$, the higher level data $n_3$ and $\nu_4$ can change to some other consistent data $n_3', \nu_4'$ or $n_3'', \nu_4''$, whose precise form has not been computed and is not relevant for our purposes. 
        
        It is useful to denote
        \begin{equation}
            \Gamma^2 = \begin{cases}
              \{1,[\omega_2]\} & G_b \text{ unitary}\\
              \{1\} & G_b \text{ contains anti-unitary symmetries}
            \end{cases}
        \end{equation}
        where $\Gamma^2 \subset \mathcal{H}^2(G_b, \Z_2)$. We also define the group homomorphism
        \begin{align}
            q_{\Gamma^2} : \mathcal{H}^2(G_b, \Z_2) \rightarrow \mathcal{H}^2(G_b, \Z_2) / \Gamma^2,
        \end{align}
        which we will use later. 
        
        The implication of the above equivalence is that if $n_1 = 0$ and we forget about $n_3, \nu_4$, we can define an equivalence class
        \begin{align}
        [n_2] \in \mathcal{H}^2(G_b, \Z_2)/\Gamma^2.
        \end{align}
        
        In other words, any (3+1)D FSPT with $n_1 = 0$ defines a class $[n_2] \in \mathcal{H}^2(G_b, \Z_2)/\Gamma^2$. 
        
        \subsubsection{3rd layer equivalence and \texorpdfstring{$\Gamma^3$}{Gamma3}}
        
        There are the following 3rd layer equivalences:
        \begin{align}
            (n_1, n_2, n_3, \nu_4) & \simeq (n_1, n_2, n_3 + \chi_3, \nu_4')
            \simeq (n_1, n_2, n_3 + db_2, \nu_4'') ,
        \end{align}
        where $b_2 \in C^2(G_b, \Z_2)$. 
        Here $\chi_3 \in Z^3(G_b, \Z_2)$ consists of 3-cocycles that satisfy 
        \begin{align}
        \label{Gamma3Eq}
            \chi_3 = \tilde{\omega}_2 \stdcup \lambda_1 + s_1 \stdcup \lambda_1 \stdcup \lambda_1 + \left\lfloor \lambda_0/2 \right\rfloor  ( \tilde{\omega}_2 \stdcup_1 \tilde{\omega}_2) \;\;\;(\text{mod } 2),
        \end{align}
        for any choice of $\lambda_1 \in Z^1(G_b, \Z_2)$ and $\lambda_0 \in \Z$. 
        
        Formally, we can define the group $\Gamma^3 \subset \mathcal{H}^3(G_b, \Z_2)$, where all representative $3$-cocycles have the form $\chi_3$ given in Eq.~\ref{Gamma3Eq}. It is useful to then define the group homomorphism
        \begin{align}
            q_{\Gamma^3} : \mathcal{H}^3(G_b, \Z_2) \rightarrow \mathcal{H}^3(G_b,\Z_2)/\Gamma^3. 
        \end{align}
        
        Similar to the case of the 2nd layer, the implication of the above equivalence is that if $n_1, n_2 = 0$ and we ignore $\nu_4$, then  we can define an equivalence class 
        \begin{align}
        [n_3] \in \mathcal{H}^3(G_b, \Z_2)/\Gamma^3 .
        \end{align}
        In other words, any (3+1)D FSPT with $n_1, n_2 = 0$ defines a class $[n_3] \in \mathcal{H}^3(G_b, \Z_2)/\Gamma^3$. 
        
        \subsubsection{4th layer equivalence and \texorpdfstring{$\Gamma^4$}{Gamma4}}

        There exist the following 4th layer equivalences:        
        \begin{align}
        (n_1, n_2, n_3, \nu_4) & \simeq (n_1, n_2, n_3, \nu_4 \chi_4) \simeq (n_1, n_2, n_3, \nu_4 d \epsilon_3)
        \end{align}
        for $\epsilon_3 \in C^3(G_b, \Z_2)$. 
        Here $\chi_4$ is any 4-cocycle in a group $\Gamma^4$ with a rather involved definition that can be found in~\cite{WangGu} but is unimportant for our present purposes.  It will suffice to know that $\Gamma^4$ contains a subgroup of the form
        \begin{equation}
            \{[\lambda_2] \stdcup [\lambda_2] + [\omega_2] \stdcup [\lambda_2] \; | \; [\lambda_2] \in \H^2(G_b,\Z_2)\} \subset \Gamma^4
            \label{eqn:Gamma4Subgroup}
        \end{equation}
        We define the group homomorphism
        \begin{align}
            q_{\Gamma^4}: \mathcal{H}^4(G_b, \U) \rightarrow \mathcal{H}^4(G_b, \U)/\Gamma^4. 
            \label{eqn:qGamma4}
        \end{align}
        
        Physically, $\Gamma^4$ is the group of bosonic (3+1)D SPTs with $G_b$ symmetry that are trivial when viewed as a fermionic SPT with $G_f$ symmetry. Equivalently, $\Gamma^4$ is the group of ``anomalous" fermionic (2+1)D SPTs; these have the property that the surface of a bosonic (3+1)D SPT characterized by a 4-cocycle in $\Gamma^4$ can have a topologically trivial gapped symmetric gapped (2+1)D surface, if fermions transforming under $G_f$ symmetry are introduced to the surface.  
        
        We see that any (3+1)D FSPT for which $n_1, n_2, n_3 = 0$ defines an element 
        \begin{align}    
        [\nu_4] \in \mathcal{H}^4(G_b, \U) / \Gamma^4. 
        \end{align}
        
        \subsection{Fermionic symmetry fractionalization}
        \label{subsec:fermionicSymmFrac}
        
        The first step of defining $G_f$ symmetry fractionalization is to assign an autoequivalence $[\rho_{\bf g}]$ of $\mathcal{C}$ to each element ${\bf g} \in G_b$. Topological autoequivalences are well-defined for any BFC regardless of modularity, so autoequivalences of a super-modular category $\mathcal{C}$ are also well-defined. This assignment defines the $U$-symbols as given in Eq.~\ref{eqn:Udef}.
        
        There is, however, a physical constraint on the choice of autoequivalence. The symmetry operator $R_{\bf g}$ is defined on the physical, microscopic Hilbert space, and we will be seeking to localize $R_{\bf g}$. Then if $\gamma_{i,{\bf r}}$ is a basis of (Majorana) fermion operators at position ${\bf r}$, the Hilbert space defines matrices $\tilde{U}_{ij}({\bf g},{\bf r})$ such that
        \begin{equation}
            R_{\bf g}\gamma_{i,{\bf r}}R_{\bf g}^{-1}= \sum_j \tilde{U}_{ij}({\bf g, r}) \gamma_{j,{\bf r}},
            \label{eqn:localFermionTransform}
        \end{equation}
        where $i$ and $j$ label elements of the basis of fermionic operators.
        
        As shown in~\cite{bulmash2021}, compatibility of a representative autoequivalence $\rho_{\bf g}$ of $\mathcal{C}$ with Eq.~\ref{eqn:localFermionTransform} constrains
        \begin{equation}
            \rho_{\bf g}(\ket{\psi, \psi;1}) = \ket{\psi, \psi; 1},
        \end{equation}
        or equivalently, for all ${\bf g} \in G_b$,
        \begin{equation}
            U_{\bf g}(\psi, \psi;1) = +1.
            \label{eqn:UpsiConstraint}
        \end{equation}
        
        In the bosonic case, the topological autoequivalence may be redefined by a natural isomorphism of the form Eq.~\ref{eqn:natIso}. Clearly we have less freedom in the fermionic case; maintaining Eq.~\ref{eqn:UpsiConstraint} requires natural isomorphisms to have $\gamma_\psi \in \{\pm 1\}$. In fact, as shown in~\cite{bulmash2021}, modifying $R_{\bf g}$ with a natural isomorphism with $\gamma_\psi = -1$ amounts to redefining $R_{\bf g} \rightarrow (-1)^F R_{\bf g}$, which, as discussed in Sec.~\ref{subsec:fermionicSymm}, is a physical change to the system and not a gauge redundancy. Therefore, the redundancy in topological autoequivalences for fermionic symmetries is not given by arbitrary natural isomorphisms, but instead by ``locality-respecting" natural isomorphisms which have $\gamma_\psi = +1$. We denote the group of topological autoequivalences of $\mathcal{C}$ which obey Eq.~\ref{eqn:UpsiConstraint}, modulo locality-respecting natural isomorphisms, as $\Aut_{LR}(\mathcal{C})$\footnote{It is not hard to show that every element of $\Aut(\mathcal{C})$ has a representative which obeys Eq.~\ref{eqn:UpsiConstraint}. Accordingly, one can check that, depending on details which will be discussed shortly, as a group $\Aut_{LR}(\mathcal{C})$ is either isomorphic to $\Aut(\mathcal{C})$ or double-covers $\Aut(\mathcal{C})$.}.
        
        As such, fermionic symmetry actions are specified by a map
        \begin{equation}
            [\rho_{\bf g}]: G_b \rightarrow \Aut_{LR}(\mathcal{C})
        \end{equation}
        which obeys
        \begin{equation}
            [\kappa_{\bf g,h}] [\rho_{\bf g}] [\rho_{\bf h}] = [\rho_{\bf gh}] 
        \end{equation}
        for some (not-necessarily locality-respecting) natural isomorphism $[\kappa_{\bf g,h}]$. If $\Aut_{LR}(\mathcal{C})$ is isomorphic to $\Aut(\mathcal{C})$, then $[\kappa] = [\text{Id}]$, where Id is the identity map, and $[\rho_{\bf g}]$ is a group homomorphism.
        
        There is some subtlety in demanding that locality-respecting natural isomorphisms have $\gamma_\psi = +1$ because of the redundancy Eq.~\ref{eqn:zetaRedundancy} in natural isomorphisms. Define an Abelian group $K(\mathcal{C})$ consisting of maps $\zeta_a$ that obey the fusion rules, that is,
        \begin{equation}
            K(\mathcal{C}) = \{\zeta: \text{anyon labels} \rightarrow \U \phantom{i} | \phantom{i} \zeta_a \zeta_b = \zeta_c \text{ whenever } N_{ab}^c > 0 \}.
        \end{equation}
        Clearly $\zeta_\psi \in \{\pm 1\}$, which provides a natural $\Z_2$ grading on $K(\mathcal{C})$:
        \begin{equation}
            K(\mathcal{C}) = K_+(\mathcal{C}) \oplus K_-(\mathcal{C})
        \end{equation}
        where $K_\pm(\mathcal{C})$ consists of maps with $\zeta_\psi = \pm 1$. It was proven in~\cite{bulmash2021} that, given any minimal modular extension $\C$ of $\mathcal{C}$, every element $\zeta_a$ of $K(\mathcal{C})$ can be written
        \begin{equation}
            \zeta_a = M_{a,x}
        \end{equation}
        for some $x\in \C$, and in particular, $x \in \A$ if $\zeta \in K_+(\mathcal{C})$. Hence 
        \begin{equation}
        K_+(\mathcal{C}) \simeq \A/\{1,\psi\},
        \end{equation}
        where $\A \subset \mathcal{C}$ is the group of Abelian anyons of $\mathcal{C}$ and $\{1,\psi\}$ is the $\Z_2$ subgroup generated by the transparent fermion $\psi$.
        
        If $K_-(\mathcal{C})$ is nonempty, then every natural isomorphism which preserves Eq.~\ref{eqn:UpsiConstraint} is equivalent to a locality-respecting natural isomorphism.
        
        There is a canonical natural isomorphism $\Upsilon_\psi$ of the same form as Eq.~\ref{eqn:natIso} with
        \begin{equation}
            \Upsilon_\psi(\ket{a,b;c}) = \frac{\gamma_a \gamma_b}{\gamma_c}\ket{a,b;c} \text{ with } \gamma_a = \begin{cases} -1 & a = \psi\\
            1 & \text{ any other }a \in \mathcal{C} 
            \end{cases}
            \label{eqn:upsilonPsiDef}
        \end{equation}
        We immediately see that $\Upsilon_\psi$ respects locality if and only if $K_-(\mathcal{C})$ is nonempty. If $\Upsilon_\psi$ does not respect locality, then $[\Upsilon_\psi] \neq [1]$ as elements of $\Aut_{LR}(\mathcal{C})$.
        
        Many of the results and casework in this paper depend on whether or not $\Upsilon_\psi$ respects locality, so it is useful to summarize some characterizations from~\cite{bulmash2021} of when $\Upsilon_\psi$ respects locality. The following are equivalent:
        \begin{itemize}
            \item $\Upsilon_\psi$ respects locality
            \item $K(\mathcal{C})/K_+(\mathcal{C}) \simeq\Z_2$ (which means $K_-(\mathcal{C})$ is nonempty)
            \item As groups, $\Aut(\mathcal{C})\simeq \Aut_{LR}(\mathcal{C})$
            \item There exists a minimal modular extension of $\mathcal{C}$ which contains an Abelian fermion parity vortex
            \item There exists a set of phases $\zeta_a$ which obey the fusion rules and have $\zeta_\psi = -1$.
        \end{itemize}
        
        Conversely, the following are also equivalent:
        \begin{itemize}
            \item $\Upsilon_\psi$ violates locality
            \item $K(\mathcal{C}) = K_+(\mathcal{C}) \simeq \mathcal{A}/\{1,\psi\}$
            \item As groups, $\Aut_{LR}(\mathcal{C})/\Z_2 \simeq \Aut(\mathcal{C})$
            \item Any set of phases $\zeta_a$ which obey the fusion rules must have $\zeta_\psi = +1$.
        \end{itemize}
        
        Another useful fact is that if $\Upsilon_\psi$ respects locality, then exactly half of the minimal modular extensions $\C$ contain only $v$-type vortices and half contain only $\sigma$-type vortices.
        
        Also, if elements of $\Aut(\mathcal{C})$ are uniquely determined by their permutation action on the anyons (as is true in many theories of physical interest), then the same holds for $\Aut_{LR}(\mathcal{C})$ if and only if $\Upsilon_\psi$ respects locality.
        
        One can also define a natural isomorphism $\widecheck{\Upsilon}_\psi$ on a minimal modular extension $\C$ analagously:
        \begin{equation}
        \widecheck{\Upsilon}_\psi(\ket{a,b;c}) = \frac{\gamma_a \gamma_b}{\gamma_c}\ket{a,b;c} \text{ with } \gamma_a = \begin{cases} -1 & a = \psi\\
            1 & \text{ any other } a \in \C
            \end{cases}
            \label{eqn:checkUpsilonPsiDef}
        \end{equation}
        Similar to the case of $\Upsilon_\psi$, the following are equivalent: (i) The map $\widecheck{\Upsilon}_\psi$ respects locality; (ii) $\C$ contains an Abelian fermion parity vortex; (iii) There exists a set of phases $\widecheck{\zeta}_a$ on $\C$ which obey the fusion rules and have $\widecheck{\zeta}_\psi = -1$.
        
        After specifying the group homomorphism, one must compute the obstruction to symmetry localization. There are two obstructions. The first is the ``bosonic" obstruction to defining any $G_b$ symmetry fractionalization whatsoever on $\mathcal{C}$ and is valued in $\H^3(G_b,K(\mathcal{C}))$; if $\Upsilon_\psi$ violates locality, then we may characterize $K(\mathcal{C}) \simeq \A/\{1,\psi\}$, but if $\Upsilon_\psi$ respects locality, there is nothing further to say in general. The derivation is very similar to the bosonic case; see~\cite{bulmash2021} for details.
        
        At this point one may define symmetry fractionalization data $\eta_a({\bf g,h})$ subject to the usual consistency conditions Eqs.~\ref{eqn:etaConsistency}-\ref{eqn:UEtaConsistency}. However, compatibility of the symmetry localization ansatz with Eq.~\ref{eqn:localFermionTransform} and the full $G_f$ symmetry in Eq.~\ref{eqn:projectiveRg} requires~\cite{bulmash2021} the constraint
        \begin{equation}
            \eta_\psi({\bf g,h})=\omega_2({\bf g,h}) \in Z^2(G_b,\Z_2).
            \label{eqn:etaPsiConstraint}
        \end{equation}
        Regarding autoequivalences as gauge-equivalent only if they differ by \textit{locality-respecting} natural isomorphisms means that gauge transformations preserve Eq.~\ref{eqn:etaPsiConstraint}.
        
        The existence of a symmetry fractionalization pattern obeying Eq.~\ref{eqn:etaPsiConstraint} is subject to a ``fermionic" symmetry localization obstruction which is valued in $Z^2(G_b,\Z_2)$ if $\Upsilon_\psi$ violates locality and is valued in $\H^3(G_b,K_+(\mathcal{C}))=\H^3(G_b,\A/\{1,\psi\})$ in $\Upsilon_\psi$ respects locality.
        
        To summarize, symmetry fractionalization of a fermionic symmetry group $G_f$ on a super-modular category $\mathcal{C}$ is given by a homomorphism $[\rho_{\bf g}]:G_b \rightarrow \Aut_{LR}(\mathcal{C})$ (which defines the $U$-symbols $U_{\bf g}(a,b;c)$) and a choice of data $\eta_a({\bf g,h})$, subject to the same consistency conditions Eqs.~\ref{eqn:UFConsistency},\ref{eqn:URConsistency},\ref{eqn:etaConsistency},\ref{eqn:UEtaConsistency} as in the bosonic case. This data is subject to the constraints
        \begin{align}
            U_{\bf g}(\psi,\psi;1)&=+1\\
            \eta_\psi({\bf g,h})&= \omega_2({\bf g,h}),
        \end{align}
        and symmetry action gauge transformations are restricted to be locality-respecting, in that they must have
        \begin{equation}
            \gamma_\psi({\bf g})=+1.
        \end{equation}
        The existence of consistent symmetry fractionalization requires two obstructions to vanish: a bosonic symmetry localization obstruction $[\coho{O}_b]\in \H^3(G_b,\A/\{1,\psi\})$ and a fermionic symmetry localization obstruction $[\coho{O}_f]$ which is valued in $Z^2(G_b,\Z_2)$ if $\Upsilon_\psi$ violates locality and which is valued in $\H^3(G_b,\A/\{1,\psi\})$ if $\Upsilon_\psi$ respects locality.
        
    \section{\texorpdfstring{$\mathcal{H}^1(G_b,\Z_{\bf T})$}{H1(Gb, ZT)} obstruction}
    \label{sec:H1}
        
        We prove Theorem~\ref{thm:H1}.
        
        \begin{proof}
        Using the Gauss sum,
        \begin{equation}
           c_{\nu_2}=- c_{\nu_1} \mod 8.
        \end{equation}
        We have defined 
        \begin{equation}
        c_{\nu_2} = c_{\nu_1}+o_1({\bf g})/2 \mod 8
        \label{eqn:cRelation}
        \end{equation}
        where $o_1({\bf g})$ must, by the 16-fold way, be an integer. Hence
        \begin{equation}
            o_1({\bf g}) = -4c_{\nu_1} \mod 16.
        \end{equation}
        Hence $c_{\nu_1} = 0 \mod 1/4$. Now, suppose that $(\nu_3,\nu_4)$ also satisfy Eq.~\ref{eqn:cRelation} with a different integer $n'({\bf g})$. Then we could run the same argument to obtain
        \begin{equation}
            o_1'({\bf g}) = -4c_{\nu_3} \mod 16.
        \end{equation}
        But $c_{\nu_3}-c_{\nu_1} \in \Z/2$ by the 16-fold way, which means
        \begin{equation}
            o_1'({\bf g})-o_1({\bf g}) \in 2\Z.
        \end{equation}
        \end{proof}
        
        It immediately follows that if $o_1({\bf g}) = 1 \mod 2$, then it is not possible to have a lift $\rho_{\bf g}:\C_{\nu_0}\rightarrow\C_{\nu_0}$. This is a rather familiar statement because $o_1({\bf g})=1$ implies $c_{\nu_i}\in \Z \pm 1/4$, which can never be left invariant by an anti-unitary symmetry. Hence $[o_1] \in \H^1(G_b,\Z_{\bf T})$ obstructs the ability to lift $\rho_{\bf g}$ to a map which is a true symmetry of a minimal modular extension $\C$, rather than a map between two distinct minimal modular extensions. Conjecture~\ref{conj:H1ActualObstruction} states that $[o_1]$ is the only such obstruction.
        
        \section{\texorpdfstring{$\mathcal{H}^2(G_b, \ker r)$}{H2(Gb, ker r)} obstruction}
        \label{sec:H2}
    
        Now let us suppose that the $\mathcal{H}^1$ obstruction defined in the preceding section vanishes. Here we will find that there is an obstruction to lifting the maps $[\rho_{\bf g}]$ to $[\widecheck{\rho}_{\bf g}]$ in a way which appropriately satisfies the group structure.
        
        In what follows, we fix a choice of minimal modular extension $\C_{\nu}$ and suppress the $\nu$ index. In general the obstruction $o_2$ that we will find depends on $\nu$; we will discuss this dependence in Sec.~\ref{sec:o2independence}.
        
        We will assume for now that $\widecheck{\Upsilon}_\psi$ (defined in Eq. \ref{eqn:checkUpsilonPsiDef}) respects locality. We will discuss cases where it violates locality afterwards in Section \ref{UpsiCheckViolates}. Since $\widecheck{\Upsilon}_\psi$ and therefore  $\Upsilon_\psi$ both respect locality in the first part of our discussion, $[\kappa_{\bf g,h}]$ is always the identity and so the map
        $[\rho_{\bf g}] :G_b \rightarrow \Aut_{LR}(\mathcal{C})$ is a group homomorphism. Since the $\mathcal{H}^1$ obstruction vanishes, for each ${\bf g}\in G_b$, one can define a consistent lift $[\widecheck{\rho}_{\bf g}]$ on a fixed minimal modular extension $\C$, that is, $r([\widecheck{\rho}_{\bf g}])=[\rho_{\bf g}]$, where $r$ is the restriction map
        \begin{align}
            r : \Aut_{LR}(\C) \rightarrow \Aut_{LR}(\C)|_\mathcal{C} \subseteq \Aut_{LR}(\mathcal{C}). 
        \end{align}
        However, the lifted map $[\widecheck{\rho}_{\bf g}]$ may not be a group homomorphism. Our aim is to show that there is an obstruction to finding a lift $[\widecheck{\rho}_{\bf g}]$ which is a group homomorphism $G_b \rightarrow \Aut_{LR}(\C)$, and this obstruction is valued in $\H^2(G_b,\ker r)$. We then relate this obstruction to the $\H^2(G_b,\Z_2)$ part of the 't Hooft anomaly for fermionic SETs.
        
        \subsection{Defining the obstruction}
        \label{subsec:definingObstruction}
        
        Suppose we have a lift $[\widecheck{\rho}_{\bf g}]$ of a general autoequivalence $[\rho_{\bf g}]$ of $\mathcal{C}$. In general $r$ may have a nontrivial kernel, so $\widecheck{\rho}_{\bf g}$ can be composed with any element of $\ker r$ to obtain another, equally valid lift. Although $\rho_{\bf g}$ and $\widecheck{\rho}_{\bf g}$ may in general be antiunitary, elements of $\ker r$ are automatically unitary.
        
        We will show in Sec.~\ref{subsec:kerR} that the permutation action of all elements of $\ker r$ on anyon labels commute with each other, although they in general need not commute with the permutation action of $[\rho_{\bf g}]$. In the case where elements of $\Aut(\C)$ are completely determined by their permutation action on the anyons, then this implies that $\ker r$ is Abelian. We will assume that $\ker r$ is Abelian in general. 
        
        Let us consider
        \begin{align}        
        o_2({\bf g}, {\bf h}) :=  \widecheck{\rho}_{\bf gh}\widecheck{\rho}_{\bf h}^{-1}\widecheck{\rho}_{\bf g}^{-1}.
        \label{eqn:o2Def}
        \end{align}
        By inspection, modifying a representative lift $\widecheck{\rho}_{\bf g}$ by a locality-respecting natural isomorphism modifies $o_2$ by a locality-respecting natural isomorphism, so this equation is also well-defined in $\Aut_{LR}(\C)$.
        
        We warn the reader that we will overload notation so that $o_2({\bf g,h})$ can refer both to a topological autoequivalence and its equivalence class in $\Aut_{LR}(\C)$ (after modding out by natural isomorphisms). We reserve $[o_2]$ for later use as a cohomology class. 
        
        Since $[\rho_{\bf g}]$ is a group homomorphism $G_b \rightarrow \Aut_{LR}(\mathcal{C})$, $o_2$ restricts to a trivial map in $\Aut_{LR}(\mathcal{C})$, i.e., $o_2({\bf g,h}) \in \ker r$. However, it may be a nontrivial element of $\ker r$. In general, $o_2({\bf g,h}) \in C^2(G_b,\ker r)$ defines a $(\ker r)$-valued 2-cochain on $G_b$.
        
        Demanding that the $\widecheck{\rho}_{\bf g}$ be associative, we find by decomposing $\widecheck{\rho}_{\bf ghk}$ in two distinct ways that
        \begin{align}
            \widecheck{\rho}_{\bf ghk} &= o_2({\bf gh,k})\widecheck{\rho}_{\bf gh}\widecheck{\rho}_{\bf k}\\
            &= o_2({\bf gh,k})o_2({\bf g,h)}\widecheck{\rho}_{\bf g}\widecheck{\rho}_{\bf h} \widecheck{\rho}_{\bf k}\\
            &= o_2({\bf g,hk})\widecheck{\rho}_{\bf g}\widecheck{\rho}_{\bf hk}\\
            &= o_2({\bf g,hk})\,^{\bf g}o_2({\bf h,k})\widecheck{\rho}_{\bf g}\widecheck{\rho}_{\bf h}\widecheck{\rho}_{\bf k},
        \end{align}
        where we have defined
        \begin{equation}
            \,^{\bf g}o_2({\bf h,k})=\widecheck{\rho}_{\bf g}o_2({\bf h,k})\widecheck{\rho}_{\bf g}^{-1}.
        \end{equation}
        For these two decompositions of $\widecheck{\rho}_{\bf ghk}$ to be equal, we need $o_2({\bf g,h}) \in Z^2(G_b,\ker r)$.
        
        Clearly $[\widecheck{\rho}_{\bf g}]$ is only a group homomorphism if $o_2 = 1$. This condition is not generically satisfied, but we may obtain another lift by modifying each $[\widecheck{\rho}_{\bf g}]$ by an element of $\ker r$. Such a modification changes $o_2({\bf g,h})$ by a $(\ker r)$-valued 2-coboundary. Therefore, the lift can be modified to obtain a group homomorphism $G_b \rightarrow \Aut_{LR}(\C)$ if and only if $[o_2]\in \H^2(G_b,\ker r)$ is cohomologically trivial. That is, $[o_2]$ is the obstruction to lifting the permutation action of $G_b$ on $\mathcal{C}$ to $\C$.
        
        In the case where $\ker r = \Z_2$, we will see that $\ker r$ commutes with all of $\Aut_{LR}(\C)$. Accordingly, $[o_2]$ defines a group extension $\widecheck{G}_b$ of $G_b$ by $\Z_2$. In fact, if we enlarge the symmetry to  $\widecheck{G}_b$, then there is a consistent lift to $[\widecheck{\rho}_{\bf g}]\in \Aut_{LR}(\C)$. Let $\ker r = \{1, [\alpha_\psi]\}$, and with $\widecheck{G}_b = G_b \times \Z_2$ as sets, define
        \begin{equation}
            \widecheck{\rho}_{({\bf g, p})} = \widecheck{\rho}_{\bf g}\alpha_{\psi}^{\bf p}
        \end{equation}
        with ${\bf g} \in G_b$ and ${\bf p} \in \{0,1\} \simeq \Z_2$. Then
        \begin{align}
              [\widecheck{\rho}_{({\bf g, p})}]  [\widecheck{\rho}_{({\bf h, q})}] &= o_2({\bf g, h})[\alpha_\psi]^{{\bf p}+{\bf q}}[\widecheck{\rho}_{\bf gh}]\\
              &=[\widecheck{\rho}_{({\bf gh},{\bf p}+{\bf q}+\tilde{o}_2({\bf g,h}))}]\\
              &= [\widecheck{\rho}_{({\bf g, p}) \times ({\bf h, q})}],
        \end{align}
        where $\tilde{o}_2 \in \{0, 1\} \simeq \Z_2$ means we are interpreting $o_2 \in \ker r \simeq \Z_2$ as an element of additive $\Z_2$ instead of $\ker r$. Hence these symmetry actions are a group homomorphism $[\widecheck{\rho}_{({\bf g},{\bf p})}]: \widecheck{G}_b \rightarrow \Aut_{LR}(\C)$ as claimed.
        
        As an example, let us consider the semion-fermion theory with $G_b = \Z_2^{\bf T}$. The appropriate modular extension of this theory is $\C= U(1)_2 \times U(1)_{-4}$. The simple objects of $\C$ can be labeled $(a,b)$ for $a =0,1$ and $b = 0,1,2,3$. Here, $v = (0,1)$ is the fermion parity vortex, $\psi = (0,2)$ is the fermion, and $s = (1,0)$ is the semion. It is clear that $\ker r = \Z_2$ in this case; its nontrivial element $[\alpha_\psi]$ takes $v \leftrightarrow v \times \psi$. Under ${\bf T}$, $\mathcal{C}$ transforms as follows:
        \begin{align}
            \,^{\bf T} s &= s \times \psi
            \nonumber \\
            \,^{\bf T} \psi &= \psi 
        \end{align}
        There are two possible lifts $[\widecheck{\rho}_{\bf T}]$ to the modular extension which differ by the action of $[\alpha_\psi]$; we may take either
        \begin{equation}
            \,^{\bf T} v = s \times v \text{ or } \,^{\bf T} v = s \times \psi \times v
        \end{equation}
        For the first choice
        \begin{align}
            \,^{\bf T}(\,^{\bf T} v) = \,^{\bf T}(s \times v) = (s \times \psi) \times (s \times v) = \psi \times v.
        \end{align}
        That is,
        \begin{equation}
           [ \widecheck{\rho}_{\bf T}]^2 = [\alpha_\psi] = o_2({\bf T,T})
        \end{equation}
        One can check straightforwardly that the second choice of $[\widecheck{\rho}_{\bf T}]$ leads to the same $o_2$. Therefore $[o_2] \neq +1 \in \H^2(\Z_2^{\bf T},\Z_2)$. Thus there is no way to have the permutations faithfully act on the modular extension as $G_b = \Z_2^{\bf T}$. However we can have them act as $\widecheck{G}_b = \Z_4^{\bf T}$.
        
        \subsubsection{\texorpdfstring{$\widecheck{\Upsilon}_\psi$ violates locality}{UpsiCheck violates locality}}
        \label{UpsiCheckViolates}
        
        If $\widecheck{\Upsilon}_\psi$ violates locality, the discussion above must be slightly modified.\footnote{The first version of this paper assumed that $[\rho_{\bf g}]$ and $[\widecheck{\rho}_{\bf g}]$ is always a group homomorphism, and did not incorporate the distinction between situations where $\widecheck{\Upsilon}_\psi$ violates locality and preserves it. This was addressed in the first version of Ref.~\cite{aasen21ferm}, which appeared on the arXiv simultaneously with the first version of this paper. We provide an alternate treatment here.}. 
        
        In particular, there are now two possibilities. The first possibility is that $\Upsilon_\psi$ does not respect locality. In this case, if the symmetry fractionalization on $\mathcal{C}$ is unobstructed, $[\rho_{\bf g}]$ generally fails to be a group homomorphism up to factors of $[\Upsilon_\psi]$ \cite{bulmash2021,aasen21ferm}:
        \begin{equation}
            [\Upsilon_\psi]^{\tilde{\omega}_2({\bf g,h})}[\rho_{\bf g}][\rho_{\bf h}] = [\rho_{\bf gh}].
            \label{eqn:modifiedHomomorphism}
        \end{equation}
        where $\tilde{\omega}_2$ is defined by Eq.~\ref{eqn:w2Def}. In this case, we must instead enforce the equation
        \begin{equation}
            [\widecheck{\Upsilon}_\psi]^{\tilde{\omega}_2({\bf g,h})}[\widecheck{\rho}_{\bf g}][\widecheck{\rho}_{\bf h}] = [\widecheck{\rho}_{\bf gh}]
            \label{eqn:modifiedHomomorphismCheck}
        \end{equation}
        because $r([\widecheck{\Upsilon}_\psi])=[\Upsilon_\psi]$ and we need Eq.~\ref{eqn:modifiedHomomorphismCheck} to reduce to Eq.~\ref{eqn:modifiedHomomorphism} upon applying the restriction map $r$. We can thus define a modified obstruction which must vanish if we want Eq.~\ref{eqn:modifiedHomomorphismCheck} to hold:
        \begin{equation}
            o_2({\bf g,h}) := \widecheck{\rho}_{\bf gh}\widecheck{\rho}_{\bf h}^{-1}\widecheck{\rho}_{\bf g}^{-1} \widecheck{\Upsilon}_\psi^{\tilde{\omega}_2({\bf g,h})}
            \label{eqn:modifiedO2}
        \end{equation}
        The rest of the analysis showing that $[o_2] \in \H^2(G_b,\ker r)$ proceeds analogously to the case where $\widecheck{\Upsilon}_\psi$ respects locality, with the additional factors of $\widecheck{\Upsilon}_\psi$ cancelling out at all stages. Proving this statement requires using the fact, discussed in Sec.~\ref{subsec:kerR}, that $\widecheck{\Upsilon}_\psi$ commutes with all topological autoequivalences.
        
        In the case where $\Upsilon_\psi$ respects locality but $\widecheck{\Upsilon}_\psi$ does not, we again need to enforce Eq.~\ref{eqn:modifiedHomomorphismCheck}, but for a slightly more involved reason. We take a lift $\widecheck{\rho}_{\bf g}$ and attempt to enforce the condition
        \begin{equation}
            \widecheck{\kappa}_{\bf g,h} \widecheck{\rho}_{\bf g} \widecheck{\rho}_{\bf h} = \widecheck{\rho}_{\bf gh}
            \label{eqn:checkKappaDef}
        \end{equation}
        with $\widecheck{\kappa}_{\bf g,h}$ a (possibly locality-violating) natural isomorphism. Since $\widecheck{\Upsilon}_\psi$ violates locality, this condition requires that the decomposition $\widecheck{\kappa}_{\bf g,h}(a,b;c)$ into anyon-dependent factors  $\widecheck{\beta}_a$ obey
        \begin{equation}
             \widecheck{\beta}_\psi = \begin{cases}
               1 & [\widecheck{\kappa}_{\bf g,h}]= [1]\\
               -1 & [\widecheck{\kappa}_{\bf g,h}] = [\widecheck{\Upsilon}_\psi]
             \end{cases}.
             \label{eqn:betaPsiFromKappa}
        \end{equation}
        The above equation is gauge-invariant. Now, in order to be compatible with $G_f$, any putative $G_f$ symmetry fractionalization pattern will need to obey  $\widecheck{\eta}_\psi({\bf g,h})=\omega_2({\bf g,h})$. If such a $G_f$ symmetry fractionalization pattern can possibly exist, then
        \begin{equation}
            \widecheck{\omega}_\psi({\bf g,h}) = \frac{\widecheck{\beta}_\psi({\bf g,h})}{\widecheck{\eta}_\psi({\bf g,h})} = +1 \Rightarrow  \widecheck{\beta}_\psi({\bf g,h}) = \omega_2({\bf g,h}) 
            \label{eqn:checkBetaEqualsOmega2}
        \end{equation}
        (see Eq.~\ref{eqn:etaDef}) since $\widecheck{\Upsilon}_\psi$ violates locality and $\widecheck{\omega}_a$ must respect the fusion rules. Combining Eqs.~\ref{eqn:checkKappaDef},~\ref{eqn:betaPsiFromKappa}, and~\ref{eqn:checkBetaEqualsOmega2}, we find that our desired lift should again obey Eq.~\ref{eqn:modifiedHomomorphismCheck}. Accordingly, we should define $o_2$ via Eq.~\ref{eqn:modifiedO2}. The rest of the analysis is unchanged from the case where $\Upsilon_\psi$ violates locality.
        
        Note that in Eq. \ref{eqn:modifiedHomomorphismCheck}, the brackets $[]$ correspond to taking equivalence under \it locality-respecting \rm natural isomorphisms, so that $[\widecheck{\rho}_{\bf g}] \in \Aut_{LR}(\C)$. When we instead consider equivalence under all natural isomorphisms, $\widecheck{\rho}$ will reduce to a group homomorphism, as expected based on the theory of bosonic SETs \cite{barkeshli2019}.
        
        \subsection{Characterizing \texorpdfstring{$\ker r$}{ker r}}
        \label{subsec:kerR}

        In order to understand $[o_2]$ better, we need to characterize $\ker r$. We make the following conjecture:
        \begin{conjecture}
        In all cases, $\ker r = \Z_2$.
        \label{conjecture:kerRIsActuallyZ2}
        \end{conjecture}
        As a consequence, $[o_2] \in \H^2(G_b,\Z_2)$. Although we cannot prove this conjecture in full generality, in the following subsections we will provide a number of concrete results that motivate the conjecture and give partial progress towards proving it. One motivation for the above conjecture is that it is $\H^2(G_b, \Z_2)$ that appears in the characterization of (3+1)D FSPTs, which should classify the 't Hooft anomalies in (2+1)D. 
        
        We begin by discussing a few properties of the map $[\widecheck{\Upsilon}_\psi]$ and define an important map $[\alpha_\psi]$. We will see that $[\alpha_\psi]$ always generates a $\Z_2 \subseteq \ker r$. 
        
        \subsubsection{\texorpdfstring{$[\widecheck{\Upsilon}_\psi]$}{CheckYpsi}}
        
        We defined the map $\widecheck{\Upsilon}_\psi$ by Eq.~\ref{eqn:checkUpsilonPsiDef}. It immediately follows that $r([\widecheck{\Upsilon}_{\psi}])=[\Upsilon_\psi]$, so $[\widecheck{\Upsilon}_\psi] \in \ker r$ if and only if $\Upsilon_\psi$ respects locality.  If $[\widecheck{\Upsilon}_\psi] \in \ker r$ and $\widecheck{\Upsilon}_\psi$ violates locality, then it forms a $\Z_2$ subgroup of $\ker r$. On the other hand, if $\widecheck{\Upsilon}_\psi$ respects locality, then it is a trivial element of $\ker r$.
        
        Since $\psi$ is invariant under all (fermionic) topological autoequivalences, it follows that $\widecheck{\Upsilon}_\psi$ commutes with all topological autoequivalences.
        
        There are three possibilities for how $\widecheck{\Upsilon}_\psi$ and $\Upsilon_\psi$ behave. First, $\widecheck{\Upsilon}_\psi$ may respect locality, in which case, $\Upsilon_\psi$ does as well. It was proven in~\cite{bulmash2021} that all parity vortices are $v$-type in this case (and at least one is Abelian). Second, $\widecheck{\Upsilon}_\psi$ may violate locality but $\Upsilon_\psi$ may respect locality. From the list of properties in Sec.~\ref{sec:fermionicSymm}, we see that this occurs when $\mathcal{C}$ has some minimal modular extension with an Abelian fermion parity vortex, but $\C$ is not such a minimal modular extension. It was proven in~\cite{bulmash2021} that all parity vortices in $\C$ are $\sigma$-type in this case. Finally, both $\widecheck{\Upsilon}_\psi$ and $\Upsilon_\psi$ may violate locality.
        
        One can check that all of the cases we considered above can actually occur. 
        For an example where $\widecheck{\Upsilon}_\psi$ respects locality, we can take $\mathcal{C} = \mathcal{B} \boxtimes \{1, \psi\}$ for any modular $\mathcal{B}$ and $\C = \mathcal{B} \boxtimes D(\Z_2)$, where $D(\Z_2)$ is the quantum double of $\Z_2$, also known as the toric code topological order. Here $\boxtimes$ is the Deligne product and physically corresponds to stacking decoupled topological orders. Instead taking $\C = \mathcal{B} \boxtimes \text{Ising}$ gives an example where $\widecheck{\Upsilon}_\psi$ does not respect locality but $\Upsilon_\psi$ does. For an example where neither $\widecheck{\Upsilon}_\psi$ nor $\Upsilon_\psi$ respects locality, we can take $\mathcal{C} = \mathrm{SO}(3)_3$; an example minimal modular extension is $\C = \mathrm{SU}(2)_6$.
        
        \subsubsection{\texorpdfstring{$[\alpha_\psi]$}{alphaPsi}}
        
        Next, we consider a map $\alpha_\psi$, with the following permutation action:
        \begin{align}
            \alpha_\psi(a) &= a \;\;\;\text{ if } a \in \C_0 \simeq \mathcal{C}
            \nonumber \\
            \alpha_\psi(a) &= a \times \psi \;\;\; \text{ if } a \in \C_1
            \label{eqn:alphaPsiDef}.
        \end{align}
        This permutation preserves the fusion rules, twists, and modular $S$-matrix of the theory. One can check in a range of examples that there indeed exists a braided autoequivalence $\alpha_\psi$ of $\C$ with this permutation action. In fact, Ref.~\cite{aasen21ferm} (which appeared simultaneously on the arXiv with the first version of this work) gave an explicit formula for the $U$-symbols of exactly such a braided autoequivalence as follows:
        \begin{equation}
            U_{\alpha_\psi}(a',b';c';\mu,\nu) = \sum_{\lambda} \left[\left(F^{a, \psi^{f_b},b'}_{c}\right)^{-1}\right]_{(b,\mu),(a \times \psi^{f_b},\lambda)} R^{\psi^{f_b},a}\left[\left(F^{\psi^{f_c},\psi^{f_b}\times a, b'}_{c'}\right)^{-1}\right]_{(c,\lambda),(a',\nu)}
            \label{eqn:alphaPsiUSymbols}
        \end{equation}
        where we use the shorthand $a' = \alpha_\psi(a)$ and define
        \begin{equation}
            f_x = \begin{cases}
              0 & x \in \C_0\\
              1 & x \in \C_1
            \end{cases}.
        \end{equation}
        
        Consider the case where (i) every permutation of simple objects of $\C$ which preserves the modular data corresponds to a unique element $\Aut(\C)$, and similarly (ii) every permutation of simple objects in $\mathcal{C}$ preserving the modular data corresponds to a unique element of $\Aut(\mathcal{C})$.
        Then $[\alpha_\psi]$ generates a $\Z_2$ subgroup of  $\ker r$ as long as $\C_v$ is non-empty. In more general situations, Ref.~\cite{aasen21ferm} demonstrated that as a braided autoequivalence, it is always true that $[\alpha_\psi] \in \ker r$, commutes with all of $\Aut_{LR}(\C)$, and squares to the identity. 
        
        If $\C_v$ is non-empty, then $[\alpha_\psi]$ is clearly non-trivial. If $\C_\sigma$ is non-empty, then we can calculate the gauge-invariant (in $\Aut_{LR}(\C)$) quantity
        \begin{align}
            U_{\alpha_\psi}(\sigma,\psi;\sigma) &= \left(F^{\sigma 1 \psi} \right)^{-1}R^{1 \sigma} \left(F^{\psi \sigma \psi}\right)^{-1}\\
            &= \left(F^{\psi \sigma \psi}\right)^{-1} = -1,
            \label{eqn:UsigmaPsiSigmaMinus1}
        \end{align}
        where the last equality follows from a straightforward use of the hexagon equation. Hence $[\alpha_\psi]$ is non-trivial in this case as well, and so $[\alpha_\psi]$ always generates a central $\Z_2$ subgroup of $\ker r$. Note that Eq.~\ref{eqn:alphaPsiUSymbols} provides a definition of a non-trivial $[\alpha_\psi]$ even when $\C_v$ is empty, in which case $\alpha_\psi$ always has trivial permutation action on the objects in $\C$. 
        
        For a familiar example, consider $\mathcal{C}=\{1,\psi\}$; then there is a modular extension $\C = \{1,\psi,e,m\}$ which is equivalent to $\Z_2$ gauge theory, where we are viewing $\C_v = \{e,m\}$. Then the map $\alpha_\psi$ permutes $e \leftrightarrow m$, implementing electric-magnetic duality.

        If $\C_v$ is empty, then $[\alpha_\psi]$ may or may not equal $[\widecheck{\Upsilon}_\psi]$. If $\Upsilon_\psi$ violates locality, then these maps cannot be equal since $[\alpha_\psi]$ is in $\ker r$ but $[\widecheck{\Upsilon}_\psi]$ is not. In a version of Ref.~\cite{aasen21ferm} posted after the second version of this paper, it was proven that $[\alpha_\psi] = [\widecheck{\Upsilon}_\psi]$ if and only if $\widecheck{\Upsilon}_\psi$ violates locality but $\Upsilon_\psi$ respects locality. In other words, any time $[\widecheck{\Upsilon}_\psi]$ is a non-trivial element in $\ker r$, it is equal to $[\alpha_\psi]$.
        
        \subsubsection{Permutation actions of elements of \texorpdfstring{$\ker r$}{ker r}}
        \label{proofI6}
        
        Fully characterizing $\ker r$ is a non-trivial task in general. We can, however, determine the allowed \textit{permutation action} of all elements of $\ker r$ by proving Theorem~\ref{thm:kerRPerm}:
        
        \begin{proof}
        We may apply the Verlinde formula to $\C$, which is modular:
        \begin{align}
            N_{v,\psi}^{\widecheck{\rho}(v)} + N_{v,1}^{\widecheck{\rho}(v)} &= \sum_{x \in \C}\frac{\left(S_{\psi,x} + S_{1x}\right)S_{vx}S^{\ast}_{\widecheck{\rho}(v)x}}{S_{1x}}\\
            &= \sum_{x \in \C_0}\frac{\left(S_{\psi,x} + S_{1x}\right)S_{vx}S^{\ast}_{\widecheck{\rho}(v)x}}{S_{1x}} + \sum_{x \in \C_1}\frac{\left(S_{\psi,x} + S_{1x}\right)S_{vx}S^{\ast}_{\widecheck{\rho}(v)x}}{S_{1x}}.
        \end{align}
        By Eq.~\ref{eqn:SChangeByPsi}, $S_{\psi,x} = \pm S_{1,x}$ with the upper sign for $x \in \C_0$ and the lower sign for $x \in \C_1$. Therefore,
        \begin{align}
             N_{v,\psi}^{\widecheck{\rho}(v)} + N_{v,1}^{\widecheck{\rho}(v)} &= \sum_{x \in \C_0}2S_{vx}S^{\ast}_{\widecheck{\rho}(v)x} + 0\\
             &= \sum_{x \in \C_0}2|S_{vx}|^2 > 0,
        \end{align}
        where we have used the fact that $S$ is invariant under $\widecheck{\rho}$ and that $\widecheck{\rho}(x)=x$ if $x \in \C_0$.
        Therefore either $\widecheck{\rho}(v)=v$ or $\widecheck{\rho}(v) = \psi \times v$, and in particular, if $v \in \C_\sigma$, $\widecheck{\rho}(v) = v$. 
        
        Next, suppose $v_1, v_2 \in \C_v$. Then
        \begin{equation}
            S_{v_1,v_2}= \widecheck{\rho}(S_{v_1,v_2})=S_{\widecheck{\rho}(v_1),\widecheck{\rho}(v_2)} = S_{v_1,v_2} (-1)^{m_1 + m_2}
        \end{equation}
        where
        \begin{equation}
            m_i = \begin{cases}
              0 & \widecheck{\rho}(v_i) = v_i\\
              1 & \widecheck{\rho}(v_i) = v_i \times \psi
            \end{cases}
        \end{equation}
        Hence if $S_{v_1,v_2} \neq 0$, then $m_1=m_2$, that is, $\widecheck{\rho}$ changes the fermion parity of both $v_1$ and $v_2$ or of neither. If $S_{v_1,v_2} = 0$ but $v_1,v_2$ are in the same one of the $k$ blocks of the $S_{a_v,b_v}$ part of the $S$-matrix, then there exists a sequence of $v$-type vortices $a_1,a_2,\ldots a_p$ such that $S_{v_1,a_1} \neq 0$,$S_{a_1,a_2}\neq 0$, $\cdots$, $S_{a_p,v_2} \neq 0$. Applying the above argument to each pair in the sequence, we conclude that if $v_1,v_2$ belong to the same block of the $S$-matrix, $\widecheck{\rho}$ acts the same way on both vortices, i.e. it either changes the fermion parity of both or of neither.
        \end{proof}

        \subsubsection{$\ker r$ when permutations determine $\Aut(\C)$ and $\Aut(\mathcal{C})$}
        
        In many well-studied examples, every permutation of the simple objects of a BFC $\mathcal{B}$ uniquely determines an element of $\Aut(\mathcal{B})$. We can fully characterize $\ker r$ as long as $\C$ and $\mathcal{C}$ obey slightly weaker properties:  
        \begin{theorem}
        Let $k$ be as in Theorem~\ref{thm:kerRPerm}. Suppose that every permutation of vortices in $\C$ given in Theorem~\ref{thm:kerRPerm} uniquely determines an element of $\Aut(\C)$ (but not necessarily a unique element of $\Aut_{LR}(\C)$), and further suppose that there is a unique element of $\Aut(\mathcal{C})$ which does not permute anyons. Then $\ker r = \Z_2^{\max(k,1)}$.
        \label{thm:kerR}
        \end{theorem}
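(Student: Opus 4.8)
The plan is to compute $\ker r$ as a group extension governed by the permutation action on the fermion parity vortices. By Theorem~\ref{thm:kerRPerm}, every $[\rho]\in\ker r$ fixes all of $\C_0\simeq\mathcal{C}$ and all of $\C_\sigma$ pointwise, and acts on the $k$ blocks of $v$-type vortices by independently sending each block $B_i$ to either $B_i$ or $B_i\times\psi$. This records a homomorphism $\pi:\ker r\to\Z_2^k$ giving the block-flip pattern, and I would determine $\ker r$ through the exact sequence $1\to\ker\pi\to\ker r\xrightarrow{\pi}\mathrm{im}\,\pi\to1$: first show $\pi$ is onto $\Z_2^k$, then compute $\ker\pi$, and finally check that the extension is an elementary abelian product.

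First I would establish surjectivity of $\pi$. Fix any block-flip pattern; by the first hypothesis it is realized by a unique $g\in\Aut(\C)$. Since $g$ fixes $\C_0$ pointwise, its restriction to $\mathcal{C}$ is a non-permuting element of $\Aut(\mathcal{C})$, hence equals the identity by the second hypothesis. Lifting $g$ to $\Aut_{LR}(\C)$, the restriction $r(\text{lift})$ is a preimage of $[1]\in\Aut(\mathcal{C})$, so it lies in $\{[1],[\Upsilon_\psi]\}$; composing with $\widecheck{\Upsilon}_\psi$ if necessary (it is trivially-permuting and satisfies $r([\widecheck{\Upsilon}_\psi])=[\Upsilon_\psi]$) arranges $r(\text{lift})=[1]$, placing the lift in $\ker r$ with the prescribed permutation. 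In particular $\alpha_\psi$ realizes the all-blocks pattern, so $\pi$ is surjective and $\mathrm{im}\,\pi=\Z_2^k$.

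Next I would compute $\ker\pi$, the trivially-permuting elements of $\ker r$. A trivially-permuting autoequivalence is a natural isomorphism, so the only candidates are the classes of $\alpha_\psi$ and $\widecheck{\Upsilon}_\psi$. If $k=0$, then $\C_v=\emptyset$ while $\C_\sigma\neq\emptyset$ (since $\C_1\neq\emptyset$); here $\alpha_\psi$ is trivially-permuting and nontrivial because the $\Aut_{LR}(\C)$-gauge-invariant quantity $U_{\alpha_\psi}(\sigma,\psi;\sigma)=-1$, and the second hypothesis rules out any further trivially-permuting element, giving $\ker\pi=\langle[\alpha_\psi]\rangle\cong\Z_2$ and hence $\ker r\cong\Z_2=\Z_2^{\max(0,1)}$. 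If $k\geq1$, then $\alpha_\psi$ flips every $v$-type vortex and so has nontrivial permutation, lying in $\pi^{-1}(\text{all-flip})$ rather than in $\ker\pi$; the goal then becomes to show $\ker\pi$ is trivial, whence $\ker r\cong\Z_2^k=\Z_2^{\max(k,1)}$.

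The hard part will be exactly this last point: controlling the interplay between $\Aut(\C)$ and $\Aut_{LR}(\C)$ through $\widecheck{\Upsilon}_\psi$ when $k\geq1$. The subtlety is that $\widecheck{\Upsilon}_\psi$ lies in $\ker r$ precisely when $\Upsilon_\psi$ respects locality, yet is nontrivial in $\Aut_{LR}(\C)$ precisely when $\C$ contains no Abelian fermion parity vortex; were these to hold simultaneously with $\C_v\neq\emptyset$, I would pick up a spurious extra $\Z_2$ and get $\Z_2^{k+1}$. I would therefore need to rule out this coexistence, i.e.\ to show that whenever $\C_v\neq\emptyset$ and $\Upsilon_\psi$ respects locality, the phase $\zeta\in K_-(\mathcal{C})$ that trivializes $\Upsilon_\psi$ is realized by an \emph{Abelian} vortex of $\C$ (so that $\widecheck{\Upsilon}_\psi$ is already trivial in $\Aut_{LR}(\C)$ and $\ker\pi$ has no nonidentity element). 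This is the one place where the $\Aut_{LR}$-versus-$\Aut$ distinction genuinely bites, and where I expect to lean on the first hypothesis. Granting it, the same uniqueness in $\Aut(\C)$ forces each single-block-flip lift to square to the identity rather than to $\widecheck{\Upsilon}_\psi$, so the extension splits and $\ker r\cong\Z_2^{\max(k,1)}$.
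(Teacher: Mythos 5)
Your overall strategy mirrors the paper's proof: use Theorem~\ref{thm:kerRPerm} to pin down the possible permutation actions, realize each block-flip pattern inside $\ker r$ by adjusting a lift of the corresponding element of $\Aut(\C)$ with $\widecheck{\Upsilon}_\psi$, and control the non-permuting part of $\ker r$ through the locality properties of $\Upsilon_\psi$ and $\widecheck{\Upsilon}_\psi$. Your surjectivity argument and your $k=0$ case are essentially the paper's case analysis reorganized around the exact sequence for $\pi$, and they go through (modulo a small misattribution: it is the first hypothesis, which forces the only non-permuting element of $\Aut(\C)$ to be the identity and hence bounds the non-permuting elements of $\Aut_{LR}(\C)$ by $\{[1],[\widecheck{\Upsilon}_\psi]\}$, that controls $\ker\pi$, not the second hypothesis about $\Aut(\mathcal{C})$).

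There is, however, a genuine gap, and you have located it yourself: the claim that when $\C_v\neq\emptyset$ one cannot simultaneously have $\Upsilon_\psi$ respect locality in $\mathcal{C}$ while $\widecheck{\Upsilon}_\psi$ violates locality in $\C$ --- exactly the configuration in which $[\widecheck{\Upsilon}_\psi]$ would be a nontrivial non-permuting element of $\ker r$ and inflate the answer to $\Z_2^{k+1}$. You write ``Granting it,'' but this does not follow from the two hypotheses of the theorem; it is a structural fact about minimal modular extensions imported from Ref.~\cite{bulmash2021}: if some minimal modular extension of $\mathcal{C}$ contains an Abelian fermion parity vortex but $\C$ does not, then $\C$ contains no $v$-type vortices at all, i.e.\ $k=0$. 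The paper's proof cites precisely this result to dispose of its Case 2, and without it your argument cannot exclude $\ker r\cong\Z_2^{k+1}$ for $k\geq 1$. A secondary omission: for $\mathrm{im}\,\pi$ to be all of $\Z_2^k$ (rather than a proper subgroup), the $k$ block flips must be genuinely independent, which requires that $v$ and $v\times\psi$ always lie in the same block; since $\rho(v\times\psi)=\rho(v)\times\psi$ correlates the flips of the blocks containing $v$ and $v\times\psi$, this is not automatic. The paper proves it by observing that a $v$-type vortex $v$ with $S_{v,w}=0$ for all $w\in\C_1$ would give $S$ two identical rows, contradicting modularity of $\C$.
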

        
        \begin{proof}
        Let $[\widecheck{\rho}] \in \ker r$. According to Theorem~\ref{thm:kerRPerm}, there are $2^{\max(k,1)}$ possible permutation actions for $[\widecheck{\rho}]$, given by $k$ independent choices of whether or not $[\widecheck{\rho}]$ changes the fermion parity of the $v$-type vortices in each block, and each permutation action squares to the identity. If $k=0$, then $[\widecheck{\rho}]$ acts as the identity permutation.
        
        We first claim that all of these possible permutation actions commute with each other. This statement is only nontrivial for $k>1$. To prove this claim, it suffices to show that $v$ and $v \times \psi$ belong to the same block. 
        
        Suppose first that for some $w \neq v$, $v$ and $w$ belong to the same one of the $k$ blocks of the $S$-matrix. Then there is a sequence of $v$-type vortices $a_1,a_2,\ldots a_p$ such that $S_{v,a_1} \neq 0$,$S_{a_1,a_2}\neq 0$, $\cdots$, $S_{a_p,w} \neq 0$. But $S_{v\times \psi,a_1} = -S_{v,a_1} \neq 0$ as well. Hence $v \times \psi$ is also in the same block as $w$, so $v$ and $v \times \psi$ are in the same block, namely the block containing $w$. 
        
        We claim that such a $w$ must exist. Suppose by way of contradiction that no such $w$ exists, i.e., $S_{v,w}=0$ for all $w \in \C_1$, in which case $v$ is in a block by itself. It follows that $S_{v\times \psi, w} = -S_{v,w} =0$ for all $w \in \C_1$, and since $S_{v\times \psi,a}=S_{v,a}$ for all $a \in \C_0$, we must have $S_{v,a} = S_{v\times \psi,a}$ for all $a \in \C$. Hence $S$ has two identical rows and is not invertible, which is a contradiction since $\C$ is modular. This proves that all of the aforementioned permutations commute.
        
        By assumption, the permutation action of $\widecheck{\rho}$ defines a unique element of $\Aut(\C)$, but may or may not uniquely determine an element of $\Aut_{LR}(\C)$. Whether or not it does depends on the properties of $[\widecheck{\Upsilon}_\psi]$, which is the one possibly-nontrivial element of $\Aut_{LR}(\C)$ which does not permute anyons.
        
        We need to consider three possible cases, depending on whether $\Upsilon_\psi$ and $\widecheck{\Upsilon}_\psi$ respect locality.
        
        \underline{Case 1:} $\widecheck{\Upsilon}_\psi$ respects locality in $\C$. Then $\Upsilon_\psi$ respects locality in $\mathcal{C}$, and also $\C$ contains an Abelian fermion parity vortex. Hence $\C_v$ is nonempty, $k>0$, $\Aut_{LR}(\C)=\Aut(\C)$, and $\Aut_{LR}(\mathcal{C})=\Aut(\mathcal{C})$. Therefore, each of the $2^k$ anyon permutations defined above determines a unique element of $\Aut_{LR}(\C)$ which restricts to the identity in $\Aut_{LR}(\mathcal{C})$ (since the restricted permutation action is trivial). Furthermore, if $[\widecheck{\rho}]\in \ker r$, $[\widecheck{\rho}]^2=[1]$ because $[\widecheck{\rho}]^2$ does not permute anyons. Thus $\ker r = \Z_2^k$.
        
        \underline{Case 2}: $\widecheck{\Upsilon}_\psi$ does not respect locality in $\C$, but $\Upsilon_\psi$ respects locality in $\mathcal{C}$. Then some minimal modular extension of $\mathcal{C}$ contains an Abelian fermion parity vortex, but $\C$ does not; according 
        to the list of properties in Sec.~\ref{subsec:fermionicSymmFrac}, this implies that $\C$ has no $v$-type vortices, i.e., $k=0$. According to Theorem~\ref{thm:kerRPerm}, $[\widecheck{\rho}]$ must therefore act as the identity permutation. Exactly two elements of $\Aut_{LR}(\C)$ implement the trivial permutation, namely $[\widecheck{\Upsilon}_\psi]\neq [1]$  in $\C$. In particular, $[\widecheck{\Upsilon}_\psi]$ restricts to $[\Upsilon_\psi] = [1]$ in $\mathcal{C}$. Hence $[\widecheck{\Upsilon}_\psi] \in \ker r$, and $\ker r = \mathbb{Z}_2^{\max(k,1)}=\Z_2$.
        
        \underline{Case 3}: Neither $\widecheck{\Upsilon}_\psi$ nor $\Upsilon_\psi$ respect locality in their respective categories. 
        In this case, $r([\widecheck{\Upsilon}_\psi])=[\Upsilon_\psi] \neq [1]$, so $[\widecheck{\Upsilon}_\psi] \not\in \ker r$. But $[\alpha_\psi]$ is a non-trivial element of $\ker r$, and thus must not equal $[\widecheck{\Upsilon}_\psi]$. Therefore, $[\alpha_\psi]$ has a non-trivial permutation action, that is, $\C_v$ is non-empty and $k>0$. Now consider any braided autoequivalence $[\widecheck{\rho}] \in \Aut_{LR}(\C)$ which implements one of the anyon permutation actions given above. $[\widecheck{\rho}]$ must restrict to either $[1]$ or $[\Upsilon_\psi]$ on $\mathcal{C}$, and $[\widecheck{\Upsilon}_\psi \widecheck{\rho}]$ will restrict to $[\Upsilon_\psi]$ or $[1]$ respectively. Hence exactly one of $[\widecheck{\rho}]$ and $[\widecheck{\Upsilon}_\psi \widecheck{\rho}]$ is in $\ker r$, that is, each of the $2^{k}$ permutation actions above defines a unique element of $\ker r$. If $[\widecheck{\rho}]\in \ker r$, then $r([\widecheck{\rho}]^2)=[1]$, so $[\widecheck{\rho}]^2=1$ as well. Hence $\ker r = \Z_2^k$, concluding our proof.

        \end{proof}

        It is true but not immediately obvious that one can have $k>1$ decoupled blocks of the $v-v$ part of the $S$-matrix; we consider an explicit example with $k=2$ in Sec.~\ref{subsec:k>1}. 
        
        \subsubsection{$\ker r$ when permutations do not uniquely determine $\Aut(\C)$ and $\Aut(\mathcal{C})$}
        \label{subsubsec:kerRNonUnique}
        
        If there is not a one-to-one correspondence between elements of $\Aut(\C)$ (resp. $\Aut(\mathcal{C})$) and anyon permutations which preserve the fusion rules and modular data of $\C$ (resp. $\mathcal{C}$), then we are not generally able to give any further characterization of $\ker r$ as a group beyond the results of the previous subsections.
        
        We do know that Theorem~\ref{thm:kerRPerm} still applies; the anyon permutations of all elements of $\ker r$ are still restricted. However, three possibilities could further complicate the analysis of $\ker r$:
        
        \begin{enumerate}
            \item  There could be an anyon permutation which preserves the modular data of $\C$ but does not correspond to any braided autoequivalence of $\C$. In this case, there must be a map $\widecheck{\rho}$ from $\C$ to an inequivalent theory with the same modular data as $\C$, that is, $\C$ must have a ``modular isotope." This can occur in general~\cite{WenBeyondModularData,BondersonBeyondModular}. In this situation, $\ker r$ may contain fewer than $2^k$ distinct permutation actions.
            
            \item There could be a non-trivial but non-permuting element of $\Aut(\C)$. If such an autoequivalence exists and has a representative in $\ker r$, then each allowed permutation action may determine many elements of $\ker r$, and each allowed permutation action may give rise to a subgroup of $\ker r$ which is larger than $\Z_2$. UMTCs can in general have non-trivial non-permuting autoequivalences~\cite{davydov2014}.
            
            \item There could be a non-trivial but non-permuting element of $\Aut(\mathcal{C})$. If such an autoequivalence exists, then there may be some permutation actions allowed by Theorem~\ref{thm:kerRPerm} which, nevertheless, do not determine an element of $\ker r$ because any braided autoequivalence which implements such a permutation action necessarily restricts to a non-trivial but non-permuting element of $\Aut_{LR}(\mathcal{C})$. In this situation, $\ker r$ may again contain fewer than $2^k$ distinct permutation actions.
        \end{enumerate}
        
        A version of Ref.~\cite{aasen21ferm} posted after the second version of the present paper showed that whenever $\Upsilon_\psi$ respects locality, $\ker r = \{1, [\alpha_\psi]\} \simeq \Z_2$. Conjecture~\ref{conjecture:kerRIsActuallyZ2} remains open in the case where $\widecheck{\Upsilon}_\psi$ violates locality.
        
        We have not found any example where we know that a particular one of the above possibilities is relevant for characterizing $\ker r$. In Section~\ref{subsec:k>1}, we study an example involving two copies of $\SO(3)_3$, for which we suspect one of the above possibilities is occurring.

        \subsection{Relation to Arf invariant on torus}
        \label{subsec:Arf}
        
         In this section, we demonstrate that when $\ker r = \{[1],[\alpha_\psi]\} \simeq \Z_2$, our obstruction $o_2$ leads to a modification of the action of the symmetry on the torus Hilbert space of the SET in a way which is sensitive to the Arf invariant of the spin structure on the torus. Ref.~\cite{delmastro2021} showed that for the special case $G_f= \Z_4^{{\bf T},f}$, such a sensitivity appears in certain systems with a 't Hooft anomaly; where there is overlap, our results agree.
        
        Consider the ground states of the TQFT on a spatial torus, $T^2$. As described in Sec.~\ref{subsec:torusDegeneracy}, the Hilbert space breaks up into four sectors $\mathcal{H}_{\mu,\nu}$, where $\{\mu,\nu\} \in \{0,1\}$ specify a spin structure. Here $\mu = 0$ refers to anti-periodic (Neveu-Schwarz) boundary conditions and $1$ refers to periodic (Ramond) boundary conditions. These are sometimes also referred to as bounding and non-bounding spin structures, respectively.
        
        Here $\mathcal{H}_{\mu,\nu}$ is a $\Z_2$ graded Hilbert space which includes both even and odd fermion number sectors. That is, we can think of this as the TQFT Hilbert space allowing for the possibility of a $\psi$ puncture. We will use the basis Eq.~\ref{eqn:torusBasis} and the description Eq.~\ref{eqn:torusStates} for the torus Hilbert space. Since the torus Hilbert space of the fermionic theory is defined via states of the minimal modular extension $\C$ on the 3-punctured sphere (i.e. its states are defined using simple objects in $\C$), only a (representative) lift $\widecheck{\rho}_{\bf g}$ has a well-defined action on the torus Hilbert space.
        
        Our aim is to show that if there is an $\H^2$ anomaly, then the action of $\widecheck{\rho}_{\bf g}$ on the torus Hilbert space is necessarily deformed by the Arf invariant
        \begin{equation}
            \text{Arf}(\mu,\nu)=\mu \nu
        \end{equation}
        of the spin structure. More precisely, given a state $\ket{\Psi}_{\mu \nu} \in \H_{\mu,\nu}$, the $G_b$ group law of the lift is deformed in the sense
        \begin{equation}
            \widecheck{\rho}_{\bf gh}\widecheck{\rho}_{\bf h}^{-1}\widecheck{\rho}_{\bf g}^{-1}\ket{\Psi}_{\mu \nu} = (\omega_2({\bf g,h}))^F (-1)^{\widetilde{o}_2({\bf g,h})\text{Arf}(\mu,\nu)}\ket{\Psi}_{\mu \nu}.
            \label{eqn:o2Arf}
        \end{equation}
        where $(-1)^F$ is the fermion parity operator.
        The notation $\widetilde{o}_2({\bf g,h}) \in \Z_2 \simeq \{0,1\}$ distinguishes when we are viewing $o_2$ as an element of $\Z_2$ as an additive group from when we view $o_2 \in \Z_2 \simeq \ker r$.
        Eq.~\ref{eqn:o2Arf} extends and sharpens the results of~\cite{delmastro2021}, which considered only the special case $G_f=\Z_4^{{\bf T},f}$.
        
         From the definition,
         \begin{equation}
            \widecheck{\rho}_{\bf gh}\widecheck{\rho}_{\bf g}^{-1}\widecheck{\rho}_{\bf h}^{-1}\ket{\Psi}_{\mu \nu} = \begin{cases}
              o_2({\bf g,h})\ket{\Psi}_{\mu \nu} & \widecheck{\Upsilon}_\psi \text{ respects locality}\\
              o_2({\bf g,h}) \widecheck{\Upsilon}_\psi^{\tilde{\omega}_2({\bf g,h})}\ket{\Psi}_{\mu \nu} & \widecheck{\Upsilon}_\psi \text{ violates locality}
            \end{cases}
         \end{equation}
        Recall that $\tilde{\omega}_2$ is the additive parameterization of $\omega_2$ defined by Eq.~\ref{eqn:w2Def}.
        One can compute directly that $\widecheck{\Upsilon}_\psi$ acts on the torus Hilbert space as fermion parity, that is, it inserts a minus sign on states of the form $\ket{\sigma}_{11}$ and acts as the identity otherwise, so in all other sectors there is no difference whether $\widecheck{\Upsilon}_\psi$ respects or violates locality.
        
        Since $o_2({\bf g,h})$ acts trivially on fusion spaces $V_{ab}^c$ with $a,b,c\in \C_0$, it is immediate that Eq.~\ref{eqn:o2Arf} holds for $\mu = 0$.
        
        Now fix a $v$-type vortex $v$ and consider $\ket{v,\overline{v};1}\in V_1^{v \overline{v}}$; then
        \begin{equation}
            o_2({\bf g,h})\left(\ket{v,\overline{v};1}\right) = U_{o_2({\bf g,h})}\left(v \times \psi^{\widetilde{o}_2},\overline{v} \times \psi^{\widetilde{o}_2};1\right)\ket{v \times \psi^{\widetilde{o}_2},\overline{v} \times \psi^{\widetilde{o}_2};1}.
        \end{equation}
        The $U$ factors obtained from the action of $o_2$ on $V_1^{v \overline{v}}$ and its dual $V_{v \overline{v}}^1$ are complex conjugates and thus will always cancel out, so the action of $o_2({\bf g,h})$ is either trivial or simply interchanges $\left(V_{1}^{v \overline{v}} \otimes V_{v \overline{v}}^1\right)$ and $\left(V_{1}^{v\psi, \overline{v}\psi} \otimes V_{v\psi, \overline{v}\psi}^1\right)$. Said differently,
        \begin{equation}
            o_2({\bf g,h})\ket{v} = \ket{v\times \psi^{\widetilde{o}_2}}
        \end{equation}
        where $\ket{v}$ is a torus state of definite topological charge $v$ piercing the $\alpha$ cycle. Therefore, using Eq.~\ref{eqn:torusBasis}, we find
        \begin{align}
            o_2({\bf g,h})\ket{v}_{1,0}&=\ket{v}_{1,0}\\
            o_2({\bf g,h})\ket{v}_{1,1}&=(-1)^{\widetilde{o}_2({\bf g,h})}\ket{v}_{1,1}
        \end{align}
        which verifies Eq.~\ref{eqn:o2Arf} for the present case.
        
        Finally, we must consider states built from a $\sigma$-type vortex, which  only exist when $\widecheck{\Upsilon}_\psi$ violates locality. The above argument immediately generalizes to the unpunctured states to show that
        \begin{equation}
            o_2({\bf g,h})\ket{\sigma}_{1,0} = \ket{\sigma}_{1,0}
        \end{equation}
        Running a similar argument on the states with a puncture, we find that
        \begin{equation}
            o_2({\bf g,h})\widecheck{\Upsilon}_\psi^{\tilde{\omega}_2({\bf g,h})}\ket{\sigma;\psi}= \omega_2({\bf g,h}) U_{o_2({\bf g,h})}(\sigma,\overline{\sigma};1)U^{\ast}_{o_2({\bf g,h})}(\sigma,\overline{\sigma};\psi)\ket{\sigma;\psi}
            \label{eqn:o2OnPunctured}
        \end{equation}
        Since $o_2$ does not permute $\sigma$, and $\sigma \times \psi = \sigma$, one can show from the consistency conditions that
        \begin{equation}
            U_{o_2({\bf g,h})}(\sigma,\overline{\sigma};1)U^{\ast}_{o_2({\bf g,h})}(\sigma,\overline{\sigma;\psi}) = U_{o_2({\bf g,h})}(\sigma,\psi;\sigma)
            \label{eqn:UsigmaPsiSigmaUU}
        \end{equation}
        
        This last quantity was computed in Eq.~\ref{eqn:UsigmaPsiSigmaMinus1}; combining it with Eqs.~\ref{eqn:o2OnPunctured} and~\ref{eqn:UsigmaPsiSigmaUU}, we obtain Eq.~\ref{eqn:o2Arf} for this last class of states.

        \subsection{Anomaly inflow from bulk (3+1)D FSPT}
        \label{subsec:inflow}
        
        According to the classification of fermion SPTs reviewed in Sec.~\ref{subsec:fermionicSymm}, if an FSPT has $[n_1]=0$, then there is a piece of data $[n_2] \in \H^2(G_b,\Z_2)/\Gamma^2$ in the specification of a (3+1)D FSPT. Let us assume that $\ker r = \Z_2 = \{[1], [\alpha_\psi]\}$; then our obstruction $[o_2]$ is valued in $\H^2(G_b,\Z_2)$. Letting 
        \begin{equation}
            q_{\Gamma^2}:\H^2(G_b,\Z_2) \rightarrow \H^2(G_b,\Z_2)/\Gamma^2,
        \end{equation}
        we argue that the 't Hooft anomaly data $[n_2]$ of the FSET is given by
        \begin{equation}
            [n_2]=q_{\Gamma^2}([o_2])
            \label{eqn:H2Anomaly}
        \end{equation}
        
        In support of this conjecture, we give an argument explicitly describing anomaly inflow using the decorated domain wall construction for FSPTs. This argument will work at the level of cocycles, without any quotient by coboundaries or by $\Gamma^2$; we discuss the $\Gamma^2$ redundancy in the 't Hooft anomaly in Sec.~\ref{sec:o2independence}. We will also discuss the possibility that $\ker r \neq \Z_2$ in Sec.~\ref{subsec:k>1}.
        
        \begin{figure*}
            \centering
            \includegraphics[width=0.7\linewidth]{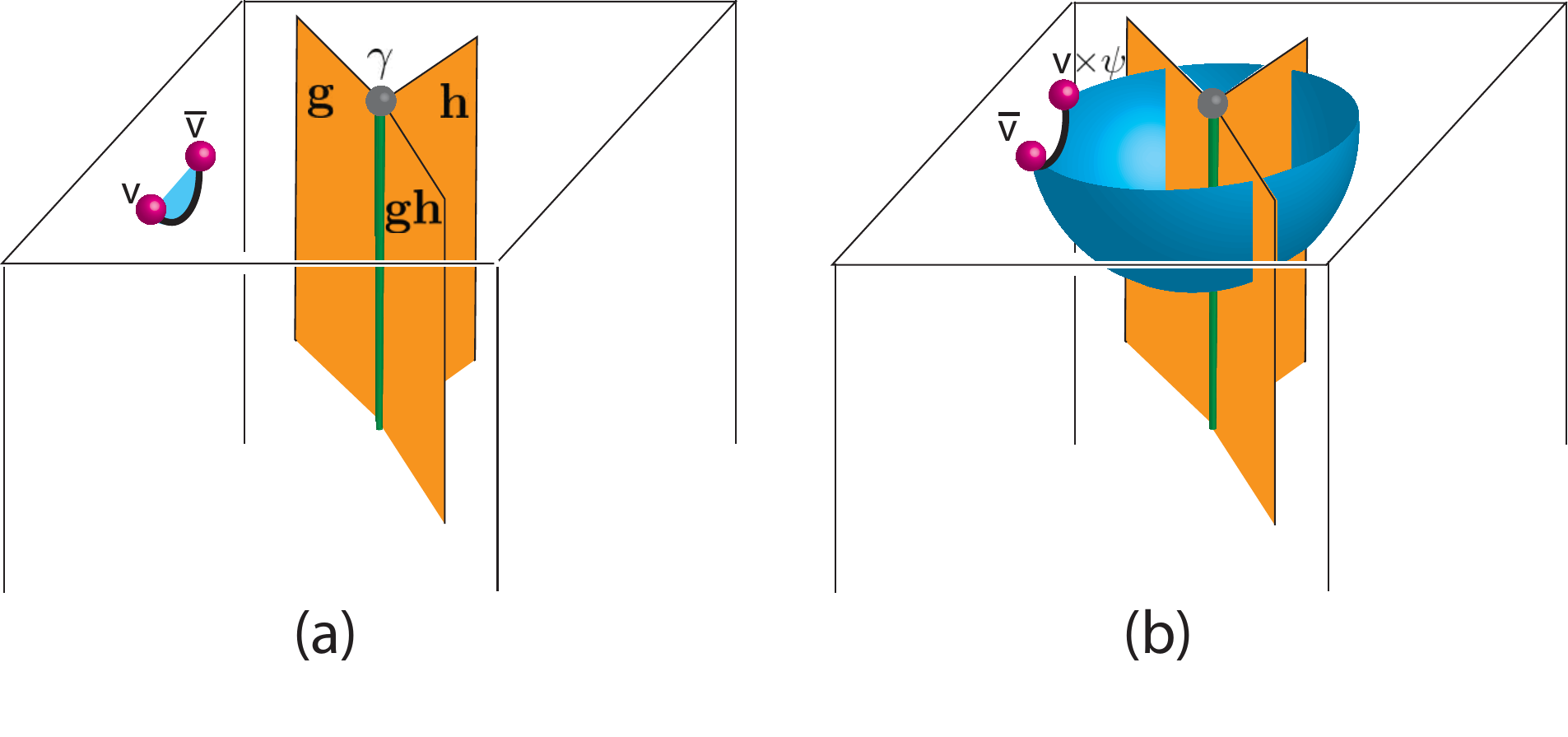}
            \caption{Anomaly inflow for the $\H^2(G_b,\Z_2)$ anomaly. The fermion SET lives on the surface of a (3+1)D fermion SPT. A trijunction of domain walls (orange) with $o_2({\bf g,h}) \neq 0$ is decorated with a Kitaev chain (green) in the bulk with a Majorana fermion $\gamma$ at its endpoint (grey). Braiding a fermion parity vortex $v$ (pink) around the trijunction transforms $v \rightarrow v\times \psi$ according to the surface theory, so re-annihilating the parity vortices leaves behind a fermion. The parity vortices on the surface are endpoints of a fermion parity vortex string that goes through the bulk (thick black line). The parity vortex creation and motion is given by a membrane operator in the bulk (blue surface) which links with the Kitaev chain. The linking of the blue membrane operator with the Kitaev chain induces a $\Z_2^f$ defect in the Kitaev chain, which locally changes the fermion parity, compensating for the additional fermion arising from the transformation $v \rightarrow v \times \psi$. }
            \label{anomaly inflow}
        \end{figure*}
        
        First, we note that the physical meaning of the $2$-cocycle $n_2({\bf g}, {\bf h})$ that characterizes a (3+1)D bulk SPT is as follows. The (3+1)D fermionic SPT state can be considered to be a superposition of all possible networks of codimension-1 domain walls. A non-trivial $n_2({\bf g}, {\bf h})$ means that the codimension-2 junction of three codimension-1 $G_b$ domain walls labeled ${\bf g}$, ${\bf h}$, and ${\bf gh} \in G_b$ are decorated with a (1+1)D Kitaev chain if $n_2({\bf g}, {\bf h})$ is non-trivial.  Therefore, we can consider our system on a 3-dimensional space with boundary, and with a particular choice of domain wall junction, as shown in Fig.~\ref{anomaly inflow}. 
        
        Second, we note that an important property of a (1+1)D Kitaev chain is that it can be coupled to a $\Z_2^f$ gauge field, and we can insert a $\Z_2^f$ symmetry defect (flux). For the Kitaev chain defined on a ring, the $\Z_2^f$ symmetry defect changes the spin structure; that is, it changes
        the boundary conditions from periodic to anti-periodic (or vice versa). It is well-known that changing the spin structure changes the fermion parity of the ground state. 
        To see this concretely, we can consider the Hamiltonian for the Kitaev chain: 
        \begin{align}
            H_K = - i \sum_{j} (u \gamma_{j}' \gamma_{j+1} \sigma_{j,j+1} + v \gamma_{j} \gamma_{j}') . 
        \end{align}
        Here $\gamma_j$, $\gamma_j'$ are independent Majorana fermion operators on site $j$ (i.e. $\{\gamma_j, \gamma_k\} = \{\gamma_j', \gamma_k'\} = 2 \delta_{jk}$ and $\{\gamma_j,\gamma_k'\} = 0$. Note that this system can also be written in terms of complex fermions $c_i = \gamma_i + i \gamma_i'$. Furthermore, we have coupled the system to a $\Z_2$ gauge field $\sigma_{j,j+1} = \pm 1$ on the links $(j,j+1)$. 

        Setting $\sigma_{j,j+1} = 1$ for all $j$, the limits $u/v \ll 1$ and $u/v \gg 1$ realize topologically distinct phases. When the system originates from a model of hopping and pairing of complex fermions $c_i$, it is natural to identify the $u/v \ll 1$ to be the trivial phase and $u/v \gg 1$ to be the topological phase. A hallmark of the topological phase is that the the fermion parity of the ground state changes in the presence of a $\Z_2$ symmetry defect on the link $(j,j+1)$ (which corresponds to setting $\sigma_{j,j+1} = -1$). This can be seen easily in the limit $v = 0$, where the ground state is simply $i \gamma_j' \gamma_{j+1} = \sigma_{j,j+1}$. Therefore taking $\sigma_{j,j+1} \rightarrow - \sigma_{j,j+1}$ changes the local fermion parity $i \gamma_j'\gamma_{j+1}$ by one. 
        
        Next, let us consider fermion parity vortices at the (2+1)D surface, which are endpoints of fermion parity vortex lines that go into the (3+1)D bulk. Let us consider a process, shown in Fig.~\ref{anomaly inflow}, where a fermion parity vortex $v$ at the surface encircles the trijunction where the defects ${\bf g}$, ${\bf h}$, ${\bf gh}$ all meet. Importantly, to come back to the original configuration, the vortex line in the bulk sweeps across a membrane that must necessarily intersect the Kitaev chain on the codimension-2 junction in the bulk. This changes the fermion parity of the Kitaev chain, which must be compensated for by a change in fermion parity on the vortex line. This can then be interpreted as a transformation $v \rightarrow v \times \psi$ in the surface theory. Therefore we see that we indeed obtain the symmetry action $\,^{\overline{\bf g h}} ( \,^{\bf g} (\,^{\bf h} v)) \times \bar{v} = o_2({\bf g}, {\bf h}) \times v \times \bar{v}$, where here we take $o_2({\bf g},{\bf h}) \in \{1, \psi\}$, which is consistent with the action $\,^{\overline{\bf g h}} ( \,^{\bf g} (\,^{\bf h} v)) = o_2({\bf g}, {\bf h}) \times v$. If $v \in \C_\sigma$, this process does not permute the vortices but instead changes the fusion channel of two $\sigma$-type vortices, which is expected because they are braiding with a Majorana zero mode. 
        
        \subsection{Dependence of $[o_2]$ on modular extension}
        \label{sec:o2independence}
        
        To define the obstruction $[o_2] \in \mathcal{H}^2(G_b, \Z_2)$, we have started with symmetry fractionalization defined on $\mathcal{C}$, picked a particular modular extension $\mathcal{C}_\nu$, and attempted to lift the symmetry action $\rho_{\bf g}$ to $\widecheck{\rho}_{\bf g}$. In principle there may be multiple different choices of $\nu$ for which there exists a lift of $\rho_{\bf g}$ to $\widecheck{\rho}_{\bf g}$. As such, $[o_2]$ also has an implicit dependence on $\nu$, which we may write as $[o_2^{(\nu)}]$. This raises the question of how $[o_2^{(\nu)}]$ may change under a valid change of $\nu$, if at all. 
        
        In the case where $\ker r = \Z_2$, then as we discussed in the Sec. \ref{subsec:inflow}, we expect that $q_{\Gamma^2}([o_2]) \in \mathcal{H}^2(G_b, \Z_2)/\Gamma^2$ should be interpreted as the 't Hooft anomaly of the theory. In other words, such a theory exists at the (2+1)D surface of a (3+1)D FSPT characterized by $[n_2] = q_{\Gamma^2}([o_2])$.  Since the modular extension $\nu$ can be changed by layering a (2+1)D invertible topological phase with chiral central charge $c_- = \nu/2$, the bulk (3+1)D system should be independent of $\nu$. This leads to the general expectation that $q_{\Gamma^2}([o_2^{(\nu)}])$ is independent of $\nu$ (for any valid $\nu$).
        
        Recently an updated version of Ref. \cite{aasen21ferm}, motivated by a conjecture of an earlier version of this paper,\footnote{An earlier version of this paper conjectured that $[o_2]$ itself, and not just its image under $q_{\Gamma^2}$, would be independent of $\nu$.} showed that for unitary symmetry groups $G_b$, the following result holds:
         \begin{equation}
            [o_2^{(\nu')}] = [\alpha_\psi]^{(\nu' - \nu)[\omega_2]}[o_2^{(\nu)}],
            \label{eqn:o2ABKResult}
        \end{equation}
        where in the above formula we are interpreting $[\omega_2]$ as $0$ or $1$ depending on whether it is non-trivial in 
        $\H^2(G_b, \Z_2)$.
        This proves the statement that $q_{\Gamma^2}([o_2^{(\nu)}])$ is independent of $\nu$ in the case where $G_b$ is unitary (as long as $\C$ admits a lift of $\rho_{\bf g}$ for all ${\bf g}$).  
        
        In what follows, we will prove that, under some assumptions\footnote{The case we do not prove is when $[\alpha_\psi]$ is non-permuting, i.e., $\widecheck{\Upsilon}_\psi$ violates locality and $\Upsilon_\psi$ respects locality. This last case is proven in v4 of Ref.~\cite{aasen21ferm}, which was posted around the same time as the revision of this paper containing the current version of Thm.~\ref{o2independence_anti}.}, $[o_2^{(\nu)}]$ is independent of modular extension in the case where $G_b$ contains anti-unitary symmetries. We will then provide additional results (already included in the first version of this paper) about how $[o_2^{(\nu)}]$ depends on modular extension when we only restrict attention to how the symmetry actions permute anyons. 
        
        \begin{theorem}
        Suppose $G_b$ contains at least one anti-unitary symmetry. Assume that $\ker r = \Z_2$ and the nontrivial element of $\ker r$ has a nontrivial permutation action. Then the obstruction $[o_2] \in \H^2(G_b,\ker r)$ is identical for every $\nu$ for which there exists, for every ${\bf g} \in G_b$, a lift $\widecheck{\rho}_{\bf g}^{(\nu)}:\C_\nu \rightarrow \C_\nu$ of $\rho_{\bf g}$.
        \label{o2independence_anti}
        \end{theorem}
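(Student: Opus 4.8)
The plan is to reduce the comparison to a single explicit stacking relation between the two extensions and then exploit the reality of the modular data of the three-fermion theory. First I would use the central-charge constraint to restrict which pairs of extensions can occur. By the Gauss-sum argument behind Theorem~\ref{thm:H1}, an anti-unitary ${\bf g}$ admits a self-lift $\widecheck{\rho}_{\bf g}:\C_\nu\to\C_\nu$ only if $c_\nu\equiv -c_\nu \pmod 8$, i.e. $c_\nu\equiv 0\pmod 4$. Since two minimal modular extensions satisfy $c_{\nu'}-c_\nu\equiv(\nu'-\nu)/2\pmod 8$, any two extensions meeting this constraint must have $\nu'-\nu\equiv 0\pmod 8$. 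Within $\nu\in\{0,\dots,15\}$ this leaves at most two compatible values, differing by $\Delta\nu=8$, so it suffices to compare $\C_\nu$ and $\C_{\nu+8}$.

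Next I would invoke the explicit relation $\C_{\nu+8}\simeq[\C_\nu\boxtimes\SO(8)_1]/\langle\psi\psi\rangle$, where $\SO(8)_1$ is the three-fermion theory with $c=4$ and the quotient condenses the boson built from the two transparent fermions. The essential feature is that all modular data of $\SO(8)_1$ is \emph{real}: its three nontrivial objects are fermions of twist $-1$ and its $S$-matrix is real, as a one-line Gauss-sum check confirms along with $c=4$. Tracking simple objects through the condensation produces a canonical identification of label sets under which the anyons $\C_0=\mathcal{C}$ are unchanged, while each vortex $x\in\C_1$ is identified with $x\otimes u$ for a fermionic vortex $u$ of $\SO(8)_1$. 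Consequently $\theta^{(\nu+8)}_x=-\theta^{(\nu)}_x$, and $S^{(\nu+8)}$ differs from $S^{(\nu)}$ only by real signs.

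The key step is that, precisely because these rescaling factors are real, they commute with complex conjugation, so the anti-unitary consistency conditions $\theta_{{}^{\bf g}x}=\theta_x^\ast$ and $S_{{}^{\bf g}x,{}^{\bf g}y}=S_{x,y}^\ast$ hold in $\C_{\nu+8}$ if and only if they hold in $\C_\nu$. This is exactly where anti-unitarity enters: for a unitary symmetry $\Delta\nu$ need not equal $8$, and a general $\mathcal{I}(\Delta\nu)$ carries complex data that breaks the argument, consistent with the theorem's hypothesis. Hence I may choose the lifts $\widecheck{\rho}_{\bf g}^{(\nu+8)}$ to realize the same vortex permutation as $\widecheck{\rho}_{\bf g}^{(\nu)}$. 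By Theorem~\ref{thm:kerRPerm}, when $\C_v\neq\emptyset$ the class $[o_2]\in\H^2(G_b,\Z_2)$ is detected entirely by whether the composite $\widecheck{\rho}_{\bf g}\widecheck{\rho}_{\bf h}\widecheck{\rho}_{\bf gh}^{-1}$ sends a $v$-type vortex $v\mapsto v$ or $v\mapsto v\times\psi$; with matching permutations this datum, and hence $\widetilde{o}_2({\bf g,h})$, is identical for the two extensions. Since $[o_2]$ is independent of the choice of lift (modifying a lift by an element of $\ker r$ changes $o_2$ only by a coboundary), this yields $[o_2^{(\nu)}]=[o_2^{(\nu+8)}]$.

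The main obstacle I expect is twofold. First, one must promote ``matching permutations'' to genuine locality-respecting autoequivalences: concretely, fixing an honest anti-unitary $G_b$-action $\tau_{\bf g}$ on $\SO(8)_1$ (whose own obstruction vanishes) and defining the lift on $\C_{\nu+8}$ as the stacked action $\widecheck{\rho}_{\bf g}\boxtimes\tau_{\bf g}$ descended through the condensation, then checking that this descent is consistent with the extension class $\omega_2$ and that the obstruction cocycle factorizes as $o_2^{(\nu+8)}=o_2^{(\nu)}\cdot o_2^{\SO(8)_1}=o_2^{(\nu)}$. Second, the degenerate case $\C_v=\emptyset$ requires separate care, since there the permutation action on vortices is trivial and cannot by itself distinguish $o_2=[\alpha_\psi]$ from $o_2=[1]$; in that case the class must instead be read off from the gauge-invariant quantity $U_{o_2}(\sigma,\psi;\sigma)=-1$ of Eq.~\ref{eqn:UsigmaPsiSigmaMinus1}, and I would verify that the stacked construction preserves this $U$-symbol.
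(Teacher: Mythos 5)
Your reduction is the same as the paper's: anti-unitarity plus the Gauss sum forces $c_\nu \equiv 0 \pmod 4$, hence any two admissible extensions differ by $\Delta\nu = 8$, i.e.\ by stacking the $c=4$ invertible phase whose fermion-parity-gauged form is the three-fermion theory, and the reality of that theory's data is what lets the anti-unitary action transfer. Where you diverge is in how the equality of the two obstruction classes is actually certified. The paper observes that this stacking leaves the fusion rules, the $S$-matrix \emph{and the $F$-symbols} of $\C_{\nu_1}$ literally unchanged, flipping only the twists of the vortices; hence the identical $U$-symbols of $\widecheck{\rho}^{(1)}_{\bf g}$ solve the consistency conditions in $\C_{\nu_2}$, $\ker r_1=\ker r_2$ with identical data, and $o_2$ is the \emph{same cocycle}, not merely the same cohomology class.

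Your route instead reads $[o_2]$ off the permutation action on $v$-type vortices via Theorem~\ref{thm:kerRPerm}, and this step is a genuine gap in the stated generality of the theorem: Theorem~\ref{thm:kerRPerm} constrains only the permutation action of elements of $\ker r$, while $[o_2]$ is valued in $\H^2(G_b,\ker r)$, which need not be $\Z_2$ and whose elements need not be determined by their permutations (the paper explicitly allows nontrivial non-permuting elements of $\ker r$ and modular isotopes in Sec.~\ref{subsubsec:kerRNonUnique}). Your own ``main obstacle'' paragraph identifies the correct repair --- transport the full autoequivalence data, not just the permutation, through the stacking/condensation and show the obstruction cocycles coincide --- but that is precisely the content of the paper's proof and is left unexecuted in yours. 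Once the invariance of the $F$-symbols is used, the separate treatment of $\C_v=\emptyset$ via $U_{o_2}(\sigma,\psi;\sigma)$ also becomes unnecessary, since the conclusion holds at the level of cocycles for arbitrary $\ker r$.
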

        
        \begin{proof} 
        Suppose that lifts of $[\rho_{\bf g}]$ exist for two modular extensions $\C_{\nu_1}$ and $\C_{\nu_2}$. Let $r_i$ be the restriction map $r_i : \Aut_{LR}(\C_{\nu_i})\rightarrow \Aut_{LR}(\C_{\nu_i})|_{\mathcal{C}}$. Suppose we have a particular lift $\widecheck{\rho}^{(1)}_{\bf g}$ of a representative $\rho_{\bf g}$ to $\C_{\nu_1}$. Since $G_b$ contains anti-unitary symmetries, the only possible $\C_{\nu_2}$ are obtained from $\C_{\nu_1}$ by stacking the minimal modular extension $\mathcal{I}(8)$ of $\{1,\psi \}$ with central charge $c_- = 4 \mod 8$ and condensing the bound state of the local fermions in the two theories.
        
        The theory $\mathcal{I}(8)$ is equivalent to the $3$-fermion topological order. Stacking and condensing such a phase has the effect of changing the topological twist of the fermion parity vortices by a minus sign,
        \begin{align}
            \theta_x \rightarrow - \theta_x \text{ if } x \in (\C_{\nu_1})_1,
        \end{align}
        while keeping the fusion rules and $S$-matrix invariant.
        
          Therefore, we can use the same anyon labels for $\C_{\nu_1}$ and $\C_{\nu_2}$. In defining the condensation procedure, we need to specify an action of $G_b$ on the 3-fermion topological order $\mathcal{I}(8)$. The only autoequivalences of $\mathcal{I}(8)$ which preserve a choice of physical fermion are the trivial one and the one which permutes its parity vortices. Let
        \begin{equation}
            \lambda_1({\bf g}) = \begin{cases}
              0 & {\bf g} \text{ non-permuting on } \mathcal{I}(8)\\
              1 & {\bf g} \text{ permutes vortices of }\mathcal{I}(8)
            \end{cases}.
        \end{equation}
        Since $\mathcal{I}(8)$ contains an Abelian parity vortex, $\lambda_1$ must be a group homomorphism $\lambda_1: G_b \rightarrow \Z_2$.
        The permutation action of $\widecheck{\rho}_{\bf g}^{(2)}$ on anyon labels is therefore the same as that of $\alpha_\psi^{\lambda_1({\bf g})}\widecheck{\rho}_{\bf g}^{(1)}$. Accordingly, the permutation action of $o_2^{(2)}({\bf g,h})$ is the same as that of $o_2^{(1)}({\bf g,h})\alpha_\psi^{d\lambda_1({\bf g,h})}$. Since we assumed that the only nontrivial element of $\ker r$ is permuting, the permutation action of $o_2^{(2)}$ determines it as an element of $\ker r$. Hence, as cohomology classes, $[o_2^{(2)}]=[o_2^{(1)}]$.
        \end{proof}
        
        In the remaining part of this section, we will provide a number of results about the dependence of $[o_2^{(\nu)}]$ on $\nu$ when we only restrict our attention to how $\rho$ and $o_2$ permute the anyons. Hence we will define $P(\C)$ to be the group of anyon permutations of $\C$ that preserve the modular $S$ and $T$ matrices of $\C$. Furthermore, for $\rho$, $o_2 \in Aut_{LR}(\C)$, we will denote $\underline{\rho}, {\underline{o_2}} \in P(\C)$ their respective permutation actions (and similarly for $P(\mathcal{C})$). Note that if we have a lift $\underline{\widecheck{\rho}} \in P(\C)$ of $\underline{\rho} \in P(\mathcal{C})$ then $\underline{\widecheck{\rho}}$ defines a permutation $\underline{o_2}$ by Eq.~\ref{eqn:o2Def}. 
        
        \begin{proposition}
        Let $\widecheck{\rho}_{\bf g}^{(1)}$ be a lift of $\rho_{\bf g}$ to $\Aut_{LR}(\C_{\nu_1})$ which defines the obstruction $o_2^{(1)} \in Z^2(G_b,\ker r)$. If $G_b$ is unitary, then there is a permutation $\underline{\widecheck{\rho}_{\bf g}^{(2)}} \in P(\C_{\nu_2})$ which lifts $\underline{\rho_{\bf g}}$ for any $\nu_2$. If $\delta \nu = \nu_2 - \nu_1$ is even, then $\underline{o_2^{(2)}} = \underline{o_2^{(1)}}$. If $\delta \nu$ is odd, then $\underline{o_2^{(2)}}$ is the trivial permutation. 
        \label{prop:unitaryH2Trivial}
        \end{proposition}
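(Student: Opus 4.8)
The plan is to realize the two modular extensions as related by fermionic stacking and to transport the permutation $\widecheck{\rho}^{(1)}_{\bf g}$ across this relation, working entirely at the level of permutations since that is all the proposition asserts. By the sixteen-fold way, $\C_{\nu_2}$ is obtained from $\C_{\nu_1}$ by stacking with the minimal modular extension $\mathcal{I}_{\delta\nu}$ of the trivial super-modular category $\{1,\psi\}$ and condensing the boson $\psi\otimes\psi'$, where $\psi'$ is the transparent fermion of $\mathcal{I}_{\delta\nu}$; that is, $\C_{\nu_2} = (\C_{\nu_1}\boxtimes\mathcal{I}_{\delta\nu})/\langle\psi\otimes\psi'\rangle$. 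Since $G_b$ is unitary, I first define a candidate lift on the product theory by $\widecheck{\rho}^{(1)}_{\bf g}\boxtimes\mathrm{id}_{\mathcal{I}_{\delta\nu}}$, acting trivially on the stacked factor. Because $\rho_{\bf g}$ fixes $\psi$ and the identity fixes $\psi'$, this map preserves the condensed boson $\psi\otimes\psi'$ and descends to a permutation $\widecheck{\rho}^{(2)}_{\bf g}$ of $\C_{\nu_2}$. As a product of maps preserving the modular data of each factor, it preserves the $S$- and $T$-matrices of $\C_{\nu_2}$, so $\widecheck{\rho}^{(2)}_{\bf g}\in P(\C_{\nu_2})$, and by construction it restricts to $\rho_{\bf g}$ on the common sub-super-modular category $\mathcal{C} = \{a\otimes 1\}$, proving the first claim. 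The cocycle $o_2^{(2)}$ then equals $o_2^{(1)}\boxtimes\mathrm{id}$ descended to $\C_{\nu_2}$ and restricts to the trivial permutation on $\mathcal{C}$, so it lies in $\ker r_2$.

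The remaining claims follow by tracking the vortex sector through the condensation, using that the deconfined objects are exactly those $a\otimes b$ with $M_{a,\psi}M_{b,\psi'}=+1$, and that by Theorem~\ref{thm:kerRPerm} any element of $\ker r_2$ fixes $\sigma$-type vortices and sends a $v$-type vortex $v$ either to $v$ or to $v\times\psi$. When $\delta\nu$ is even, $\mathcal{I}_{\delta\nu}$ is an Abelian toric-code-type extension whose vortex sector is a pair of Abelian $v$-type objects $e,m$ with $e\times\psi' = m$. A vortex $v\otimes e$ of $\C_{\nu_2}$ then remains $v$-type, and $\widecheck{\rho}^{(2)}_{\bf g}$ acts on it through $\widecheck{\rho}^{(1)}_{\bf g}$ on the first factor and trivially on the second. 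Hence $o_2^{(2)}$ flips the $\psi$-parity of such a vortex precisely when $o_2^{(1)}$ flips the parity of $v$; identifying $\ker r_1$ with $\ker r_2$ via the common generator $\alpha_\psi$ (fusion with $\psi$ is unchanged by Abelian stacking), this yields $o_2^{(2)} = o_2^{(1)}$ as permutations.

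When $\delta\nu$ is odd, $\mathcal{I}_{\delta\nu}$ is Ising-type, its vortex sector consisting of a single $\sigma$-type object with $\sigma\times\psi' = \sigma$. The deconfined vortices of $\C_{\nu_2}$ are therefore of the form $a\otimes\sigma$ with $a\in(\C_{\nu_1})_1$, and each satisfies $(a\otimes\sigma)\times\psi_{\mathrm{new}} = a\otimes\sigma$ (using the representative $1\otimes\psi'$ of the new transparent fermion), so $\C_{\nu_2}$ has only $\sigma$-type vortices. The crucial point is that condensation identifies objects differing by $\psi\otimes\psi'$, so $v\otimes\sigma \sim (v\times\psi)\otimes\sigma$ for every $v$-type $v$ of $\C_{\nu_1}$. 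The permutation $o_2^{(1)}\boxtimes\mathrm{id}$, which can at most send $v\otimes\sigma$ to $(v\times\psi)\otimes\sigma$, therefore descends to the \emph{identity} permutation on $\C_{\nu_2}$; combined with its trivial action on $\mathcal{C}$, this shows $o_2^{(2)}$ is the trivial permutation, as claimed.

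The main obstacle I expect is the anyon-condensation bookkeeping: a product object $a\otimes b$ may split into several simple summands of $\C_{\nu_2}$, so one must check that the descended map is well defined on simple objects rather than only on pre-condensed labels. This is controlled by the fact that $\widecheck{\rho}^{(1)}_{\bf g}\boxtimes\mathrm{id}$ commutes with fusion by the condensed boson and with the projection onto deconfined objects, hence genuinely permutes the simple objects of $\C_{\nu_2}$ while respecting the restriction to $\mathcal{C}$. The only other point requiring care is the identification $\ker r_1 \cong \ker r_2$ in the even case, which holds because fusion with $\psi$ is literally the same operation before and after Abelian stacking, so $\alpha_\psi$ is carried to $\alpha_\psi$ and the equality of parity-flip permutations is exact. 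Everything else reduces to the orbit-level analysis above, since $o_2$ only ever acts within the pair $\{v, v\times\psi\}$.
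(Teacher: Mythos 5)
Your overall strategy --- realize $\C_{\nu_2}=(\C_{\nu_1}\boxtimes\mathcal{I}(\delta\nu))/\{\psi\psi'\sim 1\}$ and push $\widecheck{\rho}^{(1)}_{\bf g}\boxtimes\mathrm{id}$ through the condensation --- is exactly the paper's, and your treatment of the even case (no splitting, labels in bijection, $o_2^{(2)}=o_2^{(1)}$ as permutations) is essentially correct. But your odd case rests on a false premise. You assert that when $\delta\nu$ is odd the deconfined vortices are all of the form $a\otimes\sigma$ with $(a\otimes\sigma)\times\psi=a\otimes\sigma$, so that $\C_{\nu_2}$ has only $\sigma$-type vortices and there is nothing for $o_2^{(2)}$ to permute. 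This is wrong for the objects $(a_\sigma,\sigma)$ with $a_\sigma\in(\C_{\nu_1})_\sigma$: these are fixed by fusion with the condensed boson $\psi\psi'$ and therefore \emph{split} into two simple objects $(a_\sigma,\sigma)_\pm$ of $\C_{\nu_2}$, which satisfy $(a_\sigma,\sigma)_+\times\psi=(a_\sigma,\sigma)_-$ and hence constitute the $v$-type sector $(\C_{\nu_2})_v$. So the sectors get exchanged under odd stacking ($v$-type becomes $\sigma$-type and vice versa), and the nontrivial content of the odd case lives precisely on the split particles you discarded.

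Concretely, two things are missing. First, $\widecheck{\rho}^{(1)}_{\bf g}\boxtimes\mathrm{id}$ only determines where the unordered pair $\{(a_\sigma,\sigma)_+,(a_\sigma,\sigma)_-\}$ goes; to get an element of $P(\C_{\nu_2})$ you must choose, for each $a_\sigma$ and each ${\bf g}$, which sign maps to which, and verify that the choice preserves the $S$-matrix of the condensed theory --- whose split-split block is governed by the gauge-dependent punctured $S$-matrix $S^{(\psi)}_{ab}$ of $\C_{\nu_1}$, not by $S_{ab}$. The paper fixes a gauge in which $S^{(\psi)}$ is symmetric and then defines the sign $s_a({\bf g})$ as a ratio of $U$-symbols $U_{\bf g}(\,^{\bf g}a,\,^{\bf g}\overline{a};1)U^{\ast}_{\bf g}(\,^{\bf g}a,\,^{\bf g}\overline{a};\psi)$ relative to a reference vortex $r_\sigma$ in each block. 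Second, the claim that $o_2^{(2)}$ is trivial is then a genuine computation on the split particles: one must show that the product of these sign factors over ${\bf g}$, ${\bf h}$, ${\bf gh}$ assembles into ratios of $\kappa_{\bf g,h}(\cdot,\cdot;1)/\kappa_{\bf g,h}(\cdot,\cdot;\psi)$, which cancel because $\kappa$ decomposes into anyon-wise $\beta$ factors. Your appeal to ``the map commutes with fusion by the condensed boson'' does not resolve the sign ambiguity and therefore cannot establish either the existence of the lift in $P(\C_{\nu_2})$ or the triviality of $o_2^{(2)}$ in the odd case.
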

        
        The proof is quite technical, so we briefly sketch it here and defer the details to Appendix~\ref{app:H2ChangeModularExt}. The main idea is to use the 16-fold way theorem~\cite{bruillard2017a}, which states that a generic minimal modular extension may be derived from a given  $\C_{\nu_1}$ by layering $\C_{\nu_1}$ with a minimal modular extension $\mathcal{I}(\delta \nu)$ of $\{1,\psi\}$ and condensing the bound state of the fermions in the two layers. Using results on anyon condensation~\cite{delmastro2021}, we can derive the modular data of a generic minimal modular extension from the given one $\C_{\nu_1}$. This step is straightforward when $\delta \nu$ is even because $\mathcal{I}(\delta \nu)$ is Abelian, but it is quite involved when $\delta \nu$ is odd so that $\mathcal{I}(\delta \nu)$ is non-Abelian. From the new modular data, we can explicitly construct an element of $P(\C_{\nu_2})$ for each $\widecheck{\rho}_{\bf g}^{(1)}$ and then calculate the permutation action of compositions of these permutations.
        
        Prop.~\ref{prop:unitaryH2Trivial} has the rather surprising corollary:
        \begin{corollary}
        Suppose that in every minimal modular extension of $\mathcal{C}$, every lift of $\underline{\rho_{\bf g}}$ from $P(\mathcal{C})$ to $P(\C)$ defines a lift of the autoequivalence class $[\rho_{\bf g}]$ from $\Aut_{LR}(\mathcal{C})$ to $\Aut_{LR}(\C)$. Further assume that the only possible non-permuting element of $\ker r$ is $[\widecheck{\Upsilon}_\psi]$. Then if $\Upsilon_\psi$ violates locality, $[o_2^{(\nu)}]=+1$ is trivial for all $\nu$. If $\Upsilon_\psi$ respects locality, then $[o_2^{(\nu)}]=+1$ is trivial for the eight minimal modular extensions with only $v$-type vortices.
        \label{corr:H2Independence}
        \end{corollary}
        
        \begin{proof} (Corollary~\ref{corr:H2Independence})
        First consider the case where $\Upsilon_\psi$ violates locality. Then $[\widecheck{\Upsilon}_\psi]$ is not in $\ker r$ for any minimal modular extension $\C_{\nu_2}$. But according to Proposition~\ref{prop:unitaryH2Trivial}, for $\nu_2-\nu_1$ odd, $o_2^{(2)}$ has trivial permutation action, so it must be the identity as an autoequivalence. Now repeat the argument starting from the permutation action $\widecheck{\rho}_{\bf g}^{(2)}$ to see that for $\nu_3 -\nu_2$ odd, that is, $\nu_3-\nu_1$ even, we can again construct a $\widecheck{\rho}_{\bf g}^{(3)}$ such that $o_2^{(3)}$ is also trivial.
        
        If $\Upsilon_\psi$ respects locality, then by a property discussed in Sec.~\ref{subsec:fermionicSymmFrac}, half of the modular extensions of $\mathcal{C}$ contain only $\sigma$-type fermion parity vortices and half contain only $v$-type fermion parity vortices. We may assume without loss of generality that $\C_{\nu_1}$ has only $\sigma$-type fermion parity vortices; if $\C_{\nu_1}$ contains only $v$-type fermion parity vortices, use Prop.~\ref{prop:unitaryH2Trivial} to construct a symmetry action on $\C_{\nu_1+1}$ and reindex $\nu_1 \rightarrow \nu_1 - 1 \mod 16$. Now apply the $\delta \nu$ odd case of Prop~\ref{prop:unitaryH2Trivial}. Then $o_2^{(\nu_1+1)}$ is a nonpermuting element of $\ker r$. Hence $o_2^{(\nu_1+1)}$ is the identity; the reason is that $\C_{\nu_1+1}$ has only $v$-type vortices, so all non-trivial elements of $\ker r$ are permuting (see Sec.~\ref{subsec:kerR}). Using the $\delta \nu$ even case of Prop.~\ref{prop:unitaryH2Trivial}, we obtain a non-anomalous lift for all eight modular extensions $\C_{\nu}$ which have $\nu = \nu_1+1 \mod 2$
        
         The above argument does not extend to $\nu = \nu_1 \mod 2$, since $\ker r$ contains a nontrivial non-permuting element $\widecheck{\Upsilon}_\psi$. 
        \end{proof}

        The interplay of Eq.~\ref{eqn:o2ABKResult} and Corollary~\ref{corr:H2Independence} is quite interesting. Consider a theory $\mathcal{C}$ which obeys the assumptions of Corollary~\ref{corr:H2Independence}. If $\Upsilon_\psi$ respects locality, then Eq.~\ref{eqn:o2ABKResult} and Corollary~\ref{corr:H2Independence} combine to prove that the minimal modular extensions of $\mathcal{C}$ with only $v$-type vortices have no $o_2$ obstruction and the minimal modular extensions with only $\sigma$-type vortices have $[o_2]=[\omega_2]$.
        
        On the other hand, if $\Upsilon_\psi$ violates locality, then these two statements combine to prove that $[\omega_2]=0$. Therefore, if $\Upsilon_\psi$ violates locality and the theory has unobstructed fractionalization with $[\omega_2] \neq 0$, then the assumptions of either Eq.~\ref{eqn:o2ABKResult} (that $\ker r = \Z_2$) or Corollary~\ref{corr:H2Independence} must be violated. We expect in general, given Conjecture~\ref{conjecture:kerRIsActuallyZ2} that $\ker r = \Z_2$, that the latter occurs. 
        
        \subsection{\texorpdfstring{$\SO(3)_3 \times \SO(3)_3$: an example with $k>1$}{SO(3)3xSO(3)3: an example with k>1}}
        \label{subsec:k>1}
    
        As discussed in Sec.~\ref{subsubsec:kerRNonUnique}, we cannot presently exclude the possibility that $\ker r$ is different from $\Z_2$. This might happen because the number of blocks in the $v-v$ portion of the $S$-matrix of $\C$ is $k>1$, or because certain permutations of $\C$ which preserve the modular data do not uniquely determine an element of $\Aut(\C)$. If $\ker r \neq \Z_2$, it is not immediately obvious how to connect the 't Hooft anomaly $[n_2] \in \H^2(G_b,\Z_2)/\Gamma^2$ to the obstruction $[o_2] \in \H^2(G_b, \ker r)$. We do know, however, that $\ker r$ contains a $\Z_2$ subgroup generated by $[\alpha_\psi]$.
        
        Our anomaly arguments in Secs.~\ref{subsec:Arf},\ref{subsec:inflow} are quite general and physical, so we expect that there should be a fully general connection between the 't Hooft anomaly $[n_2]$ and the $\H^2(G_b,\ker r)$ obstruction. However, our arguments only strictly hold when $\ker r = \Z_2$. Assuming Conjecture~\ref{conjecture:kerRIsActuallyZ2} neatly allows $[o_2]$ and $[n_2]$ to always be related. However, if there exists a theory $\C$ has $k>1$, then there is tension; at the level of permutations, Theorem~\ref{thm:kerR} suggests that $\ker r$ may be larger than $\Z_2$. 
        
        To explore this issue, we now present an explicit example with $k=2$ and which has a non-trivial 't Hooft anomaly $[n_2]$. We argue for this example that if Theorem~\ref{thm:kerR} applies, that is, if every permutation allowed by Theorem~\ref{thm:kerRPerm} uniquely defines an element of $\Aut(\C)$ and $\Aut(\mathcal{C})$ contains no non-trivial non-permuting elements, then the obstruction $[o_2] \in \H^2(G_b,\ker r)$ is trivial. Since $[n_2]$ is non-trivial in this example, there would be no connection between $[n_2]$ and $[o_2]$. This is in tension with the physical arguments for Conjecture~\ref{conjecture:kerRIsActuallyZ2}. We give a well-defined but difficult in practice way to test Conjecture~\ref{conjecture:kerRIsActuallyZ2} in this case.
        
         Consider the theory $\mathcal{C} = \SO(3)_3 \boxtimes \SO(3)_3 /\{\psi \psi \sim 1\}$, where the quotient means we condense the bound state of the transparent fermions from each copy of the theory. We take $G_b = \Z_2^{\bf T}$ and $G_f = \Z_4^{{\bf T},f}$. It is easy to check that this theory is super-modular. For this $G_f$, the (3+1)D FSPT classification is given by an element $\mu \in \Z_{16}$, and it is known that $\SO(3)_3$ has a $\mu=3$ anomaly~\cite{tata2021}. Hence $\mathcal{C}$ has a $\mu=6$ anomaly which is associated with the nontrivial element $[n_2] \in \H^2(\Z_2^{\bf T},\Z_2)=\Z_2$. 
        
            \begin{table}
            \centering
\renewcommand{\arraystretch}{1.3}
            \begin{tabular}{@{}clclclc}
            \toprule[2pt]
                Label && $\theta_a$ && $d_a$ &&  $\,^{\bf T}a$ \\ \hline
                $(0,0,0)$ && 1 && 1 &&  $(0,0,0)$ \\
                $(2,0,0)$ && $i$ && $1+\sqrt{2}$ && $(4,0,0)$ \\
                $(0,2,0)$ && $i$ && $1+\sqrt{2}$ && $(0,4,0)$ \\
                $(4,0,0)$ && $-i$ && $1+\sqrt{2}$ && $(2,0,0)$ \\
                $(0,4,0)$ && $-i$ && $1+\sqrt{2}$ && $(0,2,0)$ \\
                $(2,2,0)$ && -1 &&  $(1+\sqrt{2})^2$ && $(2,2,0)$ \\
                $(2,4,0)$ && 1 &&  $(1+\sqrt{2})^2$ && $(2,4,0)$ \\
                $(6,0,0)$ && -1 && 1  && $(6,0,0)$ \\
                $(1,3,\sigma)_+$ && 1 && $2+\sqrt{2}$ && $(3,1,\sigma)_{\pm_1}$ \\
                $(1,3,\sigma)_-$ && 1 &&  $2+\sqrt{2}$ && $(3,1,\sigma)_{\mp_1}$ \\
                $(3,1,\sigma)_+$ && 1 &&  $2+\sqrt{2}$ && $(1,3,\sigma)_{\pm_2}$ \\
                $(3,1,\sigma)_-$ && 1 &&  $2+\sqrt{2}$ && $(1,3,\sigma)_{\mp_2}$ \\
                $(1,1,\sigma)$ && $e^{5\pi i/4}$ &&  $2+2\sqrt{2}$ && $(3,3,\sigma)$ \\
                $(3,3,\sigma)$ && $e^{3\pi i/4}$ && $2+2\sqrt{2}$ && $(1,1,\sigma)$ \\ \bottomrule[2pt]
            \end{tabular}
            \caption{UMTC data and action of time-reversal symmetry for $\C=\mathrm{SU}(2)_6 \times \mathrm{SU}(2)_6 \times \mathrm{Ising}_{-9/2}/\{\psi \psi \sim 1\}$. The parity vortices $(1,3,\sigma)_{\pm}$ are v-type and interchanged by fusion with $\psi$; likewise for $(3,1,\sigma)_{\pm}$. The parity vortices $(1,1,\sigma)$ and $(3,3,\sigma)$ are $\sigma$-type. The notation $\pm_1$ and $\pm_2$ indicate two independent choices of signs in the action of time reversal. All four such permutations complex conjugate the modular data.}
            \label{tab:doubledSU26}
        \end{table}
        
        The minimal modular extension we consider is $\C = \SU(2)_6 \boxtimes \SU(2)_6 \boxtimes \mathcal{I}(-9)/\{\psi \psi \sim 1\}$, where $\mathcal{I}(-9)$ is the $c_-=-9/2$ minimal modular extension of $\{1,\psi\}$ and the quotient means we condense all pairs of Abelian fermions from the different theories. The particle content, topological twists, quantum dimensions, and possible actions of ${\bf T}$ are listed in Table~\ref{tab:doubledSU26}, while the $S$-matrix is given in Appendix~\ref{app:doubledSU26}. The data were derived from a slight generalization of the results of~\cite{delmastro2021}; the un-generalized results are reviewed in Appendix~\ref{app:SmatrixCondensed}.

        Examining the set of permutations which preserve the modular data, one would naively conclude that $\ker r = \Z_2^2$ because there are $k=2$ distinct blocks of the $\C_v$ part of the $S$-matrix, namely one block formed by $(1,3,\sigma)_\pm$ and one block formed by $(3,1,\sigma)_\pm$. There are thus four possible lifts of the permutation action of $\rho_{\bf T}$ to $\C$ which preserve the modular data of $\C$. Two form a group homomorphism $\Z_2^{\bf T} \rightarrow \Aut_{LR}(\mathcal{C})$; their actions are given by a choice of signs:
        \begin{align}
            (1,3,\sigma)_+ &\leftrightarrow (3,1,\sigma)_\pm \nonumber\\
            (1,3,\sigma)_- &\leftrightarrow (3,1,\sigma)_\mp \nonumber \\
            (1,1,\sigma) &\leftrightarrow (3,3,\sigma),
            \label{eqn:doubledSU26homomorphism}
        \end{align}
        where the particle content is labeled as in Appendix~\ref{app:doubledSU26}. The other two possible lifts of $\rho_{\bf T}$ form a group homomorphism $\Z_4^{\bf T} \rightarrow \Aut_{LR}(\C)$:
        \begin{align}
            (1,3,\sigma)_+ \rightarrow (3,1,\sigma)_\pm \rightarrow (1,3,\sigma)_- &\rightarrow (3,1,\sigma)_\mp \rightarrow (1,3,\sigma)_+ \nonumber \\
            (1,1,\sigma) &\leftrightarrow (3,3,\sigma)
            \label{eqn:nonHomomorphism}
        \end{align}
        Within each pair, the two possible lifts differ by the permutation action of $\alpha_\psi$.
        
        Choosing a lift in Eq.~\ref{eqn:doubledSU26homomorphism}, the permutation action of $o_2({\bf g}, {\bf h}) \in \ker r$ is trivial. One can check that $\Upsilon_\psi$ does not respect locality in this theory, so $\widecheck{\Upsilon}_\Psi$ is not in $\ker r$; accordingly, if we assume that permutations which preserve the modular data uniquely determine an element of $\Aut(\C)$, $[o_2]$ is trivial. However, as argued above, $[n_2]$ is non-trivial. We therefore expect that permutations do not uniquely determine an element of $\Aut(\C)$ in this theory.
        
        In the cases in Eq.~\ref{eqn:nonHomomorphism} where the lift is not a group homomorphism $\Z_2^{\bf T}\rightarrow \Aut_{LR}(\C)$, we see explicitly that, as permutations,
            \begin{equation}
                \widecheck{\rho}_{\bf T}^2 = \alpha_\psi.
            \end{equation}
        Therefore, if only the permutations in Eq.~\ref{eqn:nonHomomorphism} define valid autoequivalences of $\C$ and $\ker r = \Z_2$, we would indeed obtain $[o_2]=[n_2]$ for $\mathcal{C}=\SO(3)_3^2$ (recall that since $G_b$ contains anti-unitary symmetries, $\Gamma^2$ is trivial so $q_{\Gamma^2}$ is the identity map). We speculate that this is the case; in order to check this speculation, one would need to solve for the $F$- and $R$-symbols of $\C$, which is a non-trivial task, and then directly attempt to solve for the $U$-symbols for each permutation action. We expect that a solution for the $U$-symbols exists only for the permutations in Eq.~\ref{eqn:nonHomomorphism}.
        
        \section{\texorpdfstring{$\H^3(G_b,\Z_2)$}{H3(Gb, Z2)} obstruction}
        \label{sec:H3}
        
        Suppose that the $\H^2(G_b,\ker r)$ obstruction vanishes, so that we may lift a group homomorphism $[\rho_{\bf g}]:G_b \rightarrow \Aut_{LR}(\mathcal{C})$ to a group homomorphism $[\widecheck{\rho}_{\bf g}] : G_b \rightarrow \Aut_{LR}(\C)$. Given a symmetry fractionalization pattern on $\mathcal{C}$, we ask whether or not that fractionalization pattern can be lifted to $\C$. We will show that there is an obstruction to this process valued in $\H^3(G_b,\Z_2)$. Our discussion will very similar to that of Ref.~\cite{fidkowski2018}, but~\cite{fidkowski2018} assumed $G_f = G_b \times \Z_2$ and made some technical assumptions which are known to fail in certain cases. We will use our general understanding of fermionic symmetry fractionalization to remove those technical assumptions.
        
        Note also that $[\widecheck{\rho}_{\bf g}]$ defines a $G_b$ symmetry action on a UMTC $\C$, so there may be an obstruction to localizing $[\widecheck{\rho}_{\bf g}]$, that is, to finding any symmetry fractionalization on $\C$ irrespective of whether it matches the symmetry fractionalization on $\mathcal{C}$. This obstruction is valued in $\H^3(G_b,\widecheck{\A})$ and can be computed in the standard way for bosonic SETs, see~\cite{barkeshli2019}. 
        We will show that this obstruction is in fact determined by the $\H^3(G_b,\Z_2)$ obstruction. 
        
        \subsection{\texorpdfstring{$\H^3(G_b,\Z_2)$}{H3(Gb, Z2)} anomaly}
        
        We start with symmetry fractionalization data, which we choose to characterize by $\omega_a({\bf g,h})\in C^2(G_b,K(\mathcal{C}))$ satisfying Eq.~\ref{eqn:OmegaEqualsDomega}. 
        In the present language, $\Omega_a \in Z^3(G_b,K(\mathcal{C}))$. The gauge freedom $\nu_a$ appearing in Eq.~\ref{eqn:OmegaomegaGaugeFreedom} is an element of $K(\mathcal{C})$, so only $[\Omega_a]\in \H^3(G_b,K(\mathcal{C}))$ is gauge-invariant.
        
        We are also given a lift $[\widecheck{\rho}]$ of the symmetry action $\rho$ to $\C$. A representative $\widecheck{\rho}_{\bf g}$ determines $\widecheck{\kappa}_{\bf g,h}$ via
        \begin{equation}
            \widecheck{\kappa}_{\bf g,h}\widecheck{\rho}_{\bf g}\widecheck{\rho}_{\bf h} = \widecheck{\rho}_{\bf gh}.
        \end{equation}
        Restricting this equation to $\mathcal{C}$, we find that
        \begin{equation}
            r(\widecheck{\kappa}_{\bf g,h})=\kappa_{\bf g,h}.
        \end{equation}
        Hence, given a decomposition $\widecheck{\beta}_a({\bf g,h})$ of $\widecheck{\kappa}_{\bf g,h}$ as a natural isomorphism, we can simply restrict these $\widecheck{\beta}_a$ to $\mathcal{C}$ to obtain a valid gauge choice for the decomposition $\beta_a({\bf g,h})$ of $\kappa_{\bf g,h}$. We will always work in this gauge where $\widecheck{\beta}_a$ lifts $\beta_a$. In this gauge the function $\widecheck{\Omega}_a({\bf g,h,k})\in Z^3(G_b,K(\C))$ lifts $\Omega_a({\bf g,h,k}) \in Z^3(G_b,K(\mathcal{C}))$.
        
        Note the logic here - we are using the a gauge-fixing (of the $\nu$ type) of $\widecheck{\beta}_a$ to determine a gauge-fixing (again of the $\nu$ type) of $\beta_a$. In general, not every gauge choice on $\mathcal{C}$ allows $\beta_a$ and $\Omega_a$ to be lifted to $\C$, specifically when $\widecheck{\Upsilon}_\psi$ violates locality but $\Upsilon_\psi$ respects locality.
        
        We now ask whether or not there exists a lift $\widecheck{\omega}_a \in C^2(G_b,K(\C))$ such that
        \begin{equation}
            \widecheck{\Omega}_a = d\widecheck{\omega}_a
            \label{eqn:CheckOmegaEqualsdCheckomega}
        \end{equation}
        for all $a \in \C$ and such that $r(\widecheck{\omega}_a)=\omega_a$.
        
       For the moment, we assume that at least one lift $\widecheck{\omega}_a \in C^2(G_b,K(\C))$ of $\omega_a$ exists. We will prove that such a lift exists later; the reason depends on whether $\Upsilon_\psi$ and $\widecheck{\Upsilon}_\psi$ respect or violate locality.
        
        We next claim that there always exists exactly two lifts $\widecheck{\omega}_a$ of $\omega_a$ (which need not, a priori, satisfy Eq.~\ref{eqn:CheckOmegaEqualsdCheckomega}). There always exists an element $p_a \in K(\C)$ defined by
        \begin{equation}
            p_a = M_{a,\psi} = \begin{cases}
              1 & \text{ if }a \in \C_0\\
              -1 & \text{ if }a \in C_1
            \end{cases}.
        \end{equation}
        Hence, given a lift $\widecheck{\omega}_a({\bf g,h})$, there is always another one $\widecheck{\omega}_a p_a^{\alpha({\bf g,h})}$ with $\alpha \in C^2(G_b,\Z_2)$ (here $\Z_2 = \{0,1\})$. Also, we may always write
        \begin{align}
            \widecheck{\Omega}_a &= M_{a,\widecheck{\cohosub{O}}}\\
            \widecheck{\omega}_a &= M_{a,\widecheck{\cohosub{w}}}
        \end{align}
        for some $\widecheck{\coho{w}},\widecheck{\coho{O}} \in \widecheck{\A}$ because $\C$ is modular; the $p_a$ freedom mentioned above amounts to changing $\widecheck{\coho{w}} \rightarrow \widecheck{\coho{w}} \times \psi$. By super-modularity of $\mathcal{C}$, the only element of $\C$ which braids trivially with all of $\mathcal{C}_0$ is $\psi$, so the two lifts defined by $\widecheck{\coho{w}}$ and $\widecheck{\coho{w}}\times \psi$ are the only ones which restrict to $\omega_a$ on $\mathcal{C}$. Hence, if a lift exists, there are exactly two such lifts.
        
        Choose one of these lifts, which we call $\widecheck{\omega}_a$. We must ask if Eq.~\ref{eqn:CheckOmegaEqualsdCheckomega} is satisfied. Define
        \begin{equation}
            \tilde{\Omega}_a({\bf g,h,k}) = \widecheck{\Omega}_a({\bf g,h,k}) \left(d\widecheck{\omega}_a({\bf g,h,k})\right)^{-1}.
        \end{equation}
        Certainly $\tilde{\Omega}_a \in Z^3(G_b,K(\C))$ since both $\widecheck{\Omega}_a$ and $d\widecheck{\omega}_a$ are in $Z^3(G_b,K(\C))$. Also, Eq.~\ref{eqn:CheckOmegaEqualsdCheckomega} is satisfied for all $a \in \C$ if and only if $\tilde{\Omega}_a = 1 \in Z^3(G_b,K(\C))$. This will not generally be the case; however, by definition Eq.~\ref{eqn:CheckOmegaEqualsdCheckomega} is satisfied for all $a \in \mathcal{C}$. Hence $\tilde{\Omega}_a = 1$ for all $a \in \mathcal{C}$ and therefore 
        \begin{equation}
            \tilde{\Omega}_a({\bf g,h,k})=M_{a,o_3({\bf g,h,k})}
        \end{equation}
        where one can check that 
        \begin{equation}
            o_3({\bf g,h,k}) = \widecheck{\coho{O}}({\bf g,h,k}) \times \overline{d\widecheck{\coho{w}}({\bf g,h,k})} \in \{1,\psi\} \simeq \Z_2.
        \end{equation}
        By straightforward computation, $d\tilde{\Omega}_a=1$, which implies $do_3 = 1$. We have the freedom to choose a different lift $\widecheck{\omega}_a({\bf g,h}) p_a^{\alpha({\bf g,h})}$, which modifies $\tilde{\Omega}_a \rightarrow \tilde{\Omega}_a p_a^{d\alpha}$, that is, it changes $o_3$ by a $\Z_2$-coboundary. Therefore, $[o_3] \in \H^3(G_b,\Z_2)$ is a well-defined cohomology class independent of the choice of lift $\widecheck{\omega}_a$. Also, if $[o_3]=1$, then there exists a representative $\tilde{\Omega}_a=1$, i.e., some lift satisfies Eq.~\ref{eqn:CheckOmegaEqualsdCheckomega}. We therefore see that $[o_3] \in \H^3(G_b,\Z_2)$ is the obstruction to lifting symmetry fractionalization from $\mathcal{C}$ to all of $\C$. 
        
        Note that we do not need to make any explicit reference to $G_f$, which is encoded via the constraint $\eta_\psi=\omega_2$. The lifted symmetry fractionalization data automatically agrees with the symmetry fractionalization data on $\mathcal{C}$ and therefore also obeys the constraint.
        
        The above argument is essentially a reformulation of that of Ref.~\cite{fidkowski2018}. Both our argument and that of Ref.~\cite{fidkowski2018} rely on an assumption that a lift $\widecheck{\omega}_a$ always exists. At the cochain level (i.e. in an arbitrary gauge), such a lift need not exist; for example, with $\mathcal{C}=\{1,\psi\}$ and $\C=\mathrm{Ising}=\{1,\psi\}$, there is no lift of the function $\omega_a \in K(\mathcal{C})$ with $\omega_1=1, \omega_\psi = -1$ to $K(\C)$; Ref.~\cite{fidkowski2018} observes this failure but does not go further. We now carefully prove our assumptions, specifically that there always exists some gauge in which both $\Omega_a$ and $\omega_a$ lift, proving that the above definition of $[o_3]$ is always valid.
     
        We consider different cases depending on whether $\widecheck{\Upsilon}_\psi$ and $\Upsilon_\psi$ respect locality. We will heavily rely on results from~\cite{bulmash2021} summarized in Sec.~\ref{subsec:fermionicSymmFrac} which relate whether or not $\Upsilon_\psi$ respects locality to various properties of $\mathcal{C}$ and $\C$.
        
        One might ask separately if there could be an $\H^3(G_b,\widecheck{\A})$ obstruction to defining \textit{any} $G_b$ symmetry fractionalization pattern on $\C$, whether it agrees with the symmetry fractionalization on $\mathcal{C}$ or not. Obviously if the $\H^3(G_b,\Z_2)$ obstruction $[\tilde{\Omega}]$ vanishes, there cannot be any such obstruction. We claim more generally that $[\tilde{\Omega}]$ actually determines the $\H^3(G_b,\widecheck{\A})$ obstruction, and in the following we explain the relationship in each case. 
        
        \subsubsection{Case: \texorpdfstring{$\Upsilon_\psi$}{Ypsi} violates locality}
        
        If $\Upsilon_\psi$ violates locality, then $K(\mathcal{C})=K_+(\mathcal{C})$, that is, every set of phases which obey the fusion rules on $\mathcal{C}$ are $+1$ on the fermion, and also $\widecheck{\A}=\A$. Hence $\Omega_a$ and $\omega_a$ are in $K_+(\mathcal{C})$, so
        \begin{align}
            \Omega_a &= M_{a,\cohosub{O}} \nonumber \\
            \omega_a &= M_{a,\cohosub{w}} \label{eqn:omegaMutualStats}
        \end{align}
        for $\coho{O} \in Z^3(G_b,\A/\{1,\psi\})$ and $\coho{w} \in C^2(G_b,\A/\{1,\psi\})$. We can then lift $\Omega_a$ and $\omega_a$ to $\C$ straightforwardly by choosing one of two lifts $\widecheck{\coho{O}}$ of $\coho{O}$ to $Z^3(G_b,\mathcal{A})$ and one of two lifts $\widecheck{\coho{w}}$ of $\coho{w}$ to $C^2(G_b,\mathcal{A})$, which allows us to extend Eq.~\ref{eqn:omegaMutualStats} to all $a \in \C$. The lifted symmetry action $[\widecheck{\rho}]$ on $\C$, which we have already taken as a given, determines which of the two lifts $\widecheck{\Omega}_a$ we must use, while there is freedom in which lift of $\omega_a$ we choose using $p_a$, as discussed previously.
        
        Having proven that a lift exists, we now discuss the $\H^3(G_b,K(\C)) \simeq \H^3(G_b,\widecheck{\A})$ obstruction. The short exact sequence
        \begin{equation}
            1 \rightarrow \Z_2 = \{1,\psi\} \rightarrow \A \rightarrow \A/\{1,\psi\} \rightarrow 1
        \end{equation}
        induces a map
        \begin{equation}
            i: \H^3(G_b,\Z_2) \rightarrow \H^3(G_b,\A)
        \end{equation}
        that is part of the long exact sequence
        \begin{equation}
            \cdots \H^2(G_b,\A/\{1,\psi\}) \stackrel{\delta}{\rightarrow} \H^3(G_b,\Z_2) \stackrel{i}{\rightarrow} \H^3(G_b,\A) \stackrel{q}{\rightarrow} \H^3(G_b,\A/\{1,\psi\})\rightarrow\cdots 
        \end{equation}
        The fact that $\Upsilon_\psi$ violates locality means that no minimal modular extension of $\mathcal{C}$ contains an Abelian parity vortex; therefore, $\widecheck{\A} = \A$. By construction, then, $i([o_3])=[\widecheck{\coho{O}}]$, which is the actual $\H^3(G_b,\widecheck{\A})$ obstruction to symmetry localization. 
        
        Therefore, the $\H^3(G_b,\Z_2)$ obstruction actually determines the $\H^3(G_b,\widecheck{A})$ obstruction of the lifted theory in the following sense. One possibility is that $[o_3] \in \ker i$, in which case symmetry localization is not obstructed on $\C$ even if the lift of the particular symmetry fractionalization pattern in question is obstructed. The other possibility is that $i([o_3)$ is nontrivial, in which case there is an obstruction to lifting the given symmetry fractionalization pattern on $\mathcal{C}$ to $\C$ simply because there is no consistent symmetry fractionalization pattern on $\C$ at all.
        
        We note that the fact that symmetry fractionalization is unobstructed on $\mathcal{C}$, i.e. that $[\coho{O}]= q([\widecheck{\coho{O}}])$ is trivial, means that $[\widecheck{\coho{O}}]\in \ker q = \mathrm{im}\phantom{i} i$, which is why we could always find $[o_3] \in \H^3(G_b,\Z_2)$ with $i(o_3])=[\widecheck{\coho{O}}]$.
        
        \subsubsection{Case: \texorpdfstring{$\Upsilon_\psi$}{Ypsi} and \texorpdfstring{$\widecheck{\Upsilon}_\psi$}{checkYpsi} respect locality}
        
        We show that every element of $K(\mathcal{C})$ has a lift to $K(\C)$, which implies that $\Omega_a$ and $\omega_a$ lift to $K(\C)$. First suppose that $\zeta_a \in K_+(\mathcal{C})$; then
        \begin{equation}
            \zeta_a = M_{a,x}
            \label{eqn:zetaAMutualStats}
        \end{equation}
        for some $x \in \A/\{1,\psi\}$. By simply choosing a representative of $x \in \A$, we obtain a lift of $\zeta_a$ by extending Eq.~\ref{eqn:zetaAMutualStats} to all $a \in \C$. As usual, there is another lift related by sending $x \rightarrow x \times \psi$, or equivalently by modifying the lift of $\zeta_a$ by $p_a$.
        
        Suppose instead that $\zeta_a \in K_-(\mathcal{C})$. Since $\widecheck{\Upsilon}_\psi$ respects locality, $\C$ contains an Abelian fermion parity vortex; call such a parity vortex $v$, and define $\lambda_a = M_{a,v} \in K_-(\mathcal{C})$ and $\widecheck{\lambda}_a = M_{a,v} \in K_-(\C)$. Then $\lambda^{-1}_a \zeta_a \in K_+(\mathcal{C})$ and, as we have already shown, has exactly two possible lifts $\widecheck{(\lambda \zeta)}_a=M_{a,\widecheck{x}}$, where the two possible choices of $\widecheck{x} \in \A$ differ by a fermion, to $K_+(\C)$. Hence there are exactly two (distinct) possible lifts $\widecheck{\lambda}_a\widecheck{(\lambda^{-1} \zeta)}_a=M_{a,\widecheck{x}\times v}$ of $\zeta_a$ to $K_-(\C)$. These two lifts differ by changing $\widecheck{x} \times v$ by a fermion, or equivalently by modifying the lift by $p_a$.
        
        This proves that every element of $K(\mathcal{C})$ lifts to $K(\C)$, as desired.
        
        Our argument that the $\H^3(G_b,\Z_2)$ obstruction determines the $\H^3(G_b,\widecheck{\A})$ obstruction for general symmetry fractionalization on $\C$ carries through from the case where $\Upsilon_\psi$ violates locality. The only difference is that we start from a short exact sequence
        \begin{equation}
            1 \rightarrow \Z_2 \rightarrow K(\C)=\widecheck{\A} \rightarrow K(\mathcal{C}) \rightarrow 1.
        \end{equation}
        Using the fact that the (bosonic) obstruction to symmetry fractionalization on $\mathcal{C}$ is valued in $\H^3(G_b,K(\mathcal{C}))$, the rest of the argument carries through \textit{mutatis mutandis}.
        
        \subsubsection{Case: \texorpdfstring{$\Upsilon_\psi$}{Ypsi} respects locality but \texorpdfstring{$\widecheck{\Upsilon}_\psi$}{checkYpsi} does not}
        
        This case is a bit more subtle than the others because $\C$ does not contain an Abelian fermion parity vortex; hence $\widecheck{\A}=\A$, and every element of $K(\C)$ restricts to an element of $K_+(\mathcal{C})$. As such, $K_+(\C)/\Z_2 = K_+(\mathcal{C})$, where the $\Z_2$ subgroup is generated by $p_a=M_{a,\psi}$.
        
        In general, $\Omega_a$ and $\omega_a$ take values in $K(\mathcal{C})$, not necessarily in $K_+(\mathcal{C})$, so they need not have lifts to $K(\C)$. However, our gauge fixing on $\mathcal{C}$ guaranteed that $\widecheck{\Omega}_a$ lifts $\Omega_a$. We just need to check that in this gauge, $\omega_a$ also has a lift. 
        
        By Eq.~\ref{eqn:modifiedHomomorphismCheck}, $[\widecheck{\kappa}_{\bf g,h}] = [\widecheck{\Upsilon}_\psi]^{\tilde{w}_2({\bf g,h})}$. Hence
        \begin{equation}
            \widecheck{\beta}_\psi({\bf g,h})=\beta_\psi({\bf g,h})=\omega_2({\bf g,h})
        \end{equation}
        in this gauge. Since $\eta_\psi({\bf g,h})=\omega_2$ as well, we conclude using Eq.~\ref{eqn:etaDef} that 
        \begin{equation}
            \omega_\psi({\bf g,h})=+1,
        \end{equation}
        that is, $\omega_a \in K_+(\mathcal{C})$. Hence Eq.~\ref{eqn:omegaMutualStats} applies with $\coho{w}\in C^3(G_b,\A/\{1,\psi\})$. As in the previous cases, then, we may lift $\omega_a$ to $K(\C)$ by extending Eq.~\ref{eqn:omegaMutualStats} to all of $\C$, with a $C^2(G_b,\Z_2)$ choice for $\widecheck{\omega}_a$.
        
        The argument that the $\H^3(G_b,\Z_2)$ obstruction determines the $\H^3(G_b,\widecheck{\A})$ obstruction is identical to the case in which $\Upsilon_\psi$ violates locality; the argument in the latter case really only used the fact that $\widecheck{\A}=\A$ for the minimal modular extension in question, which is true whenever $\widecheck{\Upsilon}_\psi$ violates locality.
        
        \subsection{Dependence on symmetry fractionalization class}
        
        We can see explicitly how the $\mathcal{H}^3(G_b,\Z_2)$ anomaly depends on the symmetry fractionalization class on $\mathcal{C}$ as follows. As discussed in~\cite{bulmash2021}, symmetry fractionalization on $\mathcal{C}$ is an $\mathcal{H}^2(G_b,\A/\{1,\psi\})$ torsor, that is, changing symmetry fractionalization classes on $\mathcal{C}$ amounts to shifting $\w({\bf g,h})\rightarrow \w'({\bf g,h}) = \w({\bf g,h})\times \coho{t}({\bf g,h})$ 
        with $\coho{t} \in Z^2(G_b,\A/\{1,\psi\})$. The new symmetry fractionalization pattern depends only on the cohomology class $[\coho{t}]$. Given a particular lift $\widecheck{\w}$ of $\w$, then, we can pick a particular lift $\widecheck{\coho{t}}$ of $\coho{t}$ to obtain a lift 
        $\widecheck{\w}'$ of $\w'$.
        A representative of the obstruction class
        $[o_3']$ corresponding to $\widecheck{\w}'$
        is thus given by
        \begin{equation}
            o_3' = \widecheck{\O} \times \overline{d\widecheck{\w}'} = o_3 \times \overline{d\widecheck{\coho{t}}}
        \end{equation}
        Changing $\coho{t}$ by a coboundary in $\A/\{1,\psi\}$ leaves $d\coho{t}$ invariant and thus can affect $o_3$
        only through the choice of lift to $d\widecheck{\coho{t}}$. Changing the lift $\widecheck{\coho{t}}$ changes $o_3$ 
        by a $\Z_2$-coboundary, so 
        $[o_3]$ is independent of the lift.  Certainly $d\widecheck{\coho{t}} \in B^3(G_b,\A)$, but generically $d\widecheck{\coho{t}} \not\in B^3(G_b,\Z_2)$. As such, $[d\widecheck{\coho{t}}]$ is not generally trivial in $\mathcal{H}^3(G_b,\Z_2)$ and thus can change the obstruction class.
        
        \subsection{Example: two layers of semion-fermion}
        
        We consider two layers of semion-fermion topological order with $G_f=\Z_4^{{\bf T},f}=\Z_2^{\bf T} \rtimes \Z_2^f$ symmetry. The particles in $\mathcal{C}$ are generated by two Abelian semions $s_1,s_2$ with $\theta_{s_i}=+i$ and trivial mutual braiding and a transparent fermion $\psi$. Time-reversal acts as $\,^{\bf T}s_i=\psi s_i$.
        
        Up to gauge transformations, one can check that, with the anyons ordered $1,s_1,s_2,s_1s_2,\psi,\psi s_1, \psi s_2, \psi s_1 s_2$, then
        \begin{equation}
            U_{\bf T}(a,b;a\times b) = \begin{pmatrix}
            1 & 1 & 1 & 1 & 1 & 1 & 1 & 1\\
            1 & 1 & 1 & 1 & 1 & 1 & 1 & 1\\
            1 & -1 & 1 & -1 & 1 & -1 & 1 & -1\\
            1 & -1 & 1 & -1 & 1 & -1 & 1 & -1\\
            1 & -1 & -1 & 1 & 1 & -1 & -1 & 1\\
            1 & -1 & -1 & 1 & 1 & -1 & -1 & 1\\
            1 & 1 & -1 & -1 & 1 & 1 & -1 & -1\\
            1 & 1 & -1 & -1 & 1 & 1 & -1 & -1
            \end{pmatrix}
        \end{equation}
        and $\eta_a({\bf T,T}) = \{ 1,i,i,1,-1,-i,-i,-1 \}$ in the same order. There is also a solution with $\eta_a \rightarrow \eta_a^{\ast}$, but it behaves similarly. We can calculate $\kappa$ from $U$ by
        \begin{equation}
            \kappa_{\bf T,T}(a,b;a \times b) = U_{\bf T}^{\ast}(a,b;a \times b)U_{\bf T}^{\ast}(\,^{\bf T}a,\,^{\bf T}b;\,^{\bf T}(a \times b))
        \end{equation}
        from which we obtain $\beta_a({\bf T,T}) = \{1,i,i,1,-1,-i,-i,-1\}$ in our preferred gauge. From this we find 
        \begin{equation}
            \omega_a({\bf T,T}) = \frac{\beta_a({\bf T,T})}{\eta_a({\bf T,T})} = +1 \text{ for all }a
        \end{equation}
        
        We now need to obtain $\O$ of the modular extension. The modular extension we consider is the tensor product $\U_2 \times \U_2 \times \U_{-2} \times \U_{-2}$, whose quasiparticles are generated by $s_1,s_2,v$, and $\psi v$, where $v$ is a semionic fermion-parity vortex with $\theta_v=-i$. Note that $\theta_{\psi v}=-i$ as well due to the nontrivial braiding. Coincidentally this theory is two copies of the double-semion theory, but that fact plays no role here. Note that $\widecheck{\Upsilon}_\psi$ respects locality since there is an Abelian fermion parity vortex.
        
        Our earlier expression for $\omega$ can now be written
        \begin{equation}
            \omega_a = M_{a,1},
        \end{equation}
        for $a \in \mathcal{C}$, up to a fermion. That is, $\w({\bf T,T}) = 1$.
        
        There are two options for the extension of the action of time-reversal. Note that $s_1s_2$ is a fermion, so $vs_1s_2$ and $v\psi s_1 s_2$ are both semions with $\theta=+i$. Either we can set $v \rightarrow v s_1s_2$ or $v \rightarrow v\psi s_1 s_2$. We choose the latter permutation action first.
        
        Solving the $U-F$ and $U-R$ consistency equations by computer, with the anyons ordered
        \begin{equation}
        1,s_1,s_2,s_1s_2,v,vs_1,vs_2,vs_1s_2,v\psi,v\psi s_1, v\psi s_2, v\psi s_1s_2, \psi, \psi s_1, \psi s_2, \psi s_1 s_2, \nonumber
        \end{equation}
        we obtain
        \setcounter{MaxMatrixCols}{20}
        \begin{equation}
            U_{\bf T}(a,b;a\times b) = \begin{pmatrix}
            1 & 1 & 1 & 1 & 1 & 1 & 1 & 1 & 1 & 1 & 1 & 1 & 1 & 1 & 1 & 1\\
            1 & 1 & 1 & 1 & 1 & 1 & 1 & 1 & 1 & 1 & 1 & 1 & 1 & 1 & 1 & 1\\
            1 & -1 & 1 & -1 & -1 & 1 & -1 & 1 & -1 & 1 & -1 & 1 & 1 & -1 & 1 & -1\\
            1 & -1 & 1 & -1 & 1 & -1 & 1 & -1 & 1 & -1 & 1 & -1 & 1 & -1 & 1 & -1\\
            1 & 1 & -1 & -1 & -1 & -1 & 1 & 1 & -1 & -1 & 1 & 1 & 1 & 1 & -1 & -1\\
            1 & 1 & 1 & 1 & -1 & -1 & -1 & -1 & 1 & 1 & 1 & 1 & -1 & -1 & -1 & -1\\
            1 & 1 & -1 & -1 & 1 & 1 & -1 & -1 & -1 & -1 & 1 & 1 & -1 & -1 & 1 & 1\\
            1 & 1 & 1 & 1 & -1 & -1 & -1 & -1 & -1 & -1 & -1 & -1 & 1 & 1 & 1 & 1\\
            1 & 1 & -1 & -1 & 1 & 1 & -1 & -1 & -1 & -1 & 1 & 1 & -1 & -1 & 1 & 1\\
            1 & 1 & 1 & 1 & -1 & -1 & -1 & -1 & 1 & 1 & 1 & 1 & -1 & -1 & -1 & -1\\
            1 & 1 & -1 & -1 & 1 & 1 & -1 & -1 & -1 & -1 & 1 & 1 & -1 & -1 & 1 & 1\\
            1 & 1 & 1 & 1 & 1 & 1 & 1 & 1 & 1 & 1 & 1 & 1 & 1 & 1 & 1 & 1\\
            1 & -1 & -1 & 1 & -1 & 1 & 1 & -1 & -1 & 1 & 1 & -1 & 1 & -1 & -1 & 1\\
            1 & -1 & -1 & 1 & 1 & -1 & -1 & 1 & 1 & -1 & -1 & 1 & 1 & -1 & -1 & 1\\
            1 & 1 & -1 & -1 & -1 & -1 & 1 & 1 & -1 & -1 & 1 & 1 & 1 & 1 & -1 & -1\\
            1 & 1 & -1 & -1 & -1 & -1 & 1 & 1 & -1 & -1 & 1 & 1 & 1 & 1 & -1 & -1
            \end{pmatrix}
        \end{equation}
        from which we obtain $\kappa$ and 
        \begin{equation}
            \beta_a = \{1,i,i,1,i,-1,1,-i,-i,1,-1,i,1,i,i,1\}
        \end{equation}
        up to a gauge transformation. Then
        \begin{equation}
            \Omega_a = \{1,1,1,1,-1,-1,-1,-1,-1,-1,-1,-1,1,1,1,1\} = M_{a,\psi}
        \end{equation}
        so $\widecheck{\O} = \psi$. Note that $\widecheck{\O}$ is not ambiguous by a fermion since it is defined for all of $\C$. We can now compute, whether we choose $\w({\bf T,T})=1$ or $\psi$,
        \begin{equation}
            \tilde{\O}({\bf T,T,T}) = \widecheck{\O}({\bf T,T}) \overline{d\w({\bf T,T})} = \psi \times 1 = \psi
        \end{equation}
        Hence $[\tilde{\O}] \in \H^3(\Z_2^{\bf T},\Z_2)$ is nontrivial and the symmetry fractionalization is obstructed. 
        
        One can check that choosing the other action $v \rightarrow v s_1 s_2$ under time-reversal leads to the same result.
        
        There is a shortcut to see that there is an inconsistency here. We stated above that 
        \begin{equation}
            \eta_{s_1s_2}=+1=-\theta_{s_1 s_2} \text{ and } \eta_{\psi s_1 s_2} = -1 = -\theta_{\psi s_1 s_2}
        \end{equation}
        One can show on general grounds that \cite{barkeshli2019},
        \begin{equation}
            \eta_a = \theta_a \text{ if } a = b \times {\bf{T}}(b)
            \label{eqn:etaThetaConstraint}
        \end{equation}
        for some $b$ in the category in question. For $a=s_1s_2$ or $\psi s_1s_2$, we have $\eta_a \neq \theta_a$ for this fractionalization pattern. Eq.~\ref{eqn:etaThetaConstraint} is satisfied in $\mathcal{C}$ because, for the $a$ in question, there is no $b \in \mathcal{C}$ for which $a=b \times {\bf T}(b)$. However, in the modular extension $\C$, such a $b$ does exist; $s_1s_2 = vs_1s_2 \times {\bf T}(vs_1s_2)$ if we use the permutation action $v \rightarrow v s_1 s_2$, and $\psi s_1 s_2 = v s_1 s_2 \times {\bf T}(vs_1s_2)$ under the permutation action $v \rightarrow v\psi s_1 s_2$. Therefore, there is an inconsistency between the consistency of fractionalization in the modular extension and the fractionalization pattern in $\mathcal{C}$.

        \subsection{'t Hooft anomaly and dependence of $[o_3]$ on choices}
        \label{subsec:o3ChoiceDependence}
        
        In defining $[o_3]$, we have made two choices. We made a choice of modular extension $\C_\nu$ that is free of $\mathcal{H}^2$ obstruction and a choice of lift $\widecheck{\rho}$. Given these choices, it then makes sense to ask whether the symmetry fractionalization data $\{ \eta_a \}$ on $\mathcal{C}$ can be lifted to $\C_\nu$, and $[o_3] \in \mathcal{H}^3(G_b, \Z_2)$ is the obstruction to such a lift. In principle, $[o_3]$ could depend on the choices $\nu$ and $\widecheck{\rho}$; to highlight this dependence, we can write $[o_3^{(\nu, \widecheck{\rho})}]$. 
        In Sec. \ref{subsec:fermionSPT}, we saw that (3+1)D FSPTs define an element $[n_3] \in \mathcal{H}^3(G_b, \Z_2)/\Gamma^3$, if the lower layer data, $n_1, n_2$, vanish. Since $q_{\Gamma^3}( [o_3]) \in \mathcal{H}^3(G_b, \Z_2)/\Gamma^3$,
        it is therefore natural to assume that (2+1)D FSETs with vanishing $[o_1]$ and $[o_2]$ obstructions must exist at the surface of (3+1)D FSPTs characterized by $[n_3] = q_{\Gamma^3}( [o_3])$ (and vanishing $n_1, n_2$).

        It is natural to expect that the super-modular category $\mathcal{C}$ and its symmetry fractionalization data fully determine the 't Hooft anomaly, equivalently the (3+1)D FSPT that hosts the given theory at its surface. It would thus follow that $q_{\Gamma^3}( [o_3^{(\nu, \widecheck{\rho})}])$ should be independent of valid changes of $\nu$ and $\widecheck{\rho}$. Below we will examine this expectation in detail.
        
        Given a lift $\widecheck{\rho}$ for $\C_{\nu}$ and a group homomorphism $\pi \in Z^1(G_b,\Z_2)$ from $G_b$ to $\Z_2$, one can obtain a specific topological autoequivalence $\widecheck{\rho}_{\bf g}'$ on $\C_{\nu'}$~\cite{aasen21ferm}. In the case $\nu'-\nu = 0$, this amounts to modifying the lift $\widecheck{\rho}_{\bf g}$ for a fixed modular extension as follows:
        \begin{equation}
            \widecheck{\rho}_{\bf g} \rightarrow \widecheck{\rho}_{\bf g}' = \alpha_\psi^{\pi({\bf g})}\widecheck{\rho}_{\bf g}.
        \end{equation}
        
        Then, as we will explain, we expect the following result to hold in all cases:
         \begin{equation}
             [o_3^{(\nu', \widecheck{\rho}')}] = [o_3^{(\nu, \widecheck{\rho})}] + [s_1] \stdcup [\pi] \stdcup [\pi] +  [\pi] \stdcup [\tilde{\omega}_2] +  \frac{\nu' - \nu}{2} [\tilde{\omega}_2] \stdcup_1 [\tilde{\omega}_2]
            \label{eqn:o3ChangeModExt}
        \end{equation}
        Note that when $\nu' -\nu$ is odd, we can only consider the change of $[o_3]$ when $[\tilde{\omega}_2]$ and $[s]$ are trivial, in which case Eq. \ref{eqn:o3ChangeModExt} gives $[o_3^{(\nu', \widecheck{\rho}')}] = [o_3^{(\nu, \widecheck{\rho})}]$. This is because when $[\tilde{\omega}_2]$ is non-trivial and $\nu' -\nu$ is odd, Eq.~\ref{eqn:o2ABKResult} implies that either $[o_2^{(\nu)}]$ or $[o_2^{(\nu')}]$ is non-trivial, in which case $[o_3]$ would be ill-defined. Also, this equation is only meaningful with $s_1 \neq 0$ when $\nu'-\nu = 0 \mod 8$.
        
        Motivated by a conjecture of an earlier version of this paper,\footnote{An earlier version of this paper conjectured that $[o_3]$ itself, and not just its image under $q_{\Gamma^3}$ would be independent of $\nu$.} version 4 of Ref.~\cite{aasen21ferm} proved the above formula in general except for the case when $\Upsilon_\psi$ violates locality with $\nu'-\nu$ odd (which forces $\tilde{\omega}_2 = 0$). We reproduce~\footnote{v3 of Ref.~\cite{aasen21ferm} contained Eq.~\ref{eqn:o3ChangeModExt} restricted to the case $s_1 = 0$, $\nu' - \nu$ even, and $\widecheck{\Upsilon}_\psi$ respects locality. After discussions between the present authors and the authors of Ref.~\cite{aasen21ferm}, v4 corrected their formula for the change of $\widecheck{\Omega}_a$ under change of lift when $\widecheck{\Upsilon}_\psi$ violates locality and used this result to generalize Eq.~\ref{eqn:o3ChangeModExt} to all cases except when $\Upsilon_\psi$ violates locality with $\nu'-\nu$ odd.} the proof for $\nu'-\nu = 0$ (with $G_b$ allowed to be anti-unitary) in Appendix~\ref{app:o3Changes}, and also discuss invariance of $[o_3]$ under various gauge transformations. The remaining case where $\Upsilon_\psi$ violates locality with $\nu'-\nu$ odd is technically challenging, and therefore still open, but we see no conceptual reason to expect the formula to fail.
        
        The last unproven case notwithstanding, this result shows that $q_{\Gamma^3}([o_3^{(\nu,\widecheck{\rho})}])$ is the same for all $\nu$ with vanishing $[o_2^{(\nu)}]$ and all valid lifts $\widecheck{\rho}$. It also shows that if $q_{\Gamma^3}([o_3^{(\nu,\widecheck{\rho})})$ is trivial in $\H^3(G_b,\Z_2)/\Gamma^3$, then there exists a choice of $\nu$ and $\widecheck{\rho}$ such that $[o_3] \in \H^3(G_b,\Z_2)$ is trivial. As such, $q_{\Gamma^3}([o_3])$ should indeed be viewed as a piece of the 't Hooft anomaly of the FSET.
        
        \section{$\H^4(G_b,\U)$ obstruction and 't Hooft anomaly}
        \label{H4sec}
        
        Suppose that the obstructions $o_1,o_2,o_3$ all vanish. Then we can lift the full set of symmetry fractionalization data on $\mathcal{C}$ to $\C$. That is, we now have a bosonic SET, i.e. a UMTC $\C$ equipped with $G_b$ symmetry fractionalization data. One can then attempt to gauge $G_b$. As a first step, one must apply the standard formalism for bosonic topological phases~\cite{barkeshli2019} to construct a $G_b$-crossed extension of $\C$, that is, a theory which describes both the excitations in $\C$ and $G_b$ symmetry defects. There is a known obstruction~\cite{barkeshli2019,ENO2010}, which we shall call $[o_4]$, to doing so, valued in $\H^4(G_b,\U)$. $[o_4]$ quantifies the failure of the consistency of fusion of the symmetry defects, that is, it quantifies an inability to define $F$-symbols for the defects which obey the pentagon equation. This obstruction is the last obstruction in the anomaly cascade.
        
        The 4-cocycle $o_4$ depends explicitly on all of the choices required to gauge fermion parity and define symmetry fractionalization on $\C$. We can denote this dependence explicitly by writing $o_4^{(\nu, \widecheck{\rho}, \widecheck{\eta})}$. 

        Therefore we have a 4th cohomology class
        \begin{align}
          [o_4^{(\nu, \widecheck{\rho}, \widecheck{\eta})}] \in \mathcal{H}^4(G_b, \U) . 
        \end{align}
        $o_4^{(\nu, \widecheck{\rho}, \widecheck{\eta})}$ is in general a non-trivial function of its arguments. 
        
        It is straightforward to see how $[o_4]$ changes under changing the symmetry fractionalization class $\widecheck{\eta}$. Changes of symmetry fractionalization which are consistent with the symmetry fractionalization on $\mathcal{C}$ are given by:
        \begin{align}
        \widecheck{\eta}_a({\bf g}, {\bf h}) \rightarrow \widecheck{\eta}_a({\bf g}, {\bf h}) M_{a, \cohosub{t}({\bf g}, {\bf h})},   
        \end{align}
        with $\coho{t}({\bf g}, {\bf h}) \in \{1, \psi\} \simeq \Z_2$. One can then use the relative anomaly formula~\cite{barkeshli2019rel} to compute
        \begin{equation}
            o_4({\bf g,h,k,l}) \rightarrow o_4({\bf g,h,k,l}) R^{\cohosub{t}({\bf k,l}),\cohosub{t}({\bf g,h})}\eta_{\cohosub{t}({\bf k,l})}({\bf g,h}) = o_4({\bf g,h,k,l})(-1)^{\cohosub{t} \stdcup \cohosub{t} + \tilde{\omega}_2 \stdcup \cohosub{t}}
        \end{equation}
        where $\tilde{\omega}_2$ is the additive $\Z_2$ representation of $\omega_2$ appearing in Eq.~\ref{eqn:w2Def}. Comparing to Eq.~\ref{eqn:Gamma4Subgroup}, we see that $o_4$ changes by an element of $\Gamma^4$.
        
        We can also consider how $[o_4]$ depends on the choice of $\nu$ and $\widecheck{\rho}$. In general, changing $\nu$ and $\widecheck{\rho}$ may change the lower-level obstructions $[o_2]$, and $[o_3]$, in which case the change of $[o_4]$ is not well-defined. Nevertheless, we may consider changes in $\nu$ and $\widecheck{\rho}$ such that $o_2$ and $o_3$ remain trivial. It would be interesting to derive a general formula for how $[o_4]$ changes under changing $\nu$ and $\widecheck{\rho}$. It is not clear if a simple general formula exists. 
        
        While $[o_4^{(\nu, \widecheck{\rho}, \widecheck{\eta})}]$ is in general a non-trivial function of its arguments, we can consider the image of $[o_4]$ under the map $q_{\Gamma^4}$ defined in Eq.~\ref{eqn:qGamma4}. Below we conjecture that $q_{\Gamma^4}( [o_4] )$ is independent of the choices $\nu, \widecheck{\rho}, \widecheck{\eta}$. 

        Recall that in Sec.~\ref{subsec:fermionSPT}, we explained that (3+1)D FSPTs, and therefore 't Hooft anomalies for (2+1)D FSETs, are classified by a set of data $(n_1, n_2, n_3, \nu_4)$. Furthermore, if $n_1, n_2, n_3 = 0$, then (3+1)D FSPTs, and therefore 't Hooft anomalies of (2+1)D FSETs, are characterized by an element
        \begin{align}
            [\nu_4] \in \mathcal{H}^4(G_b, \U)/\Gamma^4. 
        \end{align}

        We expect that the 't Hooft anomaly of the (2+1)D system is entirely a property of the quasiparticles of the fermionic theory, described by the super-modular category $\mathcal{C}$, together with the symmetry action $\rho$ and symmetry fractionalization data $\eta$. If this expectation is correct, then the anomaly should be independent of all the choices involved in lifting the symmetry fractionalization data to a given modular extension. Anomaly matching therefore leads to the expectation that a (2+1)D FSET with $o_1,o_2,o_3$ all vanishing and some $[o_4]$ can only exist at the surface of a (3+1)D FSPT with $n_1,n_2,n_3$ vanishing and some $[\nu_4]$ such that 
        \begin{align}
            [\nu_4] = q_{\Gamma^4}( [o_4] ) . 
        \end{align}

        The above discussion then leads us to the following formal conjecture:
        \begin{conjecture}
        \label{H4conj}
        \begin{enumerate}
            \item[\phantom{a}]
            \item[(a)] $q_{\Gamma^4}( [o_4^{(\nu, \widecheck{\rho}, \widecheck{\eta})}]$ is independent of changing $\nu, \widecheck{\rho}, \widecheck{\eta}$, as long as $[o_1], [o_2], [o_3]$ all vanish. 
            \item[(b)] If $q_{\Gamma^4}( [o_4^{(\nu, \widecheck{\rho}, \widecheck{\eta})}])$ is trivial in $\mathcal{H}^4(G_b, \U)/\Gamma^4$, then there exists a choice of $\nu, \widecheck{\rho}, \widecheck{\eta}$ such that $[o_4^{(\nu, \widecheck{\rho}, \widecheck{\eta})}]$ is trivial in $\mathcal{H}^4(G_b, \U)$. 
        \end{enumerate}
        \end{conjecture}
        
        \section{Additional examples}
        \label{sec:examples}
        
         \subsection{T-Pfaffian}
            
            T-Pfaffian is the surface theory for the $G_b = U(1) \rtimes \Z_2^{\bf T}$ topological insulator, where $G_f = [U(1) \rtimes \Z_4^{{\bf T},f}]/\Z_2$. It consists of a subcategory $\mathcal{C}$ of the Ising $\times \U_{-8}$ state as follows. Labeling elements of Ising $\times \U_{-8}$ by $a_j$ with $a \in \{I,\psi,\sigma\}$ and $j=0,1,\ldots,7$, the quasiparticle content of $\mathcal{C}$ consists of the twelve quasiparticles $\{I_{2k},\psi_{2k},\sigma_{2k+1}\}$ for $k=0,1,2,3$. The transparent fermion of the category is $\psi=\psi_4$ (caution with the notation; $\psi$ alone means the physical fermion of $\mathcal{C}$, and $\psi_k$ is a label in Ising $\times \U_{-8}$). Time reversal $[\rho_{\bf T}]$ interchanges $I_2 \leftrightarrow \psi_2$ and $I_6 \leftrightarrow \psi_6$.
            
            It is not hard to check that $\Upsilon_\psi$ respects locality in this theory; therefore, $\Aut(\mathcal{C})=\Aut_{LR}(\mathcal{C})$, and one can check that $[\rho_{\bf g}]$ is determined entirely by its permutation action. In particular, writing ${\bf g} = (e^{i\theta_{\bf g}},{\bf T}^{a_{\bf g}})$ for $a\in\{0,1\}$, then $[\rho_{\bf g}]$ is the identity if $a_{\bf g}=0$ and is nontrivial and equal to $[\rho_{\bf T}]$ if $a_{\bf g}=1$, independent of $\theta_{\bf g}$.
            
            The minimal modular extensions of this theory have $c_- \in \frac{1}{2}\Z$. The minimal modular extension $\C$ which is compatible with time reversal, i.e., with $c_-=0$ can be written
            \begin{equation}
                \C = \left(\mathcal{C} \boxtimes \text{Ising}\right)/\{\psi \psi'\sim 1\},
                \label{eqn:gaugedTPfaffian}
            \end{equation}
            where the denominator means condensing the bound state of the physical fermion $\psi$ in $\mathcal{C}$ with the fermion $\psi'$ in the additional copy of Ising. One can check that $\C$ has only $\sigma$-type fermion parity vortices and admits a permutation action of ${\bf T}$ which lifts the permutation action of $\rho_{\bf T}$ as shown in Refs.~\cite{Bonderson13d,chen2014b,barkeshli2019tr}. Since $\mathcal{C}$ has no modular isotopes\cite{Bonderson07b}, the aforementioned permutation must define a full autoequivalence $[\widecheck{\rho}_{\bf T}]$ of $\mathcal{C}$, so the $\H^1(G_b,\Z_{\bf T})$ anomaly vanishes:
            \begin{equation}
                [o_1] = 0 \in \H^1(G_b,\Z_{\bf T}).
            \end{equation}
            
            Since all the vortices are $\sigma$-type, $\widecheck{\Upsilon}_\psi$ does not respect locality and is therefore the nontrivial element of $\ker r = \Z_2$. This permutation action squares to the identity permutation, although this does not guarantee that the $\H^2(G_b,\Z_2)$ anomaly vanishes because $\widecheck{\Upsilon}_\psi$ is non-permuting.
            
            Physically, we expect that the anomaly arises because the physical fermion $\psi$ carries nontrivial $\U^f$ quantum numbers, but the fermion parity vortices are all $\sigma$-type, that is, they absorb $\psi$. The fact that $\psi$ carries nontrivial $\U^f$ quantum numbers enters through $\eta_\psi$, which appears at the level of the $\H^3(G_b,\Z_2)$ obstruction. We therefore conjecture the following:
            \begin{conjecture}
            The T-Pfaffian state has vanishing $[o_2]$, but non-vanishing $[o_3]$.
            \end{conjecture}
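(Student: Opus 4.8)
The plan is to handle the two claims at their respective layers of the cascade, using the structural facts already established for this theory: $\Upsilon_\psi$ respects locality while $\widecheck{\Upsilon}_\psi$ does not, every fermion parity vortex of $\C$ is $\sigma$-type, $\ker r = \Z_2$, and $[\widecheck{\rho}_{\bf T}]$ squares to the trivial permutation. For the first claim $[o_2]=0$, I note that since $[\alpha_\psi]$ and $[\widecheck{\Upsilon}_\psi]$ are both nontrivial elements of $\ker r \simeq \Z_2$ they coincide and generate $\ker r$. The symmetry action is the identity on $U(1)$ and nontrivial only on the ${\bf T}$-factor, so the only component of the cocycle $o_2\in Z^2(G_b,\ker r)$ that can be nontrivial is $o_2({\bf T},{\bf T})=\widecheck{\rho}_{\bf T}^{2}$ (using Eq.~\ref{eqn:o2Def} and ${\bf T}^2=1$ in $G_b$), which lies in $\ker r$ and equals either $[1]$ or $[\alpha_\psi]$. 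I would decide between these by evaluating the gauge-invariant quantity $U_{o_2({\bf T},{\bf T})}(\sigma,\psi;\sigma)$, which by Eq.~\ref{eqn:UsigmaPsiSigmaMinus1} equals $-1$ for $[\alpha_\psi]$ and $+1$ for $[1]$ (equivalently, by Sec.~\ref{subsec:Arf}, whether $\widecheck{\rho}_{\bf T}^2$ acts by the Arf invariant on the Ramond-Ramond torus sector). Concretely, I would build $\widecheck{\rho}_{\bf T}$ from the description $\C=(\mathcal{C}\boxtimes\mathrm{Ising})/\{\psi\psi'\sim1\}$ of Eq.~\ref{eqn:gaugedTPfaffian}, with ${\bf T}$ acting by the known permutation lift on $\mathcal{C}$ and the standard anti-unitary action on the auxiliary $\mathrm{Ising}$, and track the action of $\widecheck{\rho}_{\bf T}^{2}$ on $\ket{\sigma,\psi;\sigma}$ through the condensation; the expected answer $+1$ matches the absence of any Kitaev-chain decoration in the bulk topological insulator. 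Theorem~\ref{o2independence_anti} guarantees this is the same for every admissible modular extension, so it suffices to treat the single $c_-=0$ extension.

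For the second claim $[o_3]\neq0$, I would run the obstruction theory of Sec.~\ref{sec:H3} in the case where $\Upsilon_\psi$ respects locality but $\widecheck{\Upsilon}_\psi$ does not. First fix fermionic $G_f$ symmetry fractionalization on $\mathcal{C}$ realizing the $U(1)\rtimes\Z_2^{\bf T}$ action, with the physical fermion $\psi$ carrying its nontrivial $\U^f$ quantum numbers encoded in $\eta_\psi$ subject to $\eta_\psi=\omega_2$; from this I would extract $\omega_a\in C^2(G_b,K(\mathcal{C}))$ and $\Omega_a=(d\omega)_a$, gauge-fix them into $K_+(\mathcal{C})$ as prescribed in that case, lift to $\widecheck{\Omega}_a=M_{a,\widecheck{\O}}$ and $\widecheck{\omega}_a$, and form $\tilde{\Omega}_a=\widecheck{\Omega}_a(d\widecheck{\omega}_a)^{-1}\in Z^3(G_b,\Z_2)$. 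The target is to show $[\tilde{\Omega}]$ is the nontrivial class in $\H^3(U(1)\rtimes\Z_2^{\bf T},\Z_2)$. The cleanest detector, by analogy with the doubled semion-fermion computation, is the gauge-invariant constraint $\eta_a=\theta_a$ for $a=b\times\,^{\bf T}b$ of Eq.~\ref{eqn:etaThetaConstraint}: this holds automatically in $\mathcal{C}$ but can fail in $\C$ once a $\sigma$-type vortex $b$ (which absorbs $\psi$ and carries the relevant $U(1)$ data) furnishes a new solution $a=b\times\,^{\bf T}b$ absent from $\mathcal{C}$, and exhibiting one such pair with $\eta_a\neq\theta_a$ forced by the $\mathcal{C}$ data demonstrates the obstruction directly.

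The main obstacle is the explicit data of the condensed theory $\C=(\mathcal{C}\boxtimes\mathrm{Ising})/\{\psi\psi'\sim1\}$: both the sign of $U_{o_2}(\sigma,\psi;\sigma)$ in the first claim and the $\eta_a\neq\theta_a$ mismatch in the second require enough of the $F$- and $R$-symbols of $\C$ to carry the vertex computations through the condensation. The presence of the continuous $U(1)$ also complicates the cohomological bookkeeping, so rather than comparing $3$-cocycles directly I would reduce $\H^3(U(1)\rtimes\Z_2^{\bf T},\Z_2)$ to a small set of generators and evaluate the pairing of $[\tilde{\Omega}]$ against the corresponding cycles; I expect the nontrivial generator that is detected to be precisely the one sensitive to the $U(1)$ charge of $\psi$, matching the physical picture that the anomaly originates from $\eta_\psi$ with all vortices $\sigma$-type.
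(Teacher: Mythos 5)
The statement you are asked to prove is stated in the paper only as a conjecture: the authors give a physical motivation (the anomaly should live in $[o_3]$ because $\psi$ carries nontrivial $\U^f$ charge through $\eta_\psi$ while every fermion parity vortex of $\C$ is $\sigma$-type and so absorbs $\psi$), and then explicitly concede that ``checking this conjecture explicitly would require knowledge of the full $F$- and $R$-symbols of the gauged T-Pfaffian state,'' which they do not know how to extract from the condensation description of Eq.~\ref{eqn:gaugedTPfaffian}. Your proposal reproduces the paper's structural setup correctly (with $\C_v$ empty, $\ker r=\Z_2$ is generated by the non-permuting $[\widecheck{\Upsilon}_\psi]=[\alpha_\psi]$; the permutation $\widecheck{\rho}_{\bf T}^2$ being trivial does not settle $[o_2]$ precisely because the candidate obstruction element is non-permuting; the right discriminant is the gauge-invariant $U_{o_2}(\sigma,\psi;\sigma)$; and Eq.~\ref{eqn:etaThetaConstraint} applied to $a=b\times{}^{\bf T}b$ with $b$ a $\sigma$-type vortex is the natural detector for $[o_3]$, in parallel with the doubled semion-fermion example). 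But you then defer both decisive evaluations to exactly the computation the paper cannot perform, so what you have written is a proof strategy, not a proof; it stalls at the same obstacle that keeps the statement a conjecture in the paper. Asserting that the answer for $U_{o_2}(\sigma,\psi;\sigma)$ ``is expected to be $+1$'' because the bulk topological insulator carries no Kitaev-chain decoration is an appeal to the anomaly matching you are supposed to be verifying on the categorical side, so it cannot close the argument.

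Two smaller points. First, your reduction of $o_2\in Z^2(G_b,\ker r)$ to the single value $o_2({\bf T},{\bf T})$ is not justified for $G_b=U(1)\rtimes\Z_2^{\bf T}$: even with $[\widecheck\rho_{\bf g}]$ trivial on the $U(1)$ factor, a cocycle can be nontrivial on mixed $U(1)$--${\bf T}$ arguments, and the cohomology of the continuous group needs to be handled (the paper never does this bookkeeping either). Second, invoking Theorem~\ref{o2independence_anti} to restrict to the $c_-=0$ extension is harmless but does not reduce the work, since the hard step is evaluating the $U$-symbols in that one extension. If you want to make progress beyond the paper, the missing ingredient is a method for obtaining enough of the $F$- and $R$-data of $(\mathcal{C}\boxtimes\mathrm{Ising})/\{\psi\psi'\sim 1\}$ (e.g.\ via an explicit zesting or condensation construction) to evaluate $U_{o_2}(\sigma,\psi;\sigma)$ and the $\eta$ of the new pairs $a=b\times{}^{\bf T}b$; without that, both halves of the claim remain conjectural.
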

            Checking this conjecture explicitly would require knowledge of the full $F$- and $R$-symbols of the gauged T-Pfaffian state, which we do not know in general how to compute from the decomposition Eq.~\ref{eqn:gaugedTPfaffian}.
            
            We contrast the present picture with a more typical argument \cite{cheng}. One can use a decorated domain wall construction wherein the bulk (3+1)D topological insulator can be understood in terms of decorating ${\bf T}$ domain walls with a (2+1)D integer quantum Hall state with Chern number $1$. This construction assumes $\U^f$ symmetry from the outset and requires a $c_-=1/2 \mod 1$ minimal modular extension on the boundary so that ${\bf T}$ domain walls on the boundary carry a chiral mode arising from the integer quantum Hall state in the bulk. However, $c_-=1/2 \mod 1$ is manifestly incompatible with time reversal symmetry in isolation. In this context, then, the anomaly is characterized as a nontrivial element of $\H^1(\Z_2^{\bf T},\Z_{\bf T})$. Another way to state this latter construction is that we gauge $\U^f$ symmetry first, and then attempt to lift the $G_f/\U^f = \Z_2^{\bf T}$ symmetry to the gauged theory. In this context, the T-Pfaffian anomaly appears as an $\H^1(G_f/\U^f,\Z_{\bf T})$ obstruction. By comparison, in our framework, we gauge fermion parity first and then attempt to lift the $G_f/\Z_2^f=G_b$ symmetry to the gauged theory. In the latter case, the $\H^1(G_b,\Z_{\bf T})$ anomaly vanishes, and T-Pfaffian fits into a higher level of the anomaly cascade. Both perspectives correspond to valid ways of calculating the same anomaly; they simply decompose the classification differently. Our perspective is more general, since fermionic systems always have $\Z_2^f$ symmetry but need not have $\U^f$ symmetry.
        
        \subsection{\texorpdfstring{$\Sp(3)_3 \times \{1,\psi\}$}{Sp33 x {1, psi}}}
        
        $\mathcal{C} = \Sp(3)_3 \boxtimes \{1,\psi\}$ gives an example of a $\nu = 6$ phase of the $\Z_{16}$ classification for $G_f = \Z_4^{{\bf T},f}$. $\Sp(3)_3$ has central charge $c = 1 \mod 8$, so we should consider $\Sp(3)_3 \times U(1)_{-1}$ to get a time-reversal invariant theory. $\Sp(3)_3$ itself has $20$ particles, so $\mathcal{C} = \Sp(3)_3 \boxtimes \{1,\psi\}$ has $40$ particles. The modular data of $\Sp(3)_3$ were obtained by computer using SageMath. We tabulate the quantum dimensions and topological twists of $\Sp(3)_3$ in Table~\ref{tab:Sp33Data}; the $S$-matrix is large and unenlightening, so we do not write it explicitly here. The particles will be labeled from 1 to 20, where particle 1 is the identity and the others are ordered by increasing quantum dimension but otherwise arbitrarily. We can label a particle in $\mathcal{C}$ by a pair $(n,x)$ with $n$ from 1 to 20 and $x \in \{1,\psi\}$. There is a unique action of time reversal which preserves the modular data up to complex conjugation, with the corresponding permutation given in Table~\ref{tab:Sp33Data}.
        
        \begin{table}
            \centering
            \renewcommand{\arraystretch}{1.3}
            \begin{tabular}{@{}clclclclclclclclclclc}
            \toprule[2pt]
                $n$ && 1 && 2 && 3 && 4 && 5 && 6 && 7 && 8 && 9 && 10  \\\hline 
                 $d_{(n,x)}$ && 1 && 1 && 3.49 && 3.49 && 4.49 && 4.49 && 4.49 && 4.49 && 5.60 && 5.60 \\
                 $\theta_{(n,1)}$ && 1 && $i$ && $i$ && 1 && $e^{-13 \pi i/14}$ && $e^{4 \pi i/7}$ && $e^{-\pi i/14}$ && $e^{-4 \pi i /7}$ && $e^{6 \pi i/7}$ && $e^{-6 \pi i/7}$ \\
                 ${\bf T}((n,1))$ && $(1, 1)$ && $(2,\psi)$ && $(3, \psi)$ && $(4,1)$ && $(7, \psi)$ && $(8,1)$ && $(5, \psi)$ && $(6,1)$ && $(10, 1)$ && $(9,1)$  \\
                 $\theta_{(n,a)}$ && $e^{-\pi i/4}$ && $e^{\pi i/4}$ && $e^{\pi i/4}$ && $e^{-\pi i/4}$ && $e^{23 \pi i/28}$ && $e^{9 \pi i/28}$ && $e^{-9 \pi i/28}$ && $e^{-23 \pi i/28}$ && $e^{17 \pi i/28}$ && $e^{25 \pi i/28}$\\
                 ${\bf T}((n,a))$ && $(2,\overline{a})$ && $(1,a)$ && $(4,a)$ && $(3, \overline{a})$ && $(8,a)$ && $(7, \overline{a})$ && $(6,a)$ && $(5,\overline{a})$ && $(12,\overline{a})$ && $(11, \overline{a})$  \\ \hline
                 \\ \hline
                 $n$ && 11 && 12 && 13 && 14 && 15 && 16 && 17 && 18 && 19 && 20\\ \hline
                 $d_{(n,x)}$ && 5.60 && 5.60 && 9.10 && 9.10 && 10.10 && 10.10 && 10.10 && 10.10 && 11.59 && 11.59\\
                 $\theta_{(n,1)}$ && $e^{-9 \pi i/14}$ && $e^{-5 \pi i/14}$ && 1 && $i$ && $e^{-2\pi i/7}$ && $e^{3 \pi i/14}$ && $e^{2\pi i/7}$ && $e^{11 \pi i/14}$ && $-i$ && $-1$ \\
                 ${\bf T}((n,1))$ && $(12, \psi)$ && $(11, \psi)$ && $(13, 1)$ && $(14, \psi)$ && $(17,1)$ && $(18, \psi)$ && $(15,1)$ && $(16, \psi)$ && $(19, \psi)$ && $(20,1)$ \\
                 $\theta_{(n,a)}$  && $e^{-25 \pi i/28}$ && $e^{-17 \pi i/28}$ && $e^{-\pi i/4}$ && $e^{\pi i/4}$ && $e^{-15 \pi i/28}$ && $e^{-\pi i/28}$ && $e^{\pi i/28}$ && $e^{15 \pi i/28}$ && $e^{-3\pi i/4}$ && $e^{3 \pi i/4}$ \\
                 ${\bf T}((n,a))$ && $(10, a)$ && $(9,a)$ && $(14, \overline{a})$ && $(13, a)$ && $(18, \overline{a})$ && $(17, a)$ && $(16, \overline{a})$ && $(15, a)$ && $(20, a)$ && $(19, \overline{a})$ \\ \bottomrule[2pt]
            \end{tabular}
            \caption{Quantum dimensions, topological twists, and time-reversal actions for $\C = \Sp(3)_3 \boxtimes \mathcal{I}(2)$. All other data can be determined from the tabulated data; e.g. $\theta_{(n,\psi)}=-\theta_{(n,1)}$, and ${\bf T}((n,\overline{a}))={\bf T}((n,a))\times (1,\psi)$. The action of ${\bf T}$ on $\mathcal{C} = \Sp(3)_3 \boxtimes \{1, \psi\}$ is uniquely determined and squares to the identity. Its lift to $\C$ is ambiguous by $\alpha_\psi$ and squares to $\alpha_\psi$. Quantum dimensions can be written exactly in terms of 56th roots of unity, but we do not do so here.}
            \label{tab:Sp33Data}
        \end{table}
        
        The time-reversal invariant modular extension is $\C = \Sp(3)_3 \boxtimes \mathcal{I}(-2)$. We label the parity vortices of $\mathcal{I}(-2)$ by $a$ and $\overline{a}$, with $a \times \psi = \overline{a}$ and $a \times \overline{a} = 1$. It is clear that $\C$ has $v$-type vortices, and furthermore that $\widecheck{\Upsilon}_\psi$ respects locality, so $\Aut(\C)=\Aut_{LR}(\mathcal{C})$ and $\ker r = \{1,\alpha_\psi\} = \Z_2$ where $\alpha_\psi$ fuses a fermion into each parity vortex. The modular data of this product theory can be computed straightforwardly from the modular data of the constituents. Labeling the particles in $\C$ as $(n,x)$ with, again, $n=1,2,\ldots,20$, and this time $x \in \{1,a,\overline{a},\psi\}$, one can check directly that there are exactly two lifts of the action of time reversal on $\mathcal{C}$ to $\C$. One is tabulated in Table~\ref{tab:Sp33Data}, and the other is given by composing with $\alpha_\psi$ (which switches $a \leftrightarrow \overline{a})$. It is clear by inspection that these actions do not square to the identity; for example, under time reversal,
        \begin{equation}
            (1,a) \rightarrow (2,\overline{a}) \rightarrow (1,\overline{a}) \rightarrow (2,a) \rightarrow (1,a)
        \end{equation}
        which yields a representation of $\Z_4^{\bf T}$, not of $\Z_2^{\bf T}$. Composition with $\alpha_\psi$ does not change this fact, so we conclude that, as expected, this theory has an $\H^2(\Z_2^{\bf T},\Z_2)$ obstruction.

        \section{Discussion}
        \label{sec:discussion}
        
        We have systematically characterized the set of obstructions which appear in lifting symmetry fractionalization data from a super-modular category $\mathcal{C}$ to a minimal modular extension $\C$. We found that this data is in good correspondence with the known classification of fermionic SPTs and provided an understanding for each obstruction:
        \begin{enumerate}
            \item The $\H^1(G_b,\Z_{\bf T})$ piece is the obstruction to defining a lift of the autoequivalence $[\rho_{\bf g}]$ to $[\widecheck{\rho}_{\bf g}] \in \Aut_{LR}(\C_{\nu})$ for some minimal modular extension $\C_{\nu}$.
            
            \item When the $\H^1$ obstruction vanishes and one can define lifts of the maps $[\rho_{\bf g}]$ for some choice of minimal modular extension $\nu$, the $\H^2(G_b,\ker r)$ piece is the obstruction to choosing lifts  $[\widecheck{\rho}_{\bf g}]:G_b \rightarrow \Aut_{LR}(\C_\nu)$ with the appropriate group structure.
            
            \item When the $\H^1$ and $\H^2$ obstructions vanish and we pick a set of lifts $[\widecheck{\rho}]$ with the appropriate group structure for a given modular extension $\nu$, the $\H^3(G_b,\Z_2)$ piece is the obstruction to lifting the symmetry fractionalization data $\{\eta_a\}$ on $\mathcal{C}$ to $\{\widecheck{\eta}_a\}$ on $\C_\nu$. 
            
            \item When the $\H^1$, $\H^2$, $\H^3$ obstructions all vanish and we pick the lifts $\widecheck{\rho}$ and symmetry fractionalization data $\widecheck{\eta}$ on $\C_\nu$ for some $\nu$, the $\H^4(G_b,\U)$ piece is the obstruction to lifting $\C_\nu$, together with its symmetry fractionalization data, to a $G_b$-crossed modular tensor category $\C_{G_b}^\times$.
        \end{enumerate}
        
        Our work raises a number of open questions. First, we have defined a series of obstructions $[o_1]$, $[o_2^{(\nu)}]$, $[o_3^{(\nu, \widecheck{\rho})}]$, and $[o_4^{(\nu, \widecheck{\rho}, \widecheck{\eta})}]$. We expect, but have not shown explicitly, that the bulk (3+1)D FSPT that hosts our given (2+1)D theory on its surface is characterized by the image of the maps $q_{\Gamma^i} : \mathcal{H}^i \rightarrow \mathcal{H}^i/ \Gamma^i$, for the finite groups $\Gamma^i$ reviewed in Section \ref{subsec:fermionSPT}. This has led us to conjecture that the $q_{\Gamma^i}[o_i]$ are independent of the various valid choices involving $\nu$, $\widecheck{\rho}$, and $\widecheck{\eta}$. For the case $i=2$, this conjecture is proven so long as $\ker r = \Z_2$. For the case $i=3$, there is a loose end in proving Eq.~\ref{eqn:o3ChangeModExt} for the case where $\Upsilon_\psi$ violates locality with $\nu'-\nu$ is odd. For $i = 4$, we have derived how $[o_4^{(\nu, \widecheck{\rho}, \widecheck{\eta})}]$ changes under changes of $\widecheck{\eta}$, but not under valid changes of $\nu, \widecheck{\rho}$.
        
        We have also conjectured, based on the (3+1)D FSPT classification, that in general $\ker r = \Z_2$. However we have also found an interesting example involving doubled $\SU(2)_6$ where there are multiple independent permutations of fermion parity vortices that keep the modular data invariant and act trivially on the super-modular category; these point to a possibility that $\ker r$ is larger than  $\Z_2$, or that there are permutations of anyons that preserve the modular data which do not correspond to auto-equivalences of the category. 
        
        While we have examples of a non-trivial $\H^2/\Gamma^2$ anomaly for anti-unitary symmetries (namely the semion-fermion theory with ${\bf T}^2 = (-1)^F$), we do not have an example for unitary $G_b$. 
        
        We have given an anomaly inflow argument for the $\H^2$ contribution to the anomaly using the decorated domain wall construction; it would be useful to gain a similar understanding for the other contributions.
        
        The Wang-Gu results for (3+1)D FSPTs reviewed in Sec. \ref{subsec:fermionSPT} have significantly more structure than we have derived so far in our obstruction theory. There is a set of data $(n_1, n_2, n_3, \nu_4)$ obeying complicated consistency equations, group multiplication laws under stacking invertible phases, and equivalences. However we have only seen how one can extract part of this data from the (2+1)D fermion SET. It would be interesting to understand to what extent more aspects of the general (3+1)D FSPT characterization can be extracted. For example, it may be possible to extract, from the (2+1)D data, the full group structure of the anomalies, which corresponds to a group extension of the groups $\H^1$, $\H^2/\Gamma^2$, $\H^3/\Gamma^3$, and $\H^4/\Gamma^4$. Alternatively, can we extract the specific consistency equations for $(n_1,n_2,n_3,\nu_4)$ derived in \cite{WangGu}?
        
        Our formalism suggests a very general way of understanding mixed anomalies. We have shown that the data characterizing the anomaly, namely, a spectral sequence decomposition into the Wang-Gu data of the bordism group $\Omega_4(BG_b, \xi)$, where $\xi$ is a kind of twisted spin structure, can be interpreted by first gauging fermion parity and then considering a sequence of obstructions to lifting the $G_b$ symmetry to the fermion parity gauged theory. Consider instead a bosonic SET with symmetry group $G = G_1 \times G_2$. Then there is a spectral sequence decomposition of the $\H^4(G,\U)$ bosonic SET anomaly via the K{\"u}nneth decomposition. Our procedure might be applied to understand the mixed anomaly between $G_1$ and $G_2$ in terms of a ``two-step" gauging process, where one first gauges $G_1$ and then determines a cascade of obstructions to lifting the $G_2$ symmetry to the $G_1$-gauged theory. More generally, if $G$ is given by a short exact sequence
        \begin{equation}
            1 \rightarrow G_1 \rightarrow G \rightarrow G_2 \rightarrow 1,
        \end{equation}
        there is again a spectral sequence decomposition of $\H^4(G,\U)$ and a similar construction may apply. 
        
        \section{Acknowledgements}
        
        We thank Parsa Bonderson, Meng Cheng, and Zhenghan Wang for discussions. In particular, MB thanks Parsa Bonderson for detailed comments on the first version of this paper posted to the arXiv. We also thank Ryohei Kobayashi and Srivatsa Tata for collaboration on related projects. This work is supported by NSF CAREER (DMR- 1753240) and JQI-PFC-UMD. 
        
        \it Note added in revision: \rm When the first version of this paper was posted to the arXiv, a number of closely related papers were also posted simultaneously \cite{bulmash2021,manjunath21inv,aasen21ferm}. \cite{bulmash2021} developed a theory of fermionic symmetry fractionalization, which is used as the starting point for this paper. \cite{manjunath21inv} developed a theory of (2+1)D invertible fermionic topological phases. \cite{aasen21ferm} develops a general characterization and classification of fermionic symmetry-enriched topological phases in (2+1)D, containing many results from \cite{bulmash2021,manjunath21inv}, and the current paper. 
        
        Ref. \cite{aasen21ferm} in particular included a result for the map $\alpha_\psi$, reproduced here in Eq.~\ref{eqn:alphaPsiUSymbols}, which is used here to prove a conjecture from the first version of this paper, which is now Eq.~\ref{eqn:UsigmaPsiSigmaMinus1}. Ref.~\cite{aasen21ferm} also used Eq.~\ref{eqn:alphaPsiUSymbols} to give the most general proof that $[\alpha_\psi]$ always generates a $\Z_2$ subgroup of $\ker r$, a fact which we use to remove the dependence of Theorem \ref{thm:kerR} on a technical conjecture stated in the first version of this paper. Moreover, in this revision we have added the discussion in Sec. \ref{UpsiCheckViolates}, which has overlap with results of \cite{aasen21ferm}, as noted in the main text. 
     
        Earlier versions of this paper conjectured that the obstructions $[o_2]$, $[o_3]$, and $[o_4]$ are independent of modular extension, based on general expectations of 't Hooft anomalies. Motivated by these conjectures, v2 of Ref. ~\cite{aasen21ferm} examined this question and showed how $[o_2]$ and $[o_3]$ can change under modular extension. In this revision we have revised these conjectures to state that $q_{\Gamma^i}( [o_i])$, for $i = 2,3,4$ is independent of the choices involved in the definition of $[o_i]$, matching results in the literature for classification of (3+1)D FSPTs. 
 
        We thank Parsa Bonderson for extensive discussions regarding revisions of both of our papers and for sharing early versions of revised drafts of \cite{aasen21ferm}.
        
        \appendix 
        
        \section{\texorpdfstring{$S$}{S}-matrix of a condensed theory}
        \label{app:SmatrixCondensed}
        
        We presently reproduce, with additional detail, the derivation from~\cite{delmastro2021} of the $S$-matrix $\hat{S}$ of the theory obtained by condensing an Abelian boson $\phi$ in a theory with $S$-matrix $S$. Our main addition to~\cite{delmastro2021} is a careful discussion of gauge invariance. Specifically, $\hat{S}$ should be gauge-invariant under vertex basis transformations, a fact which is in tension with the expression of $\hat{S}$ in terms of the gauge-dependent punctured $S$-matrix of the uncondensed theory.
        
        Assume that $\phi^n=1$ for some $n>1$; we will specialize to $n=2$ later. Then we can view the Wilson lines for $\phi$ as generating a $\Z_n$ 1-form symmetry of the theory; condensing $\phi$ corresponds to gauging that symmetry. We perform this gauging to understand the Hilbert space of the condensed theory on the torus.
        
        In the path integral picture, gauging the symmetry means that we sum over all insertions of the symmetry generator on nontrivial cycles of the 3-torus. We begin by considering insertions of a $\phi$ Wilson line through the time circle. Then at every spatial slice, we are summing over states with no insertion and with all possible insertions of $\phi^k$, that is, we consider an expanded Hilbert space of the original theory consisting of the Hilbert space on the torus and the Hilbert spaces on the punctured torus with punctures labeled $\phi^k$ for all $k<n$. (Notationally, the use of $k$ in this appendix is completely unrelated to the $k$ we defined in Sec.~\ref{sec:H2}.)
        
        States on the punctured Hilbert space with puncture $\phi^k$ are labeled $\ket{a;\phi^k}$ where $N_{a,\bar{a}}^{\phi^k}>0$. Physically, these states correspond to ones where an $a,\bar{a}$ pair is created from vacuum, wrapped around a given spatial cycle of the torus (which we will call cycle $\beta$ for concreteness), and then fused into $\phi^k$ at the puncture. Under a basis transformation $\Gamma^{ab}_c$ of the fusion spaces $V^{ab}_c$, these states transform as
        \begin{equation}
            \ket{a;\phi^k} \rightarrow \Gamma^{a\bar{a}}_{\phi^k}\left(\Gamma^{a\bar{a}}_1\right)^{\ast}\ket{a;\phi^k}
            \label{eqn:fusionSpaceTransformOnTorusHilbertSpace}
        \end{equation}
        and are thus gauge-invariant (in this sense) only when $k=0$.
        
        Also, observe that each anyon $a$ must have a $\Z_n$ orbit of some length $\ell_a$ which divides $n$. By definition, if $\ket{a;\phi^k}$ is a nonzero state, then $N_{a,\bar{a}}^{\phi^k} = N_{a,\phi^{k}}^a>0$, that is, $a \times \phi^k = a$. Hence $k$ is a multiple of $\ell_a$. Hence the total number of distinct states in the expanded Hilbert space that are associated to the anyon $a$ is $n/\ell_a$.
        
        Next consider inserting a Wilson loop for $\phi^k$ around cycle $\alpha$ of the torus. Let $W_\alpha(\phi^k)$ be the operator which does this; then since $\phi$ is Abelian,
        \begin{equation}
            W_\alpha(\phi^k)\ket{a;\phi^m} = M_{a,\phi^k}\ket{a;\phi^m}
        \end{equation}
        with $M$ the mutual statistics. Since $\phi$ is Abelian,
        \begin{equation}
            M_{a,\phi^k} = e^{2\pi i k q/n}
        \end{equation}
        for some integer $q$. Summing over all of these insertions,
        \begin{align}
            \sum_{k=0}^{n-1}W_\alpha(\phi^k)\ket{a;\phi^m} &= \left(\sum_{k=0}^{n-1}e^{2\pi i k q/n}\right)\ket{a;\phi^m}\\
            &= \delta_{q,0}\ket{a;\phi^m}
        \end{align}
        That is, we require $a$ to braid trivially with $\phi$ to keep its corresponding states in the condensed theory.
        
        Finally, we consider inserting a Wilson loop of $\phi^k$ Wilson loops around cycle $\beta$ of the torus. Clearly this takes the state $\ket{a;\phi^m} \rightarrow \ket{a\times \phi^k;\phi^m}$, so the only states we should consider are
        \begin{equation}
            \ket{[a];\phi^m} = \frac{1}{\sqrt{\ell_a}}\sum_{k=0}^{\ell_a} \ket{a \times \phi^k;\phi^m}
            \label{eqn:symmetricState}
        \end{equation}
        where here $[a]$ labels the orbit of $a$ under fusion with $\phi$.
        
        The Hilbert space therefore consists of states $\ket{[a];\phi^m}$ such that $a$ braids trivially with $\phi$ and $m=0,1,\ldots,n/\ell_a-1$ (that is, there are $n/\ell_a$ states per anyon). In particular, there is a unique state associated to $[a]$ if $a$ has orbit length $\ell_a=n$ but the orbit $[a]$ splits to $n/\ell_a$ states after condensation if $a \times \phi^{\ell_a} = a$ for some $\ell_a < n$.
        
        Now consider the transformation of these states under modular transformations. By definition, in this basis, if ${\bf S}$ and ${\bf T}$ are the operators which implement $S$ and $T$ modular transformations,
        \begin{align}
            \bra{a;\phi^m}{\bf S}\ket{b;\phi^s} &= \delta_{m,s}S_{ab}^{(\phi^m)}\\
            \bra{a;\phi^m}{\bf T}\ket{b;\phi^s} &= \delta_{m,s}T_{ab}^{(\phi^m)}\\
        \end{align}
        where $S_{ab}^{(\phi^m)}$ and $T_{ab}^{(\phi^m)}$ are the punctured $S-$ and $T$-matrices. Note that all elements of $T^{(\phi^m)}$ and diagonal elements $S_{aa}^{(\phi^m)}$ of the punctured $S$-matrix are gauge-invariant under vertex basis transformations, while off-diagonal elements of the punctured $S$-matrix (for $m>0$) are not generally gauge-invariant.
        
        It is diagrammatically straightforward to check that
        \begin{align}
            S_{a\times \phi^j,b}^{(\phi^m)}= S_{a,b\times \phi^k}^{(\phi^m)} = S_{a,b}^{(\phi^m)}
        \end{align}
        provided $[a]$ and $[b]$ are both deconfined particles, i.e., $a$ and $b$ both braid trivially with $\phi$), and that
        \begin{equation}
            T_{ab}^{(\phi^m)} = \theta_a \delta_{a,b}
        \end{equation}
        
        Hence
        \begin{align}
            \hat{S}_{([a],m),([b],s)}=\bra{[a];\phi^m}{\bf S}\ket{[b];\phi^s} &= \frac{1}{\sqrt{\ell_a \ell_b}} \sum_{a \in [a], b \in [b]} \delta_{m,s}S_{ab}^{(\phi^m)}\\
            &= \sqrt{\ell_a \ell_b}\delta_{m,s}S_{ab}^{(\phi^m)}\\
            \hat{T}_{([a],m),([b],s)]} = \bra{[a];\phi^m}{\bf T}\ket{[b];\phi^s} &= \delta_{m,s}T_{ab}^{(\phi^m)}\\
        \end{align}
        are well-defined expressions independent of the representatives of $[a]$ and $[b]$ we choose and are the $S$- and $T$-matrices of the condensed theory in this basis for the torus Hilbert space.
        
        We now discuss gauge freedom very carefully. In the uncondensed theory, there is a preferred basis $\ket{a;\phi^m}$ of the Hilbert space, where $a$ is any anyon. For $m=0$, these states are invariant under gauge transformations of the fusion spaces. However, there is still some gauge freedom, in the sense that we could send $\ket{a;1} \rightarrow e^{i\alpha_a}\ket{a;1}$ for some phases; this gauge freedom modifies the $S$-matrix. We can canonically fix this gauge freedom up to a global phase rotation (physically, a gauge choice for the vacuum state) by demanding that $S_{1 a}$ be real and positive for all $a$ and also that $S_{ab}$ be symmetric. Another way to say this is that we could be handed some phase-rotated states $\ket{a;1}$ from the outset; we would observe that we have the ``wrong" basis because the resulting $S$-matrix would not have these nice properties, and this could be corrected with a (diagonal) basis transformation.
        
        For $m>0$, there is no such canonical basis for these states. The situation is in some sense worse, because changing the basis of the fusion spaces induces a particular change of basis on the punctured torus Hilbert space, of the form Eq.~\ref{eqn:fusionSpaceTransformOnTorusHilbertSpace}, and therefore changes the punctured $S$-matrix of the uncondensed theory. This naively seems disturbing because the $S$-matrix of the condensed theory appears to depend on a fusion space basis in the original theory. However, from the above perspective, there is no such problem; such a change of basis of fusion spaces in the uncondensed theory is just a particular special case of changing the basis for the torus Hilbert space of the condensed theory. We simply imagine that we are handed the states $\ket{a;\phi^n}$ in some fixed but non-canonical basis, and then inspect the $S$-matrix of the condensed theory; if it does not have the intended properties, we were handed the ``wrong" basis and should perform a torus Hilbert space basis transformation to fix it.
        
        There is one subtlety here if $1 < \ell_a < n$, which is that we must gauge-fix 
        \begin{equation}
            W_\beta(\phi)\ket{a;\phi^m}=\ket{a\times \phi;\phi^m}
        \end{equation}
        in the uncondensed theory. This is needed in order to ensure that the state $\ket{[a];\phi^m}$ in Eq.~\ref{eqn:symmetricState} is indeed symmetric under insertion of $\phi$ Wilson loops.
        
        Having discussed gauge freedom, we now need to check if our states in the condensed theory are in the canonical gauge. For $\ell_a =\ell_b = n$, then $\hat{S}_{[a],[b]} \propto S_{a,b}$, so $\hat{S}$ inherits its nice properties from $S$; these states are in the correct gauge. For $\ell_a < n$, however, we are certainly not in the correct gauge since in this basis
        \begin{equation}
            \hat{S}_{1,([a],k)} = \begin{cases}
              \sqrt{2} S_{1,a} & k=0\\
              0 & \text{else}
            \end{cases}
        \end{equation}
        
        On general grounds, quasiparticles must correspond to some superposition of states $\ket{[a];\phi^m}$ with fixed $[a]$, so we may restrict our attention to a fixed-$[a]$ sector.
        
        At this point we restrict ourselves to $n=2$. Then if $\ell_a = 1$, we have $a \times \phi = a$ and a general basis transformation is of the form
        \begin{equation}
            \begin{pmatrix}
                    \ket{[a];+}\\
                    \ket{[a];-}
            \end{pmatrix}= U^{(a)} \begin{pmatrix}
                    \ket{[a];1}\\
                    \ket{[a];\phi}
            \end{pmatrix}
            =
            e^{i \alpha_a}\begin{pmatrix}
                    e^{-i(\beta_a+\delta_a)/2}\cos(\gamma_a/2) & -e^{-i(\beta_a-\delta_a)/2}\sin(\gamma_a/2)\\
                    e^{i(\beta_a-\delta_a)/2}\sin(\gamma_a/2) & e^{i(\beta_a+\delta_a)/2}\cos(\gamma_a/2)
            \end{pmatrix}
            \begin{pmatrix}
                    \ket{[a];1}\\
                    \ket{[a];\phi}
            \end{pmatrix}
        \end{equation}
        
        Strictly speaking $U^{(a)}$ is a diagonal block in the transformation of the entire Hilbert space; the state $\ket{1;1}$ does not transform in this basis transformation. Accordingly,
        \begin{align}
            (\hat{S}_{1,(a,+)},\hat{S}_{1,(a,-)}) &=  (\hat{S}_{1,(a,1)},\hat{S}_{1,(a,\phi)})\left(U^{(a)}\right)^{\dagger}\\
            &= \sqrt{2}S_{1,a}e^{-i(\alpha-\delta/2)}\left(e^{i\beta/2}\cos(\gamma/2), e^{-i\beta/2} \sin \gamma/2\right)
        \end{align}
        We require that
        \begin{equation}
            d_{(a,+)}+d_{(a,-)}=\frac{\hat{S}_{1,(a,+)}}{\hat{S}_{1,1}} + \frac{\hat{S}_{1,(a,-)}}{\hat{S}_{1,1}} = \frac{S_{1,a}}{S_{1,1}} = d_a
        \end{equation}
        following the usual rules for preserving the quantum dimension of split particles. Using $\hat{S}_{1,1}=2S_{1,1}$, equating the magnitudes leads to
        \begin{equation}
            \cos \beta_a \sin \gamma_a = 1
        \end{equation}
        That is, $\gamma=\pi/2$ and $\beta = 0$ or $\gamma=3\pi/2$ and $\beta = \pi$. We now ensure that $S_{1,(a,\pm)}$ are real and positive. In the first solution, we find this requires $\alpha - \delta/2 = 0 \mod 2\pi$, while in the second we find $\alpha - \delta/2 = 3\pi/2 \mod 2\pi$. Substituting back, we find that these two solutions are related by switching the rows of $U^{(a)}$, so that up to a basis reordering the only solutions for $U^{(a)}$ are
        \begin{equation}
            U^{(a)} = \frac{1}{\sqrt{2}}\begin{pmatrix}
                    1 & e^{i \delta_a}\\
                    1 & -e^{i \delta_a}
            \end{pmatrix}
        \end{equation}
        Note that $\delta_a$ is exactly the gauge freedom in $\ket{a;\phi}$.
        
        Next, we demand that $\hat{S}$ is symmetric. In this basis,
        \begin{equation}
            \hat{S}_{(a,\pm),(b,\pm)} = U^{(a)}\begin{pmatrix}
                S_{ab} & 0 \\
                0 & S_{ab}^{(\phi)}
            \end{pmatrix}\left(U^{(b)}\right)^{\dagger}
            =\frac{1}{2}\begin{pmatrix}
                S_{ab}+e^{i(\delta_a-\delta_b)}S_{ab}^{(\phi)} & S_{ab}-e^{i(\delta_a-\delta_b)}S_{ab}^{(\phi)}\\
                S_{ab}-e^{i(\delta_a-\delta_b)}S_{ab}^{(\phi)} &
                S_{ab}+e^{i(\delta_a-\delta_b)}S_{ab}^{(\phi)}
            \end{pmatrix}
        \end{equation}
        For $a=b$ this is clearly symmetric and independent of our remaining gauge freedom. If $a \neq b$, then symmetry of $\hat{S}$ amounts to a choice
        \begin{equation}
            e^{2i (\delta_a - \delta_b)} = \frac{S_{ba}^{(\phi)}}{S_{ab}^{(\phi)}}
        \end{equation}
        Equivalently, we may think of this as fixing a gauge so that the \textit{uncondensed} punctured $S$-matrix $S_{ab}^{(\phi)}$ is symmetric in $a$ and $b$. If we are already in such a gauge, then we may choose $\delta_a = \delta_b =0$, which reduces to the results of Ref.~\cite{delmastro2021}.
        
        To prove that such a gauge exists, we perform the diagrammatic manipulation in Fig.~\ref{fig:puncturedSSymmetry} to show
        \begin{equation}
            S^{(\phi)}_{ab}=\frac{d_a F^{a,\overline{a},a}_{a,\phi,1}F^{\phi,\phi,a}_{a,1,a}F^{\phi,a,\overline{a}}_{\phi,a,1}}{d_b F^{b,\overline{b},b}_{b,\phi,1}F^{\phi,\phi,b}_{b,1,b}F^{\phi,b,\overline{b}}_{\phi,b,1}}S^{(\phi)}_{ba}
            \label{eqn:puncturedSSymmetry}
        \end{equation}
        
        \begin{figure}
            \centering
            \includegraphics[width=0.8\columnwidth]{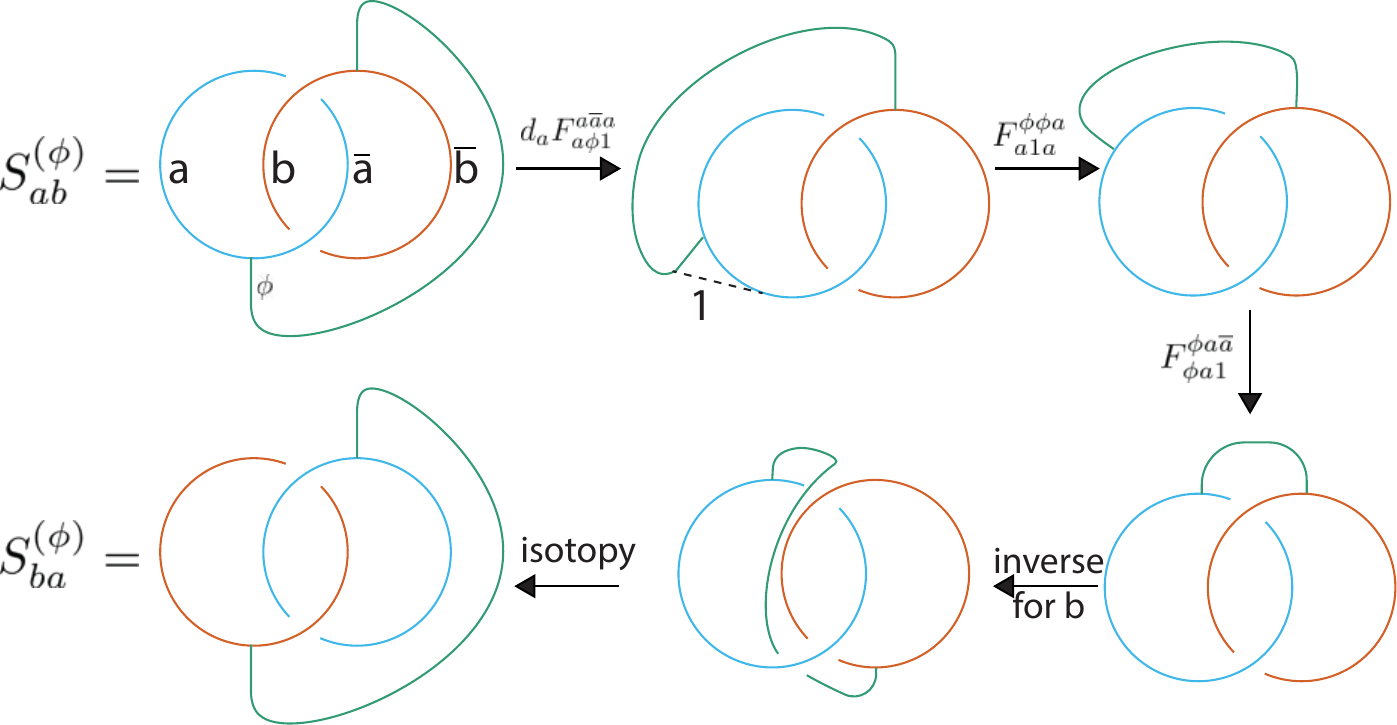}
            \caption{Diagrammatic manipulation in the standard BFC graphical calculus (see, e.g.,~\cite{Bonderson07b} or~\cite{barkeshli2019} for reviews and~\cite{BondersonBeyondModular,WenBeyondModularData} for graphical representations of the punctured $S$-matrix) leading to Eq.~\ref{eqn:puncturedSSymmetry}. Blue lines are $a$ or $\overline{a}$ orange are $b$ or $\overline{b}$, and green is $\phi$.  We are assuming $\phi\times \phi=1$ and that $a \times \phi = a$, $b \times \phi = b$. ``Inverse for $b$ means to repeat the same process as the first three steps, but in reverse and moving the $\phi$ line attached to the $\ket{b,\overline{b};\phi}$ vertex. In the step labeled ``isotopy", we use the fact that the double braid of $\phi$ is trivial with both $a$ and $b$.}
            \label{fig:puncturedSSymmetry}
        \end{figure}
        
        Letting
        \begin{equation}
            e^{2i \delta_a} = d_a F^{a,\overline{a},a}_{a,\phi,1}F^{\phi,\phi,a}_{a,1,a}F^{\phi,a,\overline{a}}_{\phi,a,1}
        \end{equation}
        we obtain the gauge transformation which symmetrizes the punctured $S$-matrix.
        
        Notice that there is some residual gauge freedom preserving the symmetry of the uncondensed punctured $S$-matrix; we may send $e^{i\delta_a} \rightarrow -e^{i\delta a}$. This gauge transformation simply flips the role of the rows in $U^{(a)}$ and thus corresponds to relabeling $(a,\pm)\rightarrow(a,\mp)$.
        
        \section{Proof of Proposition~\ref{prop:unitaryH2Trivial}}
        \label{app:H2ChangeModularExt}
        
        The proof is quite involved, so we begin with an outline of the strategy. First, for $\delta \nu \in \Z_{16}$, let $\mathcal{I}(\delta \nu)$ be the minimal modular extension of $\{1,\psi\}$ with chiral central charge $c_-=\delta \nu/2 \mod 8$. Then the 16-fold way tells us that, given $\C_{\nu_1}$, the minimal modular extension $\C_{\nu_2}$ is constructed as
        \begin{equation}
            \C_{\nu_2} = \C_{\nu_1}\boxtimes \mathcal{I}(\delta \nu)/\{\psi \psi \sim 1\}
        \end{equation}
        where the quotient means that we condense the bound state of the preferred fermions in $\C_{\nu_1}$ and $\mathcal{I}(\delta \nu)$. The proof goes in four steps:
        \begin{enumerate}
            \item[Step 1:] Use the results of~\cite{delmastro2021}, which we review in expanded detail in Appendix~\ref{app:SmatrixCondensed}, to derive the anyon content and the $S$- and $T$-matrices of the condensed theory $\C_{\nu_2}$.
            \item[Step 2:] Directly construct an anyon permutation $\widecheck{\rho}^{(2)}_{\bf g}$ on $\C_{\nu_2}$ that lifts the permutation action of $\rho_{\bf g}$.
            \item[Step 3:] Show that $\widecheck{\rho}_{\bf g}^{(2)}$ preserves the $S$- and $T$-matrices of $\C_{\nu_2}$.
            \item[Step 4:] Compute the permutation action of $o_2^{(2)}({\bf g,h})$.
        \end{enumerate}
        
        \underline{Step 1:} According to~\cite{delmastro2021}, the anyon content of $\C_{\nu_1+1}$ can be labeled by equivalence classes of anyons $(a,x) \in \C_{\nu_1}\boxtimes \mathcal{I}(\delta \nu)$ and, in some cases, a sign. In all cases, the anyon sector $(\C_{\nu_2})_0 \sim \mathcal{C}$ consists of the equivalence classes $(a_0,1) \sim (a_0 \times \psi,\psi)$ for $a_0 \in (\C_{\nu_1})_0=\mathcal{C}$. The behavior of the fermion parity vortex sector depends on the parity of $\delta \nu$.
        
        If $\delta \nu$ is even, then $\mathcal{I}(\delta \nu)$ is Abelian, and its particles can be labeled $\{1,\psi,v,v\times \psi\}$. Then the fermion parity vortices of $\C_{\nu_2}$ are as follows:
        \begin{enumerate}
            \item $(a_v,v)\sim (a_v \times \psi,v \times \psi)$ for $a_v \in (\C_{\nu_1})_v$. These form the sector $(\C_{\nu_2})_v$.
            \item $(a_\sigma, v) \sim (a_\sigma, v \times \psi)$ for $a_\sigma \in (\C_{\nu_1})_\sigma$. These form the sector $(C_{\nu_2})_\sigma$.
        \end{enumerate}
        Notice that anyon labels in $\C_{\nu_1}$ are in one-to-one correspondence with anyon labels in $\C_{\nu_2}$.
        
        If $\delta \nu$ is odd, the particles of $\mathcal{I}(\delta \nu)$ obey Ising fusion rules. Labeling the particles by $\{1,\psi,\sigma\}$, the fermion parity vortices of $\C_{\nu_2}$ are as follows:
        \begin{enumerate}
            \item $(a_v,\sigma) \sim (a_v \times \psi, \sigma)$ for $a_v \in (\C_{\nu_1})_v$. These form the sector $(\C_{\nu_1+1})_\sigma$.
            \item $(a_\sigma,\sigma)_\pm$ for $a_0 \in (\C_{\nu_1})_\sigma$. These form the sector $(C_{\nu_1+1})_v$ with $(a_\sigma,\sigma)_+ \times \psi = (a_\sigma,\sigma)_-$. We say in this case that $(a_\sigma,\sigma)$ splits after condensation.
        \end{enumerate}
        
        The $T$-matrix of the condensed theory is simple: the topological spin of $(a,x)$ (including the split case $(a_\sigma, \sigma)_\pm$) is $\theta_a \theta_x$.
        
        Next consider the $S$-matrix $S^{\C_{\nu_2}}$ of $\C_{\nu_2}$, and let $S^{(z)}$ be the punctured $S$-matrix for the product theory $\C_{\nu_1}\boxtimes \mathcal{I}(\delta \nu)$. As discussed in Appendix~\ref{app:SmatrixCondensed}, we can choose a gauge for the uncondensed theory such that $S^{((\psi,\psi))}_{ab}$ is symmetric. With that gauge-fixing, we can write down the following expression for the $S$-matrix of the condensed theory \cite{delmastro2021}. If $(a,x)$ and $(b,y)$ do not split under condensation, then
        \begin{equation}
            S_{(a,x),(b,y)}^{\C_{\nu_2}} = 2 S^{(1)}_{(a,x),(b,y)}
            \label{eqn:condensationSmatrixNoSplit}
        \end{equation}
        If only $(a,x)$ splits or only $(b,y)$ splits, then
        \begin{equation}
             S_{(a,x)_{\pm},(b,y)}^{\C_{\nu_2}} = S^{(1)}_{(a,x),(b,y)},
             \label{eqn:condensationSmatrixOneSplit}
        \end{equation}
        independent of the sign. Finally, if both $(a,x)$ and $(b,y)$ both split, then
        \begin{equation}
            S_{(a,x)_{\pm},(b,y)_{\pm}}^{\C_{\nu_2}} = \frac{1}{2}\begin{pmatrix}
            S^{(1)}_{(a,x),(b,y)} + S^{((\psi,\psi))}_{(a,x),(b,y)} & S^{(1)}_{(a,x),(b,y)} - S^{((\psi,\psi))}_{(a,x),(b,y)} \\
            S^{(1)}_{(a,x),(b,y)} - S^{((\psi,\psi))}_{(a,x),(b,y)} &
            S^{(1)}_{(a,x),(b,y)} + S^{((\psi,\psi))}_{(a,x),(b,y)}
            \end{pmatrix}
            \label{eqn:ScondensedSplit}
        \end{equation}
        where the matrix rows correspond to the sign for $(a,x)_{\pm}$ and the columns correspond to the sign for $(b,y)_{\pm}$. In this last case, $x=y=\sigma$ and we can calculate further using the fact that $x=y=\sigma$ and the punctured $S$-matrix for the product theory is the product of punctured $S$-matrices. We know that $S^{(1)}_{\sigma \sigma}=0$ in the Ising theory (and $S^{(1)}_{a_\sigma,b_\sigma}=0$ in $\C_{\nu_1}$ as well), so those terms all drop. By direct computation in the gauge of, e.g.,~\cite{Bonderson07b}, we find the only nonzero element of the punctured $S$-matrix in the Ising theory
        \begin{equation}
            S^{\mathcal{I}(\delta \nu),(\psi)}_{\sigma \sigma} = \sqrt{2} e^{-2\pi i \delta\nu/8}.
        \end{equation}
        Hence when both $(a,x)$ and $(b,y)$ split,
        \begin{equation}
            S_{(a,x)_{\pm},(b,y)_{\pm}}^{\C_{\nu_2}} = e^{-2\pi i \delta\nu/8}\frac{S^{\C_{\nu_1},(\psi)}_{a,b}}{\sqrt{2}}\begin{pmatrix}
            1 & -1\\
            -1 & 1
            \end{pmatrix}
            \label{eqn:SmatrixSplitSplitBlock}
        \end{equation}
        where $S^{\C_{\nu_1},(\psi)}_{a,b}$ is the punctured $S$-matrix of $\C_{\nu_1}$. Thanks to our gauge fixing for the punctured $S$-matrix in the uncondensed theory, $S^{\C_{\nu_2}}$ is symmetric.
        
        \underline{Step 2}: Define the permutation action $\widecheck{\rho}_{\bf g}^{(2)}$ as follows:
        \begin{align}
            \begin{cases}
                \widecheck{\rho}_{\bf g}^{(2)}((a,x)) = (\widecheck{\rho}^{(1)}_{\bf g}(a),x) \text{ if } (a,x) \text{ does not split}\\
                \widecheck{\rho}_{\bf g}^{(2)}((a,x)_\pm) = (\widecheck{\rho}^{(1)}_{\bf g}(a),x)_\pm \times \psi^{z_a({\bf g})} \text{ if } (a,x) \text{ splits}
            \end{cases}.
        \end{align}
        where $z_a({\bf g})\in \{0,1\} \simeq \Z_2$ must be defined for each $a \in \C^{(1)}_\sigma$.
        It is immediate that $\widecheck{\rho}_{\bf g}^{(2)}$ lifts the permutation action of $\rho_{\bf g}$ as long as $\widecheck{\rho}_{\bf g}^{(1)}$ does as well.
        
        In order to define $z_a({\bf g})$, first calculate directly
        \begin{equation}
            \widecheck{\rho}_{\bf g}(S^{\C_{\nu_1},(\psi)}_{ab}) = U_{\bf g}(\,^{\bf g}a,\,^{\bf g}\overline{a};\psi)U^{\ast}_{\bf g}(\,^{\bf g}a,\,^{\bf g}\overline{a};1)U^{\ast}_{\bf g}(\,^{\bf g}b,\,^{\bf g}\overline{b};\psi)U_{\bf g}(\,^{\bf g}b,\,^{\bf g}\overline{b};1)S^{\C_{\nu_1},(\psi))}_{\,^{\bf g}a,\,^{\bf g}b}= S^{\C_{\nu_1},(\psi))}_{a,b}
            \label{eqn:rhoOnPuncturedS}
        \end{equation}
        In our particular gauge, the punctured $S$-matrix is symmetric in $a$ and $b$. Hence we can write
        \begin{equation}
            \widecheck{\rho}_{\bf g}(S^{\C_{\nu_1},(\psi)}_{ab}) =  \widecheck{\rho}_{\bf g}(S^{\C_{\nu_1},(\psi)}_{ba}) = U_{\bf g}(\,^{\bf g}b,\,^{\bf g}\overline{b};\psi)U^{\ast}_{\bf g}(\,^{\bf g}b,\,^{\bf g}\overline{b};1)U^{\ast}_{\bf g}(\,^{\bf g}a,\,^{\bf g}\overline{a};\psi)U_{\bf g}(\,^{\bf g}a,\,^{\bf g}\overline{a};1)S^{\C_{\nu_1},(\psi)}_{\,^{\bf g}b,\,^{\bf g}a} 
            \label{eqn:rhoOnPuncturedS2}
        \end{equation}
        But $S^{\C_{\nu_1},(\psi)}_{\,^{\bf g}b,\,^{\bf g}a}=S^{\C_{\nu_1},(\psi)}_{\,^{\bf g}a,\,^{\bf g}b}$, so Eqs.~\ref{eqn:rhoOnPuncturedS},~\ref{eqn:rhoOnPuncturedS2} combine to
        \begin{equation}
        \left[U_{\bf g}(\,^{\bf g}b,\,^{\bf g}\overline{b};\psi)U^{\ast}_{\bf g}(\,^{\bf g}b,\,^{\bf g}\overline{b};1)\right]^2 = \left[U_{\bf g}(\,^{\bf g}a,\,^{\bf g}\overline{a};\psi)U^{\ast}_{\bf g}(\,^{\bf g}a,\,^{\bf g}\overline{a};1)\right]^2
        \end{equation}
        Equivalently, if we fix a reference $\sigma$-type vortex $r_\sigma$, then we must have
        \begin{equation}
            z_a({\bf g}) = \frac{U_{\bf g}(\,^{\bf g}a,\,^{\bf g}\overline{a};1)U^{\ast}_{\bf g}(\,^{\bf g}a,\overline{\,^{\bf g}a};\psi)}{U_{\bf g}(\,^{\bf g}r_\sigma,\,^{\bf g}\overline{r}_\sigma;1)U^{\ast}_{\bf g}(\,^{\bf g}r_\sigma,\,^{\bf g}\overline{r}_\sigma;\psi)} \in \Z_2
        \end{equation}
        where we are slightly abusing notation; the above defines $z_a({\bf g}) \in \{1,-1\} \simeq \Z_2$ instead of $\{0,1\} \simeq \Z_2$. We emphasize that the entire analysis above requires $a$ be a $\sigma$-type vortex. 
        
        The above proof actually does not apply if $S_{ab}^{\C_{\nu_1},(\psi)} = 0$. As in Sec.~\ref{subsec:kerR}, we can break the $S-$matrix of $(\C_{\nu_2})_v$ into $k$ blocks and apply the above argument to each block separately; we have a separate choice of $r_\sigma$ for each block.
        
        We choose $r_\sigma$ to be ${\bf g}$-independent. As we will see, such a choice will lead to a permutation $o^{(2)}({\bf g,h})$ which is exactly the identity; instead choosing a ${\bf g}$-dependent $r_\sigma$ in each block will amount to modifying the permutation action of $o^{(2)}({\bf g,h})$ by a $\ker r$-valued coboundary.
        
        \underline{Step 3:} It is immediately obvious that $\widecheck{\rho}_{\bf g}^{(2)}$ preserves the $T$-matrix provided $\widecheck{\rho}_{\bf g}^{(1)}$ does. Also, given the fact that $\widecheck{\rho}_{\bf g}^{(1)}$ preserves the $S$-matrix of $\C_{\nu_1}$, Eqs.~\ref{eqn:condensationSmatrixNoSplit},\ref{eqn:condensationSmatrixOneSplit} immediately imply that $\widecheck{\rho}_{\bf g}^{(2)}$ preserves $S_{(a,x),(b,y)}^{\C_{\nu_2}}$ when at most one of $(a,x)$ and $(b,y)$ split. The case where both split requires some calculation.
        
        From Eq.~\ref{eqn:rhoOnPuncturedS}, we see that $\widecheck{\rho}_{\bf g}^{(1)}(S^{\C_{\nu_1},(\psi)}_{a,b})=z_a({\bf g})z_b({\bf g})S^{\C_{\nu_1},(\psi)}_{\,^{\bf g}a,\,^{\bf g}b}=S^{\C_{\nu_1},(\psi)}_{a,b}$.  Hence
        \begin{equation}
        \widecheck{\rho}^{(2)}_{\bf g}(S^{\C_{\nu_2}}_{(a,x)_\pm,(b,y)_\pm}) =\frac{e^{-2\pi i \delta\nu/8} }{\sqrt{2}} \times z_a({\bf g})z_b({\bf g}) S_{\,^{\bf g}a \,^{\bf g}b}^{\C_{\nu_1},(\psi)} \times \begin{pmatrix}
                1 & -1 \\
                -1 & 1
        \end{pmatrix}
        \end{equation}
        
        On the other hand, $\widecheck{\rho}^{(2)}_{\bf g}((a,\sigma)_\pm)=(a,\sigma)_{\pm z_a({\bf g})}$. Hence,
        \begin{equation}
            S^{\C_{\nu_2}}_{\widecheck{\rho}_{\bf g}^{(2)}((a,x)_\pm)\widecheck{\rho}_{\bf g}^{(2)}((b,y)_\pm)} = S^{\C_{\nu_2}}_{(\,^{\bf g}a,x)_{\pm z_a({\bf g})}(\,^{\bf g}b,y)_{\pm z_b({\bf g})}} = \frac{e^{-2\pi i \delta\nu/8} }{\sqrt{2}} S_{\,^{\bf g}a \,^{\bf g}b}^{\C_{\nu_1},(\psi)} \times z_a({\bf g})z_b({\bf g})\begin{pmatrix}
                1 & -1 \\
                -1 & 1
        \end{pmatrix}= \widecheck{\rho}^{(2)}_{\bf g}(S^{\C_{\nu_2}}_{(a,x)_\pm,(b,y)_\pm})
        \end{equation}
        where the factor of $z_a({\bf g})$ is interpreted as permuting the rows of the matrix if it is $-1$ and the factor of $s_b({\bf g})$ should be interpreted as permuting the columns if it is $-1$. Hence $\widecheck{\rho}_{\bf g}^{(2)}$ indeed preserves the $S$-matrix.
        
        \underline{Step 4:} By direct computation,
        \begin{align}
            \begin{cases}
              \left(\widecheck{\rho}_{\bf gh}^{(2)}\right)^{-1}\widecheck{\rho}_{\bf g}^{(2)} \widecheck{\rho}_{\bf h}^{(2)} ((a,x)) &= (o_2^{(1)}({\bf g,h})(a), x) \text{ if }(a,x) \text{ does not split}\\
              \left(\widecheck{\rho}_{\bf gh}^{(2)}\right)^{-1}\widecheck{\rho}_{\bf g}^{(2)} \widecheck{\rho}_{\bf h}^{(2)} ((a,x)_\pm) &= (a, x)_\pm \times \psi^{z_a({\bf g})+z_a({\bf h})-z_a({\bf gh})} \text{ if }(a,x) \text{ splits}\\
            \end{cases}
        \end{align}
        If $\delta \nu$ is even, then no particles split, and $(o_2^{(1)}({\bf g,h})(a),x) = \psi \times (a,x)$ if and only if $o_2^{(1)}({\bf g,h})(a) = \psi \times a$. Hence $o_2^{(1)}$ and $o_2^{(2)}$, as permutations, are identical on the cochain level.
        
        If $\delta \nu$ is odd, then $(o_2^{(1)}({\bf g,h})(a),x) \sim (a,x)$ for all $a \in \C_{\nu_1}$. In particular, if $o_2^{(1)}$ acts nontrivially on $a$, then $a \in (\C_{\nu_1})_v$, and we saw that in such a case $(a,\sigma) \sim (a \times \psi, \sigma)$. Hence the only permutation action comes from split particles, and by inspection, modifying the reference vortex $r_\sigma$ in a ${\bf g}$-dependent way changes $o^{(2)}({\bf g,h})$ by a $\ker r$-valued coboundary. We can directly calculate from Eq.~\ref{eqn:o2Def} that
        \begin{align}
            o^{(2)}({\bf g,h})((a_\sigma,\sigma)_\pm) &= \frac{U^{\ast}_{\bf g}(\,^{\bf gh}a_{\sigma},\,^{\bf gh}\overline{a_\sigma};1)U_{\bf g}(\,^{\bf gh}a_{\sigma},\,^{\bf gh}\overline{a_\sigma};\psi)}{U^{\ast}_{\bf g}(\,^{\bf gh}r_\sigma,\,^{\bf gh}\overline{r}_\sigma;1)U_{\bf g}(\,^{\bf gh}r_\sigma,\,^{\bf gh}\overline{r}_\sigma;\psi)}
            \times \nonumber\\
            &\hspace{1cm}\times \frac{U^{\ast}_{\bf h}(\,^{\bf h}a_{\sigma},\,^{\bf h}\overline{a_\sigma};1)U_{\bf h}(\,^{\bf h}a_{\sigma},\,^{\bf g}\overline{a_\sigma};\psi)}{U^{\ast}_{\bf h}(\,^{\bf h}r_\sigma,\,^{\bf h}\overline{r}_\sigma;1)U_{\bf h}(\,^{\bf h}r_\sigma,\,^{\bf h}\overline{r}_\sigma;\psi)} \times \frac{U_{\bf gh}(\,^{\bf gh}a_{\sigma},\,^{\bf gh}\overline{a_\sigma};1)U^{\ast}_{\bf gh}(\,^{\bf gh}a_{\sigma},\,^{\bf gh}\overline{a_\sigma};\psi)}{U_{\bf gh}(\,^{\bf gh}r_\sigma,\,^{\bf gh}\overline{r}_\sigma;1)U^{\ast}_{\bf gh}(\,^{\bf gh}r_\sigma,\,^{\bf gh}\overline{r}_\sigma;\psi)}\\
            &= \frac{\kappa_{\bf g,h}(\,^{\bf gh}a_\sigma,\,^{\bf gh}\overline{a_{\sigma}};1)\kappa^{\ast}_{\bf g,h}(\,^{\bf gh}a_\sigma,\,^{\bf gh}\overline{a_{\sigma}};\psi)}{\kappa_{\bf g,h}(\,^{\bf gh}r_\sigma,\,^{\bf gh}\overline{r_{\sigma}};1)\kappa^{\ast}_{\bf g,h}(\,^{\bf gh}r_\sigma,\,^{\bf gh}\overline{r_{\sigma}};\psi)}\\
            &= +1
        \end{align}
        where the last line comes from decomposing $\kappa$ as a product of anyon-dependent factors $\beta$. Hence $\widecheck{\rho}_{\bf g}^{(2)}$ is a group homomorphism $G_b \rightarrow P(\C_{\nu_2})$.
        
        \section{Doubled \texorpdfstring{$\mathrm{SU}(2)_6$}{SU26}}
        \label{app:doubledSU26}
        
        We write down the UMTC data for $\C = \mathrm{SU}(2)_6 \times \mathrm{SU}(2)_6 \times \mathrm{Ising}_{-9/2}/\{\psi \psi \sim 1\}$. Here $\mathrm{Ising}_{-9/2}$ is the minimal modular extension of $\{1,\psi\}$ with central charge $c_-=-9/2$. The quotient means that we condense pairs of preferred fermions in these three spin modular theories, i.e. we condense $(6,6,0)$, $(0,6,\psi)$, and $(6,0,\psi)$, where the first two labels label particles in the two copies of $\mathrm{SU}(2)_6$ and the third labels particles in the Ising theory. 
        
        $\C$ contains 14 particles. Labeling the particles of $\mathrm{SU}(2)_6$ by integers from $0$ to $6$ in the usual way, and labeling elements of Ising as $\{0,\sigma,\psi\}$, the deconfined particles of the theory, their topological twists, and quantum dimensions are given in Table~\ref{tab:doubledSU26}.
        
        The particle $(6,0,0)$ is the preferred fermion of this spin modular theory. All labels are redundant under fusion with $(6,6,0)$, $(0,6,\psi)$, and $(6,0,\psi)$ in the product theory before condensation. We have
        \begin{align}
            \C_v &= \{(1,3,\sigma)_\pm,(3,1,\sigma)_\pm\}\\
            \C_\sigma &= \{(1,1,\sigma),(3,3,\sigma)\}
        \end{align}
        and all other particles are in $\C_0$. The total quantum dimension is $\mathcal{D} = 4(2+\sqrt{2})$.
        
        With $d=1+\sqrt{2}$ and the quasi-particles ordered as in Table~\ref{tab:doubledSU26}, the modular data of $\C$ is:
        \begin{align}
            S &= \frac{1}{\mathcal{D}}
            \begin{pmatrix}
                    1 & d & d & d & d & d^2 & d^2 & 1 & \sqrt{2}d &\sqrt{2}d &\sqrt{2}d &\sqrt{2}d & 2d & 2d\\
                    d & -1 & d^2 & -1 & d^2 & -d & -d & d& \sqrt{2}d &\sqrt{2}d &-\sqrt{2}d &-\sqrt{2}d & 2d & -2d\\
                    d & d^2 & -1 & d^2 & -1 & -d & -d & d& -\sqrt{2}d &-\sqrt{2}d &\sqrt{2}d &\sqrt{2}d & 2d & -2d\\
                    d & -1 & d^2 & -1 & d^2 & -d & -d & d& -\sqrt{2}d &-\sqrt{2}d &\sqrt{2}d &\sqrt{2}d & -2d & 2d\\
                    d & d^2 & -1 & d^2 & -1 & -d & -d & d& \sqrt{2}d &\sqrt{2}d &-\sqrt{2}d &-\sqrt{2}d & -2d & 2d\\
                    d^2 & -d & -d & -d & -d & 1 & 1 & d^2& -\sqrt{2}d &-\sqrt{2}d &-\sqrt{2}d &-\sqrt{2}d & 2d & 2d\\
                    d^2 & -d & -d & -d & -d & 1 & 1 & d^2& \sqrt{2}d &\sqrt{2}d &\sqrt{2}d &\sqrt{2}d & -2d & -2d\\
                    1 & d & d & d & d & d^2 & d^2 & 1& -\sqrt{2}d &-\sqrt{2}d &-\sqrt{2}d &-\sqrt{2}d & -2d & -2d\\
                    \sqrt{2}d & \sqrt{2}d & -\sqrt{2}d & -\sqrt{2}d & \sqrt{2}d & -\sqrt{2}d & \sqrt{2}d & -\sqrt{2}d  & 2\sqrt{2}d & -2\sqrt{2}d & 0 & 0 & 0 & 0\\
                    \sqrt{2}d & \sqrt{2}d & -\sqrt{2}d & -\sqrt{2}d &\sqrt{2}d & -\sqrt{2}d & \sqrt{2}d & -\sqrt{2}d & -2\sqrt{2}d & 2\sqrt{2}d & 0 & 0 & 0 & 0\\
                    \sqrt{2}d & -\sqrt{2}d & \sqrt{2}d & \sqrt{2}d & -\sqrt{2}d & -\sqrt{2}d & \sqrt{2}d & -\sqrt{2}d & 0 & 0 & 2\sqrt{2}d & -2\sqrt{2}d & 0 & 0 \\
                    \sqrt{2}d & - \sqrt{2}d & \sqrt{2}d & \sqrt{2}d & -\sqrt{2}d & -\sqrt{2}d & \sqrt{2}d & -\sqrt{2}d & 0 & 0 & -2\sqrt{2}d & 2\sqrt{2}d & 0 & 0 \\
                    2d & 2d & 2d & -2d & -2d & 2d & -2d & -2d & 0 & 0 & 0 & 0 & 0 & 0\\
                    2d & -2d & -2d & 2d & 2d & 2d & -2d & -2d& 0 & 0 & 0 & 0 & 0 & 0
            \end{pmatrix}\\
            T &= \mathrm{diag}(1,i,i,-i,-i,-1,1,-1,1,1,1,1,e^{5\pi i/4},e^{3\pi i/4})
        \end{align}
        
        There are two permutation actions on the anyons which lift the action of ${\bf T}$ on $\mathcal{C}$, complex conjugate the modular data, and square to the identity. Listed as a 14x14 matrix acting on anyon labels, these permutations are
        \begin{align}
        P_1 &=
            \begin{pmatrix}
                    1 & 0 & 0 & 0 & 0 & 0 & 0 & 0 & 0 & 0 & 0 & 0 & 0 & 0\\
                    0 & 0 & 0 & 1 & 0 & 0 & 0 & 0 & 0 & 0 & 0 & 0 & 0 & 0\\
                    0 & 0 & 0 & 0 & 1 & 0 & 0 & 0 & 0 & 0 & 0 & 0 & 0 & 0\\
                    0 & 1 & 0 & 0 & 0 & 0 & 0 & 0 & 0 & 0 & 0 & 0 & 0 & 0\\
                    0 & 0 & 1 & 0 & 0 & 0 & 0 & 0 & 0 & 0 & 0 & 0 & 0 & 0\\
                    0 & 0 & 0 & 0 & 0 & 1 & 0 & 0 & 0 & 0 & 0 & 0 & 0 & 0\\
                    0 & 0 & 0 & 0 & 0 & 0 & 1 & 0 & 0 & 0 & 0 & 0 & 0 & 0\\
                    0 & 0 & 0 & 0 & 0 & 0 & 0 & 1 & 0 & 0 & 0 & 0 & 0 & 0\\
                    0 & 0 & 0 & 0 & 0 & 0 & 0 & 0 & 0 & 0 & 1 & 0 & 0 & 0\\
                    0 & 0 & 0 & 0 & 0 & 0 & 0 & 0 & 0 & 0 & 0 & 1 & 0 & 0\\
                    0 & 0 & 0 & 0 & 0 & 0 & 0 & 0 & 1 & 0 & 0 & 0 & 0 & 0\\
                    0 & 0 & 0 & 0 & 0 & 0 & 0 & 0 & 0 & 1 & 0 & 0 & 0 & 0\\
                    0 & 0 & 0 & 0 & 0 & 0 & 0 & 0 & 0 & 0 & 0 & 0 & 0 & 1\\
                    0 & 0 & 0 & 0 & 0 & 0 & 0 & 0 & 0 & 0 & 0 & 0 & 1 & 0
            \end{pmatrix},\\
            P_2 &=
            \begin{pmatrix}
                    1 & 0 & 0 & 0 & 0 & 0 & 0 & 0 & 0 & 0 & 0 & 0 & 0 & 0\\
                    0 & 0 & 0 & 1 & 0 & 0 & 0 & 0 & 0 & 0 & 0 & 0 & 0 & 0\\
                    0 & 0 & 0 & 0 & 1 & 0 & 0 & 0 & 0 & 0 & 0 & 0 & 0 & 0\\
                    0 & 1 & 0 & 0 & 0 & 0 & 0 & 0 & 0 & 0 & 0 & 0 & 0 & 0\\
                    0 & 0 & 1 & 0 & 0 & 0 & 0 & 0 & 0 & 0 & 0 & 0 & 0 & 0\\
                    0 & 0 & 0 & 0 & 0 & 1 & 0 & 0 & 0 & 0 & 0 & 0 & 0 & 0\\
                    0 & 0 & 0 & 0 & 0 & 0 & 1 & 0 & 0 & 0 & 0 & 0 & 0 & 0\\
                    0 & 0 & 0 & 0 & 0 & 0 & 0 & 1 & 0 & 0 & 0 & 0 & 0 & 0\\
                    0 & 0 & 0 & 0 & 0 & 0 & 0 & 0 & 0 & 0 & 0 & 1 & 0 & 0\\
                    0 & 0 & 0 & 0 & 0 & 0 & 0 & 0 & 0 & 0 & 1 & 0 & 0 & 0\\
                    0 & 0 & 0 & 0 & 0 & 0 & 0 & 0 & 0 & 1 & 0 & 0 & 0 & 0\\
                    0 & 0 & 0 & 0 & 0 & 0 & 0 & 0 & 1 & 0 & 0 & 0 & 0 & 0\\
                    0 & 0 & 0 & 0 & 0 & 0 & 0 & 0 & 0 & 0 & 0 & 0 & 0 & 1\\
                    0 & 0 & 0 & 0 & 0 & 0 & 0 & 0 & 0 & 0 & 0 & 0 & 1 & 0
            \end{pmatrix}.
        \end{align}
        Two other anyon permutations lift the action on $\mathcal{C}$, complex conjugate the modular data, but yield a $\Z_4$ action; they are
        \begin{align}
            P_3 &=
            \begin{pmatrix}
                    1 & 0 & 0 & 0 & 0 & 0 & 0 & 0 & 0 & 0 & 0 & 0 & 0 & 0\\
                    0 & 0 & 0 & 1 & 0 & 0 & 0 & 0 & 0 & 0 & 0 & 0 & 0 & 0\\
                    0 & 0 & 0 & 0 & 1 & 0 & 0 & 0 & 0 & 0 & 0 & 0 & 0 & 0\\
                    0 & 1 & 0 & 0 & 0 & 0 & 0 & 0 & 0 & 0 & 0 & 0 & 0 & 0\\
                    0 & 0 & 1 & 0 & 0 & 0 & 0 & 0 & 0 & 0 & 0 & 0 & 0 & 0\\
                    0 & 0 & 0 & 0 & 0 & 1 & 0 & 0 & 0 & 0 & 0 & 0 & 0 & 0\\
                    0 & 0 & 0 & 0 & 0 & 0 & 1 & 0 & 0 & 0 & 0 & 0 & 0 & 0\\
                    0 & 0 & 0 & 0 & 0 & 0 & 0 & 1 & 0 & 0 & 0 & 0 & 0 & 0\\
                    0 & 0 & 0 & 0 & 0 & 0 & 0 & 0 & 0 & 0 & 0 & 1 & 0 & 0\\
                    0 & 0 & 0 & 0 & 0 & 0 & 0 & 0 & 0 & 0 & 1 & 0 & 0 & 0\\
                    0 & 0 & 0 & 0 & 0 & 0 & 0 & 0 & 1 & 0 & 0 & 0 & 0 & 0\\
                    0 & 0 & 0 & 0 & 0 & 0 & 0 & 0 & 0 & 1 & 0 & 0 & 0 & 0\\
                    0 & 0 & 0 & 0 & 0 & 0 & 0 & 0 & 0 & 0 & 0 & 0 & 0 & 1\\
                    0 & 0 & 0 & 0 & 0 & 0 & 0 & 0 & 0 & 0 & 0 & 0 & 1 & 0
            \end{pmatrix},\\
            P_4 &=
            \begin{pmatrix}
                    1 & 0 & 0 & 0 & 0 & 0 & 0 & 0 & 0 & 0 & 0 & 0 & 0 & 0\\
                    0 & 0 & 0 & 1 & 0 & 0 & 0 & 0 & 0 & 0 & 0 & 0 & 0 & 0\\
                    0 & 0 & 0 & 0 & 1 & 0 & 0 & 0 & 0 & 0 & 0 & 0 & 0 & 0\\
                    0 & 1 & 0 & 0 & 0 & 0 & 0 & 0 & 0 & 0 & 0 & 0 & 0 & 0\\
                    0 & 0 & 1 & 0 & 0 & 0 & 0 & 0 & 0 & 0 & 0 & 0 & 0 & 0\\
                    0 & 0 & 0 & 0 & 0 & 1 & 0 & 0 & 0 & 0 & 0 & 0 & 0 & 0\\
                    0 & 0 & 0 & 0 & 0 & 0 & 1 & 0 & 0 & 0 & 0 & 0 & 0 & 0\\
                    0 & 0 & 0 & 0 & 0 & 0 & 0 & 1 & 0 & 0 & 0 & 0 & 0 & 0\\
                    0 & 0 & 0 & 0 & 0 & 0 & 0 & 0 & 0 & 0 & 1 & 0 & 0 & 0\\
                    0 & 0 & 0 & 0 & 0 & 0 & 0 & 0 & 0 & 0 & 0 & 1 & 0 & 0\\
                    0 & 0 & 0 & 0 & 0 & 0 & 0 & 0 & 0 & 1 & 0 & 0 & 0 & 0\\
                    0 & 0 & 0 & 0 & 0 & 0 & 0 & 0 & 1 & 0 & 0 & 0 & 0 & 0\\
                    0 & 0 & 0 & 0 & 0 & 0 & 0 & 0 & 0 & 0 & 0 & 0 & 0 & 1\\
                    0 & 0 & 0 & 0 & 0 & 0 & 0 & 0 & 0 & 0 & 0 & 0 & 1 & 0
            \end{pmatrix}.
        \end{align}
            
    \section{Change of $[o_3]$ under change in lift}
    \label{app:o3Changes}
    
    Fix a minimal modular extension $\C$, and assume $\ker r = \Z_2$. Given symmetry fractionalization data on $\mathcal{C}$, suppose we have a lift $\widecheck{\rho}_{\bf g}$ for which $o_2=0$. Then if $\pi : G \rightarrow \Z_2 \simeq \{0,1\}$ is a group homomorphism,  all other valid lifts are obtained (up to locality-respecting natural isomorphism) by writing
    \begin{align}
        \widecheck{\rho}_{\bf g}^V &= \alpha_\psi^{\pi({\bf g})} \circ \widecheck{\rho}_{\bf g}
    \end{align}
    In this appendix, we reproduce~\cite{aasen21ferm} the calculation~\footnote{We thank Parsa Bonderson for sharing additional unpublished notes.} for how  $[o_3^{(\nu,\widecheck{\rho}^V)}]$ is related to $[o_3^{(\nu,\widecheck{\rho})}]$, which is a special case of Eq.~\ref{eqn:o3ChangeModExt}. We will also show that $o_3$ is invariant under various gauge choices.
    
    \subsection{Change of $o_3$}
    
    Recall our setup from Sec.~\ref{sec:H3}; we choose a gauge where $r(\widecheck{\rho}_{\bf g})=\rho_{\bf g}$ on the nose, and we can choose a gauge where the phases $\beta_a$ on $\mathcal{C}$ are just the restriction of $\widecheck{\beta}_a$. We are given symmetry fractionalization data $\omega_a({\bf g,h}) \in C^2(G_b,K(\mathcal{C}))$ on $\mathcal{C}$, which is guaranteed in this gauge to lift to some $\widecheck{\omega}_a \in C^2(G_b,K(\C))$.
    
    Now let us compute how this data changes for $\widecheck{\rho}_{\bf g}^V$. 
    
    First, we need the following facts about $\alpha_\psi$. Let $\varphi$ be an arbitrary topological (anti-)autoequivalence. Then from Eq.~\ref{eqn:alphaPsiUSymbols}, one can compute that
    \begin{align}
        \alpha_\psi \circ \varphi &= \Upsilon_\varphi \circ \varphi \circ \alpha_\psi\\
        \alpha_\psi^2 = \xi
    \end{align}
    where $\Upsilon_\varphi$ and $\xi$ are natural isomorphisms defined by their actions on anyons
    \begin{align}
        \gamma_{\Upsilon_\varphi,a_x} &= U_\varphi(\psi^x,a_x,a_x\psi^x)\\
        \gamma_{\xi,a_x} &= i^x.
    \end{align}
    Here we are denoting $a_x \in \C_x$ for $x \in \{0,1\}$. We also note that when treating $\pi$ as an \textit{integer}-valued function, it must obey
    \begin{equation}
        \pi({\bf g})+\pi({\bf h})=\pi({\bf gh})+2\pi({\bf g})\pi({\bf h}).
    \end{equation}
    
    Next, we need to compute how $\widecheck{\kappa}_{\bf g,h}$ changes.
    
    \begin{align}
    \widecheck{\kappa}^V_{\bf g,h} &= \alpha_\psi^{\pi({\bf gh})}\widecheck{\rho}_{\bf gh}\widecheck{\rho}_{\bf h}^{-1}\alpha_\psi^{-\pi({\bf h})}\widecheck{\rho}_{\bf g}^{-1}\alpha_\psi^{-\pi({\bf g})}\\
    &= \alpha_\psi^{\pi({\bf gh})}\widecheck{\rho}_{\bf gh}\widecheck{\rho}_{\bf h}^{-1}\widecheck{\rho}_{\bf g}^{-1}\alpha_\psi^{-\pi({\bf h})}\Upsilon_{\bf g}^{\pi({\bf h})}\alpha_\psi^{-\pi({\bf g})}\\
    &= \alpha_\psi^{\pi({\bf gh})} \widecheck{\kappa}_{\bf g,h} \alpha_\psi^{-\pi({\bf h})}\alpha_\psi^{-\pi({\bf g})}\Upsilon_{\bf g}^{\pi({\bf h})}\Upsilon_{\Upsilon_{\bf g}^{-\pi({\bf h})}}^{-\pi({\bf g})}\\
    &= \Upsilon_{\widecheck{\kappa}_{\bf g,h}}^{\pi({\bf gh})}\widecheck{\kappa}_{\bf g,h}\alpha_\psi^{\pi({\bf gh})} \xi^{-\pi(\bf g)\pi(\bf h)}\alpha_\psi^{-\pi({\bf gh})}\Upsilon_{\bf g}^{\pi({\bf h})}\Upsilon_{\Upsilon_{\bf g}^{-\pi({\bf h})}}^{-\pi({\bf g})}\\
    &= \Upsilon_{\widecheck{\kappa}_{\bf g,h}}^{\pi({\bf gh})} \xi^{-\pi(\bf g)\pi(\bf h)}\Upsilon_{\bf g}^{\pi({\bf h})}\Upsilon_{\Upsilon_{\bf g}^{-\pi({\bf h})}}^{-\pi({\bf g})}\widecheck{\kappa}_{\bf g,h}
\end{align}
where we have repeatedly commuted $\alpha_\psi$ through other maps, used the fact that
\begin{equation}
    \alpha_\psi^{\pi({\bf g})}\alpha_\psi^{\pi({\bf h})} = \alpha_\psi^{\pi({\bf gh})+2\pi({\bf g})\pi({\bf h})},
\end{equation}
and used the fact that $\alpha_\psi^2 = \xi$. Using the expressions for $\Upsilon_\varphi$, this change in $\widecheck{\kappa}_{\bf g,h}$ can be used to calculate the change in $\widecheck{\beta}_a$:
\begin{equation}
    \widecheck{\beta}_{a_x}' = (-i)^{x\pi({\bf g})\pi({\bf h})}\widecheck{\kappa}_{\bf g,h}(\psi^x,a_x,\psi^x a_x)^{\pi({\bf gh})}\widecheck{U}_{\bf g}(\psi^x,a_x,\psi^x a_x)^{\pi({\bf h})} \left[\frac{\widecheck{U}_{\bf g}(\psi^x,a_x,\psi^x a_x)}{\widecheck{U}_{\bf g}(\psi^x,\psi^x a_x,a_x)}\right]^{\pi({\bf g})\pi({\bf h})} \widecheck{\beta}_{a_x}
    \label{eqn:betaChange}
\end{equation}

    Note in particular that if $a \in \C_0$, then $\widecheck{\beta}_a$ is completely unchanged. Furthermore, $\alpha_\psi$ is strictly the identity when restricted to $\mathcal{C}_0$. Hence there is no gauge transformation on any of the data of $\mathcal{C}_0$, and we can choose the lift
    \begin{equation}
        \widecheck{\omega}_a^V = \widecheck{\omega}_a.
    \end{equation}
    We can now calculate the change in $o_3$ directly. The calculation proceeds differently for $\sigma$-type and $v$-type vortices:
    
    \subsubsection{$\sigma$-type vortices}
    
    First note that $d\widecheck{\omega}_a^V = d\widecheck{\omega}_a$ since the permutation action of the symmetry on $\sigma$-type vortices is unaffected. All change in $o_3$ must therefore come from a change in $\widecheck{\Omega}_a$.
    
    If there are any $\sigma$-type vortices, then $\widecheck{\Upsilon}_\psi$ must violate locality. Hence, since $o_2=0$, $\widecheck{\beta}_\psi = \omega_2$. Simplifying Eq.~\ref{eqn:betaChange} with the fact that $a\times \psi = a$ for $\sigma$-type vortices, we can calculate directly that
    \begin{align}
    \frac{\widecheck{\Omega}_a^V}{\widecheck{\Omega}_a} &= (-1)^{s_1 \stdcup \pi \stdcup \pi} \times \frac{\widecheck{\beta}_\psi({\bf h,k})^{\sigma({\bf g})\pi({\bf hk})} \widecheck{\beta}_\psi({\bf g,hk})^{\pi({\bf ghk})}}{\widecheck{\beta}_\psi({\bf g,h})^{\pi({\bf gh})}\widecheck{\beta}_\psi({\bf gh,k})^{\pi({\bf ghk})}}\times \frac{\widecheck{U}_{\bf h}(\psi,\,^{\overline{\bf g}}a, \,^{\overline{\bf g}}a)^{\sigma({\bf g})\pi({\bf k})} \widecheck{U}_{\bf g}(\psi,a, a)^{\pi({\bf hk})}}{\widecheck{U}_{\bf g}(\psi,a, a)^{\pi({\bf h})}\widecheck{U}_{\bf gh}(\psi,a, a)^{\pi({\bf k})}}.
\end{align}
    The factors of $\widecheck{\beta}$ come from decomposing $\widecheck{\kappa}$. Manipulating the various factors of $\pi$, we obtain
\begin{align}
    \frac{\widecheck{\Omega}_a^V}{\widecheck{\Omega}_a} &= (-1)^{s_1 \stdcup \pi \stdcup \pi} \times \frac{\omega_2({\bf h,k})^{\left[\pi({\bf hk})-\pi({\bf ghk})\right]}}{\omega_2({\bf g,h})^{\pi({\bf gh})-\pi({\bf ghk})}}\times \frac{\widecheck{U}_{\bf h}(\psi,\,^{\overline{\bf g}}a, \,^{\overline{\bf g}}a)^{\sigma({\bf g})\pi({\bf k})} \widecheck{U}_{\bf g}(\psi,a, a)^{\pi({\bf hk})-\pi({\bf h})}}{\widecheck{U}_{\bf gh}(\psi,a, a)^{\pi({\bf k})}}\\
    &= (-1)^{s_1 \stdcup \pi \stdcup \pi} \times \frac{\omega_2({\bf h,k})^{-\pi({\bf g})}}{\omega_2({\bf g,h})^{-\pi({\bf k})}} \times \widecheck{\kappa}_{\bf g,h}(\psi, a, a)^{-\pi({\bf k})}\widecheck{U}_{\bf g}(\psi,a,a)^{-2\pi({\bf h})\pi({\bf k})}
\end{align}
The canonical gauge-fixing $F^{a\psi \psi}=+1$ enforces
But $\widecheck{U}_{\bf g}(\psi,a,a) \in \{\pm 1\}$ for $\sigma$-type vortices, so so the above expression simplifies to
\begin{equation}
    \frac{\widecheck{\Omega}_a'}{\widecheck{\Omega}_a} = (-1)^{s_1 \stdcup \pi \stdcup \pi+\pi \stdcup \tilde{\omega}_2}.
    \label{eqn:changeInOmega}
\end{equation}

    \subsubsection{$v$-type vortices}

    Notice first that because of the change in permutation action, if $a \in \C_v$, then
    \begin{align}
        \frac{d\widecheck{\omega}_a^V({\bf g,h,k})}{d\widecheck{\omega}_a({\bf g,h,k})} &= \left(\frac{\widecheck{\omega}_{(\widecheck{\rho}_{\bf g}^V)^{-1}a}({\bf h,k})}{\widecheck{\omega}_{(\widecheck{\rho}_{\bf g})^{-1}a}({\bf h,k})}\right)^{\sigma({\bf g})}\\
        &=\left(\frac{\widecheck{\omega}_{(\widecheck{\rho}_{\bf g})^{-1}a\times \psi^{\pi({\bf g})}}({\bf h,k})}{\widecheck{\omega}_{(\widecheck{\rho}_{\bf g})^{-1}a}({\bf h,k})}\right)^{\sigma({\bf g})} = \omega_\psi({\bf h,k})^{\pi({\bf g})}
    \end{align}
    where we have used the fact that $\widecheck{\omega}_a$ obeys the fusion rules.

    The calculation for the change in $\widecheck{\Omega}_a$ simplifies dramatically with a convenient gauge-fixing. One can check that there always exists a gauge transformation $\gamma_a({\bf g})$ which is non-trivial only on $\C_1$ which fixes
    \begin{equation}
        \widecheck{U}_{\bf g}(\psi, a,a\times \psi) = +1
    \end{equation}
    for $v$-type vortices. This gauge-fixing leads to all of the factors of $\widecheck{U}$ and $\widecheck{\kappa}$ dropping out from Eq.~\ref{eqn:betaChange}. Now we can calculate carefully:
    \begin{align}
        \widecheck{\Omega}_a^V &= \frac{\widecheck{\beta}^V_{(\widecheck{\rho}_{\bf g}^V)^{-1}a}({\bf h,k})^{\sigma({\bf g})}\widecheck{\beta}^V_a({\bf g,hk})}{\widecheck{\beta}^V_a({\bf g,h})\widecheck{\beta}^V_a({\bf gh,k})}\\
        &= (-1)^{s_1 \stdcup \pi \stdcup \pi} \frac{\widecheck{\beta}_{(\widecheck{\rho}_{\bf g})^{-1}a\times \psi^{\pi({\bf g})}}({\bf h,k})^{\sigma({\bf g})}\widecheck{\beta}_a({\bf g,hk})}{\widecheck{\beta}_a({\bf g,h})\widecheck{\beta}_a({\bf gh,k})}\\
        &= (-1)^{s_1 \stdcup \pi \stdcup \pi} \frac{\widecheck{\beta}_{(\widecheck{\rho}_{\bf g})^{-1}a\times \psi^{\pi({\bf g})}}({\bf h,k})^{\sigma({\bf g})}}{\widecheck{\beta}_{(\widecheck{\rho}_{\bf g})^{-1}a}({\bf h,k})^{\sigma({\bf g})}} \widecheck{\Omega}_a({\bf g,h,k}) = (-1)^{s_1 \stdcup \pi \stdcup \pi} \widecheck{\beta}_\psi({\bf h,k})^{\pi({\bf g})} \widecheck{\Omega}_a({\bf g,h,k}).
    \end{align}
    Combining the above results, we see that
    \begin{equation}
        \frac{\widecheck{\Omega}_a^V\overline{d\widecheck{\omega}_a^V}}{\widecheck{\Omega_a}d\widecheck{\Omega_a}} = (-1)^{s_1 \stdcup \pi \stdcup \pi}\left(\frac{\widecheck{\beta}_\psi({\bf h,k})}{\widecheck{\omega}_\psi({\bf h,k})}\right)^{\pi({\bf g})} = (-1)^{s_1 \stdcup \pi \stdcup \pi}\omega_2({\bf h,k})^{\pi({\bf g})}
    \end{equation}
    since $\widecheck{\beta}_\psi = \beta_\psi$, $\widecheck{\omega}_\psi = \omega_\psi$, and $\beta_\psi/\omega_\psi = \eta_\psi = \omega_2$.
    
    Accordingly,
    \begin{equation}
        o_3 \rightarrow o_3 + s_1 \stdcup \pi\stdcup \pi + \pi \stdcup \tilde{\omega}_2
    \end{equation}
    as expected.
    
    \subsection{Gauge invariances of \texorpdfstring{$[o_3]$}{o3}}
    
    We discuss the invariance of $[o_3]$ under various gauge transformations.
    
    \subsubsection{Invariance under locality-respecting natural isomorphisms on \texorpdfstring{$\mathcal{C}$}{C}}
    
    If we modify $\rho_{\bf g} \rightarrow \Upsilon_{\bf g} \circ \rho_{\bf g}$ with $\Upsilon_{\bf g}$ a locality-respecting natural isomorphism, then $\omega_a$ is unchanged. Suppose that $\Upsilon_{\bf g}$ is given by the anyon-dependent factors $\gamma_a({\bf g})$ with $\gamma_\psi = +1$. Then we can lift $\Upsilon_{\bf g}$ to a locality-respecting natural isomorphism on $\C$ by defining anyon-dependent factors
        \begin{equation}
            \widecheck{\gamma}_a({\bf g}) = \begin{cases}
              \gamma_a({\bf g}) & a \in \C_0\\
              +1 & a \in \C_1
            \end{cases}.
        \end{equation}
        Under this transformation, $\widecheck{\Omega}_a$ is invariant, and the condition that $\widecheck{\beta}_a$ restricts to $\beta_a$ is respected. Hence the lift $\widecheck{\omega}_a$ is also unchanged, and $o_3$ is strictly invariant.
    
    \subsubsection{Invariance under locality-respecting natural isomorphisms on \texorpdfstring{$\C$}{Ccheck}}
    
    Since we are demanding that $\widecheck{\rho}_{\bf g}$ restricts to $\rho_{\bf g}$ on the nose, such a  locality-respecting natural isomorphism must be non-trivial only on $\C_1$ (modulo a $\nu$-type gauge transformation, which we will deal with next). It is thus clear that the condition that $\widecheck{\beta}_a$ restricts to $\beta_a$ is respected, so none of the data on $\mathcal{C}$ is modified. Hence the allowed lifts $\widecheck{\omega}_a$ are unchanged. Also, $\widecheck{\Omega}_a$ is simply invariant under locality-respecting natural isomorphisms. Hence $o_3$ is gauge-invariant.
    
    \subsubsection{Invariance under \texorpdfstring{$\nu$}{nu}-type gauge transformations on \texorpdfstring{$\C$}{Ccheck}}
    
    Recall that $\nu$-type gauge transformations modify
    \begin{align}
        \widecheck{\beta}_a({\bf g,h}) &\rightarrow \widecheck{\nu}_a({\bf g,h})\widecheck{\beta}_a({\bf g,h})\\
        \widecheck{\omega}_a({\bf g,h}) & \rightarrow \widecheck{\nu}_a({\bf g,h}) \widecheck{\omega}_a({\bf g,h})
    \end{align}
    where $\widecheck{\nu}_a({\bf g,h})$ obeys the fusion rules of $\C$. Under such a transformation, in order to maintain the condition that $\widecheck{\beta}_a$ restricts to $\beta_a$ and $\widecheck{\omega}_a$ restricts to $\omega_a$, we should also perform the restricted gauge transformation $\nu_a$ on $\mathcal{C}$, which is allowed since $\nu_a$ obeys the fusion rules of $\mathcal{C}$. We can then work directly with the gauge-transformed $\widecheck{\Omega}_a$ and $\widecheck{\omega}_a$. It is straightforward to check that $\widecheck{\Omega}_a$ and $d\widecheck{\omega}_a$ transform by the same factor under this gauge transformation, so $o_3$ is strictly invariant.

        \bibliography{TI}
        
\end{document}